\newcommand{\fabio}{\textcolor{red}}
\newcommand{\diego}{\textcolor{orange}}
\let\OLDthebibliography\thebibliography
\renewcommand\thebibliography[1]{
  \OLDthebibliography{#1}
  \setlength{\parskip}{0pt}
  \setlength{\itemsep}{0pt plus 0.3ex}
}
\newcommand{\algname}{$\mathsf{SPRISS}$}
\newcommand{\algnameshort}{$\mathsf{SP}$}
\newcommand{\sakeima}{\textsc{SAKEIMA}}
\newcommand{\sakeimashort}{$\mathsf{SK}$}
\newcommand{\exactshort}{$\mathsf{E}$}
\newcommand{\kmc}{\textsc{KMC}}
\newcommand{\jellyfish}{\textsc{Jellyfish}}
\newcommand{\E}{\mathbb{E}}
\newcommand{\D}{\mathcal{D}}
\newtheorem{proposition}{Proposition}
\newtheorem{definition}{Definition}
\newtheorem{lemma}{Lemma}
\newenvironment{packed_itemize}{
\begin{itemize}
	\setlength{\itemsep}{1pt}
	\setlength{\parskip}{0pt}
	\setlength{\parsep}{0pt}
}{\end{itemize}}
\newlist{todolist}{itemize}{2}
\setlist[todolist]{label=$\square$}
\begin{document}
\title{\algname: Approximating Frequent $k$-mers by Sampling Reads, \\
 and Applications\thanks{Part of this work was supported by the MIUR, the Italian Ministry of Education, University and Research, under PRIN Project n. 20174LF3T8 AHeAD (Efficient Algorithms for HArnessing Networked Data) and the initiative ``Departments of Excellence" (Law 232/2016), and by the Univ. of Padova under project SEED 2020 RATED-X.}
}
\author{Diego Santoro\thanks{Department of Information Engineering, University of Padova, Padova (Italy).}
\footnotemark[3]
\\\texttt{diego.santoro@dei.unipd.it}
\and
Leonardo Pellegrina\footnotemark[2]
\thanks{These authors contributed equally to this work.}
\\\texttt{pellegri@dei.unipd.it}
\and
Fabio Vandin\footnotemark[2]
\thanks{Corresponding author.}
\\\texttt{fabio.vandin@unipd.it}}

\date{}

\maketitle
\thispagestyle{empty}

\begin{abstract}
The extraction of $k$-mers is a fundamental component in many complex analyses of large next-generation sequencing datasets, including reads classification in genomics and the characterization of RNA-seq datasets. The extraction of all $k$-mers and their frequencies is extremely demanding in terms of running time and memory, owing to the size of the data and to the exponential number of $k$-mers to be considered. However, in several applications, only \emph{frequent} $k$-mers, which are $k$-mers appearing in a relatively high proportion of the data, are required by the analysis. In this work we present \algname, a new efficient algorithm to approximate frequent $k$-mers and their frequencies in next-generation sequencing data. \algname\ employs a simple yet powerful reads sampling scheme, which allows to extract a representative subset of the dataset that can be used, in combination with any $k$-mer counting algorithm, to perform downstream analyses in a fraction of the time required by the analysis of the whole data, while obtaining comparable answers. Our extensive experimental evaluation demonstrates the efficiency and accuracy of \algname\ in approximating frequent $k$-mers, and shows that it can be used in various scenarios, such as the comparison of metagenomic datasets and the identification of discriminative $k$-mers, to extract insights in a fraction of the time required by the analysis of the whole dataset.
\end{abstract}

\vspace{1cm}
\textbf{keywords}: $k$-mer analysis; frequent $k$-mers; read sampling; pseudodimension;

%
%

\newpage

\setcounter{page}{1}

\section{Introduction}

The study of substrings of length $k$, or $k$-mers, is a fundamental task in the analysis of large next-generation sequencing datasets. The extraction of $k$-mers, and of the frequencies with which they appear in a dataset of reads, is a crucial step in several applications, including the comparison of datasets and reads classification in metagenomics~\cite{wood2014kraken}, the characterization of variation in RNA-seq data~\cite{audoux2017kupl}, the analysis of structural changes in genomes~\cite{li2003estimating,liu2017unbiased}, RNA-seq quantification~\cite{patro2014sailfish,zhang2014rna}, fast search-by-sequence over large high-throughput sequencing repositories~\cite{solomon2016fast}, genome comparison~\cite{sims2009alignment}, and error correction for genome assembly~\cite{kelley2010quake,salmela2016accurate}. 

$k$-mers and their frequencies can be obtained with a linear scan of a dataset. However, due to the massive size of the modern datasets and the exponential growth of the $k$-mers number (with respect to $k$), the extraction of $k$-mers is an extremely computationally intensive task, both in terms of running time and memory~\cite{elworth2020petabytes}, and several algorithms have been proposed to reduce the running time and memory requirements (see Section~\ref{sec:related_work}). Nonetheless, the extraction of all $k$-mers and their frequencies from a reads dataset is still highly demanding in terms of time and memory (e.g., \kmc~3~\cite{kokot2017kmc}, one of the currently best performing tools for $k$-mer counting, requires more than $2.5$~hours, $34$~GB of memory, and $500$~GB of space on disk on a sequence of $729$~Gbases~\cite{kokot2017kmc}, and from our experiments more than $30$ minutes, $300$~GB of memory, and $97$~GB of disk space for counting $k$-mers from \texttt{Mo17} dataset\footnote{Using $k=31$, $32$ workers, and maximum RAM of $350$ GB. See Supplemental Table~\ref{tab:dataMOB} for the size of \texttt{Mo17}.}).

While some applications, such as error correction~~\cite{kelley2010quake,salmela2016accurate} or reads classification~\cite{wood2014kraken}, require to identify \emph{all} $k$-mers, even the ones that appear only once or few times in a dataset, other analyses, such as the comparison of abundances in metagenomic datasets~\cite{benoit2016multiple,danovaro2017submarine,dickson2017carryover,sakeima} or  the discovery of $k$-mers discriminating between two datasets~\cite{ounit2015clark,liu2017unbiased}, hinge on the identification of \emph{frequent} $k$-mers, which are $k$-mers appearing with a (relatively) high frequency in a dataset. For the latter analyses, tools capable of efficiently extracting frequent $k$-mers only would be extremely beneficial and much more efficient than tools reporting all $k$-mers (given that a large fraction of $k$-mers appear with extremely low frequency). However, the efficient identification of frequent $k$-mers and their frequencies is still relatively unexplored (see Section~\ref{sec:related_work}).

A natural approach to speed-up the identification of frequent $k$-mers is to analyze only a \emph{sample} of the data, since frequent $k$-mers appear with high probability in a sample, while unfrequent $k$-mers appear with lower probability. A major challenge in sampling approaches is how to rigorously relate the results obtained analyzing the sample and the results that would be obtained analyzing the whole dataset. Tackling such challenge requires to identify a minimum sample size which guarantees that the results on the sample well represent the results to be obtained on the whole dataset. 
An additional challenge in the use of  sampling for the identification of frequent $k$-mers is due to the fact that, for values of $k$ of interest in modern applications (e.g., $k\in[20,60]$), even the most frequent $k$-mers appear in a relatively low portion of the data (e.g., $10^{-7}\text{-}10^{-5}$). The net effect is that the application of standard sampling techniques to rigorously approximate frequent $k$-mers results in sample sizes \emph{larger} than the initial dataset.

\subsection{Our Contributions}
\label{sec:our_contribs}
In this work we study the problem of approximating frequent $k$-mers in a dataset of reads. In this regard, our contributions are:
\begin{packed_itemize}
\item We propose \algname, \underline{S}am\underline{P}ling \underline{R}eads algor\underline{I}thm to e\underline{S}timate frequent $k$-mer\underline{S}\footnote{\url{https://vec.wikipedia.org/wiki/Spriss}}. \algname\ is based on a simple yet powerful read sampling approach, which renders \algname~very flexible and suitable to be used in combination with \emph{any} $k$-mer counter. In fact, the read sampling scheme of \algname\ returns a \emph{representative} subset of a dataset of reads, which can be used to obtain representative results for down-stream analyses based on frequent $k$-mers. 
\item We prove that \algname\ provides rigorous guarantees on the quality of the approximation of the frequent $k$-mers. In this regard, our main technical contribution is the derivation of the sample size required by \algname, obtained through the study of the pseudodimension~\cite{pollard1984convergence}, a key concept from statistical learning theory, of $k$-mers in reads.
\item We show on several real datasets that \algname\ approximates frequent $k$-mers with high accuracy, while requiring a fraction of the time needed by approaches that analyze all $k$-mers in a dataset.
\item We show the benefits of using the approximation of frequent $k$-mers obtained by \algname\ in two applications: the comparison of metagenomic datasets, and the extraction of discriminative $k$-mers. In both applications \algname\ significantly speeds up the analysis, while providing the same insights obtained by the analysis of the whole data.
\end{packed_itemize}

\subsection{Related Works}
\label{sec:related_work}

The problem of exactly counting $k$-mers in datasets has been extensively studied, with several methods proposed for its solution~\cite{kurtz2008new,marccais2011fast,melsted2011efficient,rizk2013dsk,audano2014kanalyze,roy2014turtle,kokot2017kmc,pandey2017squeakr}. Such methods are typically highly demanding in terms of time and memory when analyzing large high-throughput sequencing datasets~\cite{elworth2020petabytes}. For this reason, many methods have been recently developed to compute approximations of the $k$-mers abundances to reduce the computational cost of the task (e.g,~\cite{melsted2014kmerstream,sivadasan2016kmerlight,mohamadi2017ntcard,chikhi2013informed,zhang2014these,pandey2017squeakr}). However, such methods do not provide guarantees on the accuracy of their approximations that are simultaneously valid for all (or the most frequent) $k$-mers.
In recent years other problems closely related to the task of counting $k$-mers have been studied, including
how to efficiently index~\cite{pandey2018mantis,harris2020improved,marchet2020resource,marchet2020reindeer}, represent~\cite{chikhi2014representation,dadi2018dream,almodaresi2018space,guo2019degsm,guo2019degsm,marchet2019indexing,holley2020bifrost,rahman2020representation}, query~\cite{solomon2016fast,solomon2018improved,yu2018seqothello,sun2018allsome,bradley2019ultrafast,marchet2019data}, and store~\cite{hosseini2016survey,numanagic2016comparison,hernaez2019genomic,rahman2020disk} the massive collections of sequences or of $k$-mers that are extracted from the data.

A natural approach to reduce computational demands is to analyze a small sample instead of the entire dataset. 
To this end, methods that perform a downsampling of massive datasets have been recently proposed~\cite{brown2012reference,wedemeyer2017improved,coleman2019diversified}. These methods focus on discarding reads of the datasets that are very similar to the reads already included in the sample, computing approximate similarity measures as each read is considered. Such measures (i.e., the Jaccard similarity) are designed to maximise the diversity of the content of the reads in the sample. This approach is well suited for applications where rare $k$-mers are important, but they are less relevant for analyses, of interest to this work, where the most frequent $k$-mers carry the major part of the information. Furthermore, these methods have a heuristic nature, and do not provide guarantees on the relation between the accuracy of the analysis performed on the sample w.r.t. the analysis performed on the entire dataset. 
\sakeima~\cite{sakeima} is the first sampling method that provides an approximation of the set of frequent $k$-mers (together with their estimated frequencies) with rigorous guarantees, based on counting only a subset of all occurrences of $k$-mers, chosen at random.
\sakeima\ performs a full scan of the entire dataset, in a streaming fashion, and processes each $k$-mer occurence according to the outcome of its random choices. \algname, the algorithm we present in this work, is instead the first sampling algorithm to approximate frequent $k$-mers (and their frequencies), with rigorous guarantees, by sampling \emph{reads} from the dataset. In fact, \algname\ does not require to receive in input and to scan the entire dataset, but, instead, it needs in input only a small sample of reads drawn from the dataset, sample that may be obtained, for example, at the time of the physical creation of the whole dataset.
While the sampling strategy of \sakeima\ could be analyzed using the concept of \emph{VC dimension}~\cite{vapnik1998statistical}, the reads-sampling strategy of \algname\ requires the more sophisticated concept of \emph{pseudodimension}~\cite{pollard1984convergence}, for its analysis.

In this work we consider the use of \algname\ to speed up the computation of the Bray-Curtis distance between metagenomic datatasets and the identification of discriminative $k$-mers. Computational tools for these problems have been recently proposed~\cite{benoit2016multiple,saavedra2020mining}. These tools are based on exact $k$-mer counting strategies, and the approach we propose with \algname\ could be applied to such strategies as well.

\section{Preliminaries}
Let $\Sigma$ be an alphabet of $\sigma$ symbols. A dataset $\D = \{r_1, \dots, r_n\}$ is a bag of $|\D| = n$ \emph{reads}, where, for $i \in \{1,\dots,n\}$, a read $r_i$ is a string of length $n_i$  built from $\Sigma$.
For a given integer $k$, a $k$\emph{-mer} $K$ is a string of length $k$ on $\Sigma$, that is $K \in \Sigma^k$.  
Given a $k$-mer $K$, a read $r_i$ of $\D$, and a position $j \in \{0,\dots,n_i - k\}$, we define the indicator function $\phi_{r_i,K}(j)$ to be $1$ if $K$ \emph{appears} in $r_i$ at position $j$, that is $K[h] = r_i[j+h]$ $\forall h \in \{0,\dots,k-1\}$, while $\phi_{r_i,K}(j)$ is $0$ otherwise. 
The size $t_{\D,k}$ of the multiset of $k$-mers that appear in $\D$ is $t_{\D,k} = \sum_{r_i \in \D} (n_i -k +1)$. The average size of the multiset of $k$-mers that appear in a read of $\D$ is $\ell_{\D,k} = t_{\D,k}/n$, while the maximum value of such quantity is $\ell_{\max,\D,k} = \max_{r_i \in \D}(n_i - k + 1)$. The \emph{support} $o_\D(K)$  of $k$-mer $K$ in dataset $\D$ is the number of distinct positions of $\D$ where $k$-mer $K$ appears, that is $o_\D(K) = \sum_{r_i \in \D} \sum_{j = 0}^{n_i - k} \phi_{r_i,K}(j)$. The \emph{frequency} $f_\D(K)$ of a $k$-mer $K$ in $\D$ is the fraction of all positions in $\D$ where $K$ appears, that is $f_\D(K) = o_\D(K)/t_{\D,k}$.

The task of finding \emph{frequent $k$-mers} (FKs) is defined as follows: given a dataset $\D$, a positive integer $k$, and a \emph{minimum frequency threshold} $\theta \in (0,1]$, find the set $FK(\D,k,\theta)$ of all the $k$-mers whose frequency in $\D$ is at least $\theta$, and their frequencies, that is 
$FK(\D,k,\theta) = \{(K,f_\D(K)): K \in \Sigma^k, f_\D(K) \geq \theta \}$.

The set of frequent $k$-mers can be computed by scanning the dataset and counting the number of occurrences for each $k$-mers. However, when dealing with a massive dataset $\D$, the exact computation of the set $FK(\D,k,\theta)$ requires large amount of time and memory. For this reason, one could instead focus on finding an \emph{approximation} of $FK(\D,k,\theta)$ with rigorous guarantees on its quality. In this work we consider the following approximation, introduced in \cite{sakeima}.

\begin{definition}
	\label{def:approximation}
	Given a dataset $\D$, a positive integer $k$, a frequency threshold $\theta \in (0,1]$, and an accuracy  parameter $\varepsilon  \in  (0,\theta)$, an \emph{$\varepsilon$-approximation} $\mathcal{C}= \{(K,f_K) : K \in \Sigma^k, f_K \in [0,1]\}$ of $FK(\D,k,\theta)$ is a set of pairs $(K,f_K)$ with the following properties:
	\begin{packed_itemize}
		\item $\mathcal{C}$ contains a pair $(K,f_K)$ for every $(K, f_\D(K)) \in FK(\D,k,\theta)$;
		\item $\mathcal{C}$ contains no pair $(K,f_K)$ such that $f_\D(K) < \theta - \varepsilon$;
		\item for every $(K,f_K) \in \mathcal{C}$, it holds $|f_\D(K) - f_K| \leq \varepsilon/2$. 
	\end{packed_itemize}
\end{definition}
\noindent

Intuitively, the approximation $\mathcal{C}$ contains no \emph{false negatives} (i.e. all the frequent $k$-mers  in $FK(\D,k, \theta)$ are in $C$) and no $k$-mer whose frequency in $\D$
is much smaller than $\theta$. In addition, the frequencies in $\mathcal{C}$ are good approximations of the actual frequencies in $\D$, i.e. within a small error $\varepsilon/2$.

\begin{definition}
	Given a dataset $\D$ of $n$ reads, we define a \emph{reads sample} $S$ of $\D$ as a bag of $m$ reads, sampled independently and uniformly at random, with replacement, from the bag of reads in $\D$.
\end{definition}

A natural way to compute an approximation of the set of frequent $k$-mers is by processing a \emph{sample}, i.e. a small portion of the dataset $\D$, instead of the whole dataset. While previous work~\cite{sakeima} considered samples obtained by drawing $k$-mers independently from $\D$, we consider samples obtained by drawing entire \emph{reads}. As explained  in Section~\ref{sec:our_contribs}, our approach has several advantages, including the fact that it can be combined with any efficient $k$-mer counting procedure, and that it can be used to extract a \emph{representative} subset of the data on which to conduct down-stream analyses obtaining, in a fraction of the time required to process the whole dataset, the same insights. Such representative subsets could be stored and used for exploratory analyses, with a gain in terms of space and time requirements compared to using the whole dataset.

However, the development of an efficient scheme to effectively approximate the frequency of all frequent $k$-mers by sampling reads is highly nontrivial, due to dependencies among $k$-mers appearing in the same read. In the next sections, we develop and analyze algorithms to approximate $FK(\D,k, \theta)$ by read sampling, starting from a straightforward, but inefficient, approach (Section~\ref{sec:warmup}), then showing how pseudodimension can be used to improve the sample size required by such approach (Section~\ref{sec:pseudo_alg}), and culminating in our algorithm~\algname, the first efficient algorithm to approximate frequent $k$-mers by read sampling (Section~\ref{sec:pseudo_alg_bags}).

\section{Warm-Up: A Simple Algorithm for Approximating Frequent $k$-mers by Sampling Reads}
\label{sec:warmup}

A first, simple approach to approximate the set $FK(\D,k, \theta)$ of frequent $k$-mers consists in taking a sample $S$ of $m$ reads, with $m$ large enough, and report in output the set $FK(S,k,\theta-\varepsilon/2)$ of $k$-mers that appear with frequency at least $\theta-\varepsilon/2$ in the sample $S$. The following result, obtained by combining Hoeffding's inequality~\cite{mitzenmacher2017probability} and a union bound, provides an upper bound to the number $m$ of reads required to have guarantees on the quality of the approximation (see Supplement Material Section \ref{sec:A} for the full analysis).

\begin{proposition}
	\label{prop:samplebound1}
	Consider a sample $S$ of $m$ reads from $\D$. For fixed frequency threshold $\theta \in (0,1]$, error parameter $\varepsilon \in (0,\theta)$, and confidence parameter $\delta \in (0,1)$, if 
$		m \geq \frac{2}{\varepsilon^2} \left(\frac{\ell_{\max,\D,k}}{\ell_{\D,k}} \right)^2 \left( \ln \left( 2 \sigma^k\right)+\ln\left(\frac{1}{\delta} \right) \right)$
	then, with probability $\ge 1- \delta$, $FK(S,k,\theta-\varepsilon/2)$ is an $\varepsilon$-approximation of $FK(\D,k,\theta)$. 
\end{proposition}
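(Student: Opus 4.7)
The plan is to reduce the statement to a Hoeffding plus union bound argument, with the per-read $k$-mer counts playing the role of the i.i.d.\ bounded variables. For each $k$-mer $K \in \Sigma^k$ and each index $j \in \{1,\dots,m\}$, let $W_j(K) = \sum_{p=0}^{n_{i_j}-k}\phi_{r_{i_j},K}(p)$ be the number of occurrences of $K$ in the $j$-th sampled read. Since the reads are drawn i.i.d.\ uniformly from $\D$, the variables $W_1(K),\dots,W_m(K)$ are i.i.d., take values in $[0,\ell_{\max,\D,k}]$ (no read contributes more than $\ell_{\max,\D,k}$ positions), and satisfy $\E[W_j(K)] = \tfrac{1}{n}\sum_{i=1}^n \sum_{p=0}^{n_i-k}\phi_{r_i,K}(p) = o_\D(K)/n$. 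Using $t_{\D,k} = n\ell_{\D,k}$, this rearranges to $f_\D(K) = \E[W_j(K)]/\ell_{\D,k}$, so the natural unbiased estimator from the sample is $\hat f_S(K) = \frac{1}{m\ell_{\D,k}}\sum_{j=1}^m W_j(K)$, i.e.\ the sample frequency computed against the deterministic normalizer $m\ell_{\D,k} = \E[t_{S,k}]$.

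Next I would apply Hoeffding's inequality to the average $\tfrac{1}{m}\sum_j W_j(K)$, which is a mean of i.i.d.\ variables in $[0,\ell_{\max,\D,k}]$, to obtain
$$P\bigl(|\hat f_S(K)-f_\D(K)|\ge t\bigr) \;\le\; 2\exp\!\left(-\frac{2mt^2\,\ell_{\D,k}^2}{\ell_{\max,\D,k}^2}\right)$$
for every $t>0$. Setting $t=\varepsilon/2$ and taking a union bound over the $|\Sigma^k|=\sigma^k$ possible $k$-mers yields a failure probability of at most $2\sigma^k \exp\!\bigl(-m\varepsilon^2\ell_{\D,k}^2/(2\ell_{\max,\D,k}^2)\bigr)$; forcing this to be $\le\delta$ and solving for $m$ gives precisely the sample-size bound in the proposition. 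Conditioned on the complementary good event $\mathcal{E}=\{\,|\hat f_S(K)-f_\D(K)|\le\varepsilon/2 \text{ for all }K\in\Sigma^k\}$, the three clauses of Definition~\ref{def:approximation} are immediate: any $K$ with $f_\D(K)\ge\theta$ satisfies $\hat f_S(K)\ge\theta-\varepsilon/2$ and is therefore reported; any reported $K$ has $\hat f_S(K)\ge\theta-\varepsilon/2$ and thus $f_\D(K)\ge\theta-\varepsilon$; and the reported frequency $\hat f_S(K)$ differs from $f_\D(K)$ by at most $\varepsilon/2$.

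The main subtlety, and the step I expect to require the most care, is the choice of estimator. The ``naive'' sample frequency $\sum_j W_j(K)/t_{S,k}$ has a random denominator to which Hoeffding does not apply cleanly; I sidestep this by normalizing by the deterministic $m\ell_{\D,k}$, which keeps the summands i.i.d.\ and bounded while making the estimator unbiased for $f_\D(K)$. The $(\ell_{\max,\D,k}/\ell_{\D,k})^2$ factor in the bound is then exactly the price paid by this rescaling: the Hoeffding range is $\ell_{\max,\D,k}$, but the target quantity $f_\D(K)$ lives on the scale $1/\ell_{\D,k}$, so the variance-like term in the exponent is deflated by $(\ell_{\D,k}/\ell_{\max,\D,k})^2$. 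Everything else is a mechanical union bound and a routine verification that $\mathcal{E}$ implies the three conditions of an $\varepsilon$-approximation.
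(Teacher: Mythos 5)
Your proposal is correct and follows essentially the same route as the paper's proof in Supplemental Material Section A: the same unbiased estimator $o_S(K)/(m\ell_{\D,k})$ with the deterministic normalizer $\E[t_{S,k}]=m\ell_{\D,k}$ in place of the random $t_{S,k}$, the same Hoeffding bound over $m$ i.i.d.\ per-read counts in $[0,\ell_{\max,\D,k}]$, the same union bound over $\sigma^k$ $k$-mers, and the same verification of the three clauses of Definition~\ref{def:approximation}. The "subtlety" you flag about avoiding the random denominator is precisely the device the paper uses as well.
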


While the result above provides a first bound to the number $m$ of reads required to obtain a rigorous approximation of the frequent $k$-mers,  it usually results in a sample size $m$ larger than $|\D|$ (this is due to the need for $\varepsilon$ to be small in order to obtain meaningful approximations, see Section~\ref{sec:exp1}), making the sampling approach useless. Thus, in the next sections we propose advanced methods to reduce the sample size $m$.

\section{A First Improvement: A Pseudodimension-based Algorithm for $k$-mers Approximation by Sampling Reads}
\label{sec:pseudo_alg}
In this section we introduce the notion of pseudodimension and we use it to improve the bound on the sample size $m$ of Proposition~\ref{prop:samplebound1}.

Let $\mathcal{F}$ be a class of real-valued functions from a domain $X$ to $[a,b] \subset \mathbb{R}$. Consider, for each $f \in \mathcal{F}$, the subset of $X' = X \times [a,b]$ defined as $R_f = \{ (x,t) : t \leq f(x) \}$, and call it \emph{range}. Let $\mathcal{F}^+ = \{R_f , f\in \mathcal{F}\}$ be a \emph{range set} on $X'$, and its corresponding \emph{range space} $Q'$ be $Q'=(X',\mathcal{F}^+)$. 
We say that a subset $D \subset X'$ is \emph{shattered} by $\mathcal{F}^+$ if the size of the \emph{projection set} $proj_{\mathcal{F}^+}(D) = \{ r \cap D: r \in \mathcal{F}^+ \}$ is equal to $2^{|D|}$. The \emph{VC dimension} $VC(Q')$ of $Q'$ is the maximum size of a subset of $X'$ shattered by $\mathcal{F}^+$.
The \emph{pseudodimension} $PD(X,\mathcal{F})$ is then defined as the VC dimension of $Q'$: $PD(X,\mathcal{F}) = VC(Q')$.

Let $\pi$ be the uniform distribution on $X$, and let $S$ be a sample of $X$ of size  $|S| = m$, with every element of $S$ sampled independently and uniformly at random from $X$. We define, $\forall f \in \mathcal{F}$, $f_S = \frac{1}{m} \sum_{x \in S}f(x)$ and $f_X =\E_{x\sim\pi}[f(x)]$. Note that $\E[f_S] = f_X$.
The following result relates the accuracy and confidence parameters $\varepsilon$,$\delta$ and the pseudodimension with the probability that the expected values of the functions in $\mathcal{F}$ are well approximated by their averages computed from a finite random sample.

\begin{proposition}[\cite{talagrand1994sharper,long1999complexity}]
	\label{prop:pseudosamplesize}
	Let $X$ be a domain and $\mathcal{F}$ be a class of real-valued functions from $X$ to $[a,b]$. Let $PD(X,\mathcal{F}) = VC(Q') \leq v$. There exist an absolute positive constant $c$ such that, for fixed $\varepsilon,\delta \in (0,1)$, if $S$ is a random sample of $m$ samples drawn independently and uniformly at random from $X$ with
$		m \geq \frac{c \left( b-a \right)^2}{\varepsilon^2}\left(v + \ln \left(\frac{1}{\delta}\right)\right)$ 
	then, with probability $\geq 1-\delta$, it holds simultaneously $\forall f \in \mathcal{F}$ that $| f_S  - f_X | \leq \varepsilon$.
\end{proposition}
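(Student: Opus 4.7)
The plan is to prove this classical uniform convergence bound via the standard symmetrization--Sauer-Shelah--concentration pipeline, adapted from binary classification (VC dimension) to real-valued function classes (pseudodimension). The central observation is that $PD(X,\mathcal{F}) = VC(Q')$ by definition, so the combinatorial tools developed for binary range spaces apply directly to $\mathcal{F}^+$ once each $f$ is lifted to its subgraph $R_f \subset X'$.

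First I would apply a symmetrization step: introduce a ghost sample $S'$ of $m$ i.i.d.\ draws from $\pi$, independent of $S$, and argue that
\[
\Pr\Bigl(\sup_{f \in \mathcal{F}} |f_S - f_X| > \varepsilon\Bigr) \;\le\; 2\Pr\Bigl(\sup_{f \in \mathcal{F}} |f_S - f_{S'}| > \varepsilon/2\Bigr),
\]
for $m$ large enough that Chebyshev gives $\Pr(|f_{S'} - f_X| \le \varepsilon/2) \ge 1/2$ for every fixed $f$ (using $\Var(f_{S'}) \le (b-a)^2/(4m)$). Conditioning on the pooled multiset $T = S \cup S'$, the pair $(S,S')$ is distributed as a uniformly random balanced partition of $T$, which reduces the problem to bounding a deviation on a finite domain.

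Next I would reduce to a counting question. Although $f$ is real-valued, the event $|f_S - f_{S'}| > \varepsilon/2$ is determined by how $f$ orders the $2m$ points of $T$ relative to a grid of threshold levels. Discretizing $[a,b]$ into $\BO{(b-a)/\varepsilon}$ buckets produces a finite set $T^\star \subset X'$ of $\BO{m(b-a)/\varepsilon}$ points whose classification by the range $R_f$ determines $f_S - f_{S'}$ up to an additive slack of $\varepsilon/8$. Since $VC(Q') \le v$, the Sauer-Shelah lemma bounds the number of distinct classifications $R_f \cap T^\star$ by a polynomial in $|T^\star|$ of degree $v$. For each fixed classification, a Hoeffding bound on the random partition of $T$ yields deviation probability at most $2\exp\!\bigl(-c' m \varepsilon^2/(b-a)^2\bigr)$. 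A union bound over classifications and solving for $m$ so that the total probability is at most $\delta$ gives the claimed form $m \ge \frac{c(b-a)^2}{\varepsilon^2}\bigl(v + \ln(1/\delta)\bigr)$.

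The main obstacle is the discretization-to-Sauer-Shelah step: in the binary case the number of behaviors on a sample of size $2m$ is directly bounded by $(2em/v)^v$, but here one must partition the codomain, verify that the additive rounding error is absorbed by the $\varepsilon/8$ slack, and show that inflating the effective sample size from $2m$ to $\BO{m(b-a)/\varepsilon}$ does not destroy the bound. Collapsing the residual $v \ln(m/v)$ factor from Sauer-Shelah into the linear $v$ term of the final expression, rather than retaining a $\log m$ dependence, requires the standard trick of splitting into the regimes $v \gtrsim \ln(1/\delta)$ and $v \lesssim \ln(1/\delta)$; this sharpening over Pollard's original bound is exactly the contribution of Talagrand that is invoked in the statement.
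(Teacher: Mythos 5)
The paper offers no proof of this proposition: it is imported verbatim as a black box from Talagrand (1994) and Long (1999), so there is no internal argument to compare yours against. Judged on its own terms, your pipeline (symmetrization with a ghost sample, discretization of the codomain, Sauer--Shelah applied to the subgraph range space, Hoeffding plus a union bound over realized classifications) is the correct classical route, and it does yield a valid uniform convergence bound --- but not the one stated. The union bound over the $\BO{\left(\frac{m(b-a)}{\varepsilon v}\right)^v}$ classifications forces the condition $m\varepsilon^2/(b-a)^2 \gtrsim v\ln\left(\frac{m(b-a)}{\varepsilon v}\right) + \ln(1/\delta)$, which after solving for $m$ leaves a residual factor of $\ln\left(\frac{b-a}{\varepsilon}\right)$ multiplying $v$. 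Your closing claim that this logarithm can be collapsed into the linear $v$ term by ``splitting into the regimes $v \gtrsim \ln(1/\delta)$ and $v \lesssim \ln(1/\delta)$'' is where the argument breaks: no such elementary case analysis removes the log, because the log comes from the growth function itself, not from the interplay between $v$ and $\delta$. Eliminating it is precisely the hard content of Talagrand's theorem, and it requires chaining (controlling the supremum through a hierarchy of nested covers at geometrically decreasing scales so that the scale-$2^{-j}$ contributions form a convergent series) rather than a single cover at scale $\varepsilon$ with one union bound. In effect your sketch proves Pollard's bound $m = \BO{\frac{(b-a)^2}{\varepsilon^2}\left(v\ln\frac{b-a}{\varepsilon} + \ln\frac{1}{\delta}\right)}$ and then asserts, incorrectly, that the sharp form follows by bookkeeping; the missing idea is the chaining argument, which is a genuinely different technique and cannot be recovered from the single-scale union bound you set up.
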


The universal constant $c$ has been experimentally estimated to be at most $0.5$ \cite{loffler2009shape}. 


We now define the range space associated to $k$-mers, derive an upper bound to its pseudodimension, and use the result above to derive an improved bound on the number $m$ of reads to be sampled in order to obtain a rigorous approximation of the frequent $k$-mers. 
	Let $k$ be a positive integer and $\D$ be a bag of $n$ reads.  Define the domain $X$ as the set of integers $\{1, \dots, n\}$, where every $i \in X$ corresponds to the $i$-th read of $\D$. Then define the family of real-valued functions $\mathcal{F} = \{ f_K , \forall K \in \Sigma^k\}$ where, for every $i \in X$ and for every $f_K \in \mathcal{F}$, the function $f_K(i)$ is the number of distinct positions in read $r_i$ where $k$-mer $K$ appears divided by the average size of the multiset of $k$-mers that appear in a read of $\D$: $f_K(i) = \sum_{j=0}^{n_i-k} \frac{  \phi_{r_i,K}(j)}{\ell_{\D,k}}$. Therefore $f_K(i) \in [0 ,\frac{\ell_{\max,\D,k}}{\ell_{\D,k}} ]$.  For each $f_K \in \mathcal{F}$, the subset of $X' = X \times [0 ,\frac{\ell_{\max,\D,k}}{\ell_{\D,k}} ]$ defined as $R_{f_K} = \{ (i,t) : t \leq f_K(i) \}$ is the associated range. Let $\mathcal{F}^+ = \{R_{f_K} , f_K \in \mathcal{F}\}$ be the range set on $X'$, and its corresponding range space $Q'$ be $Q'=(X',\mathcal{F}^+)$. 

A trivial upper bound to $PD(X , \mathcal{F})$ is given by $PD(X , \mathcal{F}) \leq \lfloor \log_2 |\mathcal{F}| \rfloor =\lfloor  \log_2 \sigma^k \rfloor$. The following result provides an improved upper bound to $PD(X , \mathcal{F})$ (the proof is in Supplemental Material Section~\ref{sec:advancedbounds} - see Proposition~\ref{prop:pseudobound_appendix}).

\begin{proposition}
	\label{prop:pseudobound}
	Let $\D$ be a bag of $n$ reads, $k$ a positive integer, $X= \{1, \dots, n\}$ be the domain, and let the family $\mathcal{F}$ of real-valued functions be $\mathcal{F} = \{ f_K , \forall K \in \Sigma^k\}$. Then the pseudodimension $PD(X , \mathcal{F})$ satisfies 
$ PD(X , \mathcal{F}) \leq \lfloor \log_2(\ell_{max,\D,k}) \rfloor + 1$.
\end{proposition}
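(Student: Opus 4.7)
The plan is to proceed by contradiction: suppose some set $D=\{(i_1,t_1),\ldots,(i_d,t_d)\}\subseteq X'$ is shattered by $\mathcal{F}^+$, and then squeeze $d$ by counting how many binary patterns on $D$ can have a $1$ at a single fixed coordinate. The key structural fact I will exploit is that only few distinct $k$-mers appear in any single read, so only few $K$'s can contribute a $1$ at a given coordinate, regardless of how large $|\Sigma^k|$ may be.

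The first step is to observe that shattering forces $t_j>0$ for every $j$: if $t_j\leq 0$, then $f_K(i_j)\geq t_j$ for every $K\in\Sigma^k$ (because $f_K\geq 0$), so every realized pattern would carry a $1$ at coordinate $j$, and the $2^{d-1}$ patterns with a $0$ at $j$ could never appear, contradicting shattering. Given $t_j>0$, the inequality $f_K(i_j)\geq t_j$ implies $f_K(i_j)>0$, which by the definition of $f_K$ means that $K$ occurs at least once in read $r_{i_j}$.

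Next I fix the coordinate $j=1$ and count. The read $r_{i_1}$ has length $n_{i_1}$, hence contains at most $n_{i_1}-k+1\leq \ell_{\max,\D,k}$ distinct $k$-mers, and by the previous step any $K$ whose pattern has a $1$ at coordinate $1$ must lie in this set. Since the map $\Psi\colon K\mapsto(\mathbbm{1}[f_K(i_j)\geq t_j])_{j=1}^d$ is a function of $K$, distinct patterns are realized by distinct $K$'s, so the number of realizable patterns with a $1$ at coordinate $1$ is at most $\ell_{\max,\D,k}$. Shattering, however, requires all $2^{d-1}$ such patterns to be realizable, so $2^{d-1}\leq \ell_{\max,\D,k}$; because $d-1$ is a nonnegative integer this gives $d-1\leq\lfloor\log_2 \ell_{\max,\D,k}\rfloor$, which is the claimed bound.

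The main conceptual hurdle is the second step: recognizing that a strictly positive threshold $t_j$ collapses the analytic condition $f_K(i_j)\geq t_j$ into the purely combinatorial one ``$K$ occurs in $r_{i_j}$''. Once this reduction is in place the counting is essentially immediate and one coordinate suffices; without it, one is forced back on the trivial bound $PD(X,\mathcal{F})\leq\lfloor\log_2|\mathcal{F}|\rfloor = k\lfloor\log_2\sigma\rfloor$, which would not meaningfully improve Proposition~\ref{prop:samplebound1}.
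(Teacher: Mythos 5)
Your proof is correct and follows essentially the same route as the paper's: the paper likewise first rules out elements $(i,0)$ from a shattered set (citing a lemma of Riondato and Upfal rather than arguing it inline as you do), then observes that a positive threshold turns $f_K(i)\geq t$ into ``$K$ occurs in $r_i$,'' and finally counts that realizing all $2^{v-1}$ subsets containing a fixed element requires $2^{v-1}$ distinct $k$-mers in a single read, giving $2^{v-1}\leq \ell_{\max,\D,k}$. The only cosmetic difference is that you phrase the count in terms of binary patterns at one fixed coordinate, which makes the argument slightly more self-contained.
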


Combining Proposition~\ref{prop:pseudosamplesize} and Proposition~\ref{prop:pseudobound}, we derive the following (see Supplemental Material Section~\ref{sec:advancedbounds} for the full analysis).

\begin{proposition}
	\label{prop:samplebound2}
	Let $S$ be a sample of $m$ reads from $\D$. For fixed threshold $\theta \in (0,1]$, error parameter $\varepsilon \in (0,\theta)$, and confidence parameter $\delta \in (0,1)$, if 
$		m \geq \frac{2}{\varepsilon^2} \left( \frac{\ell_{\max,\D,k}}{\ell_{\D,k}}\right)^2 \left( \lfloor \log_2\min( 2 \ell_{\max,\D,k} , \sigma^k ) \rfloor+\ln\left(\frac{1}{\delta} \right) \right)$
	then, with probability $\ge 1- \delta$, $FK(S,k,\theta-\varepsilon/2)$ is an $\varepsilon$-approximation of $FK(\D,k,\theta)$.
\end{proposition}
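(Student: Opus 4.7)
The plan is to combine Proposition~\ref{prop:pseudosamplesize} and Proposition~\ref{prop:pseudobound} applied to the range space $(X',\mathcal{F}^+)$ constructed just before Proposition~\ref{prop:pseudobound}, and then translate the resulting uniform convergence bound on sample averages into a statement about $k$-mer frequencies.

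First I would apply Proposition~\ref{prop:pseudosamplesize} to the class $\mathcal{F}=\{f_K : K \in \Sigma^k\}$ with accuracy parameter $\varepsilon/2$ in place of $\varepsilon$; the factor $1/2$ is what ultimately produces the coefficient $2$ in front of $(\ell_{\max,\D,k}/\ell_{\D,k})^2$, after absorbing the universal constant $c \le 1/2$. By the construction of $\mathcal{F}$, the range of every $f_K$ is $[0,\ell_{\max,\D,k}/\ell_{\D,k}]$, so $(b-a)^2 = (\ell_{\max,\D,k}/\ell_{\D,k})^2$. For the pseudodimension $v$, I would use whichever is smaller between the trivial bound $\lfloor \log_2 \sigma^k \rfloor$ (since $|\mathcal{F}| \le \sigma^k$) and the bound $\lfloor \log_2 \ell_{\max,\D,k}\rfloor + 1$ from Proposition~\ref{prop:pseudobound}. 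Using the identity $\lfloor \log_2 x \rfloor + 1 = \lfloor \log_2(2x) \rfloor$ for integer $x\ge 1$, these combine cleanly into $\lfloor \log_2 \min(2\ell_{\max,\D,k}, \sigma^k)\rfloor$, matching the form in the statement.

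Next I would translate the uniform convergence conclusion into a statement about frequencies. A direct computation using the definition of $f_K$ shows that, when $i$ is drawn uniformly from $\{1,\dots,n\}$, $\E[f_K(i)] = \frac{1}{n\ell_{\D,k}}\sum_{i=1}^{n}\sum_{j=0}^{n_i-k}\phi_{r_i,K}(j) = o_\D(K)/t_{\D,k} = f_\D(K)$, so the population mean of $f_K$ is exactly the true frequency. Similarly, the sample mean $\frac{1}{m}\sum_{i\in S} f_K(i)$ is the natural sample-based estimator of $f_\D(K)$ normalized by the expected multiset size $m\,\ell_{\D,k}$. Consequently, the event supplied by Proposition~\ref{prop:pseudosamplesize} says: with probability $\ge 1-\delta$, every $K \in \Sigma^k$ has its sample estimate within $\varepsilon/2$ of $f_\D(K)$, simultaneously.

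Conditional on this event, the three conditions of Definition~\ref{def:approximation} for the output $FK(S,k,\theta-\varepsilon/2)$ follow immediately: any $K$ with $f_\D(K) \ge \theta$ has sample estimate $\ge \theta - \varepsilon/2$ and is therefore reported; any reported $K$ has sample estimate $\ge \theta - \varepsilon/2$, so $f_\D(K) \ge \theta - \varepsilon$; and the reported estimate is within $\varepsilon/2$ of $f_\D(K)$ by construction. The main subtlety I anticipate is in the translation step, because the paper's notion of ``frequency in a sample'' uses the random denominator $t_{S,k}$, while the functional average used in the pseudodimension argument implicitly uses the expected denominator $m\,\ell_{\D,k}$; this is cleanly resolved by defining the sample estimator with the expected normalizer (which is precisely what the function class $\mathcal{F}$ was engineered to do), and is the only place where one needs to be careful.
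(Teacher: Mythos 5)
Your proposal is correct and follows essentially the same route as the paper's own proof: apply Proposition~\ref{prop:pseudosamplesize} with accuracy $\varepsilon/2$ to the range space built from $\mathcal{F}=\{f_K\}$, take the minimum of the trivial bound $\lfloor\log_2\sigma^k\rfloor$ and the bound of Proposition~\ref{prop:pseudobound}, verify that $\E[f_K(i)]=f_\D(K)$ and that the sample average equals $o_S(K)/(m\ell_{\D,k})$, and then check the three conditions of Definition~\ref{def:approximation}. The normalization subtlety you flag (random $t_{S,k}$ versus expected $m\ell_{\D,k}$) is resolved exactly as the paper does, by defining the sample frequency with the expected denominator.
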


This bound significantly improves on the one in Proposition~\ref{prop:samplebound1}, since the factor $\ln(2 \sigma^k)$ is reduced to $\lfloor \log_2\min( 2 \ell_{\max,\D,k} , \sigma^k ) \rfloor$. However, even the bound from Proposition~\ref{prop:samplebound2} results in a sample size $m$ larger than $|\D|$. In the following section we proposes a method to further reduce the sample size $m$, which results in a practical sampling approach.

\section{\algname: Sampling Reads Algorithm to Estimate Frequent $k$-mers}
\label{sec:pseudo_alg_bags}

We now introduce \algname, which approximates the frequent $k$-mers by sampling \emph{bags of reads}.
We define $I_{\ell} = \{i_1 , i_2 , \dots , i_{\ell} \}$ as a \emph{bag} of $\ell$ indexes of reads of $\D$ chosen uniformly at random, with replacement, from the set $\{1,\dots,n\}$. Then we define an $\ell$-\emph{reads sample} $S_\ell$ as a collection of $m$ bags of $\ell$ reads $S_\ell = \{I_{\ell,1} , \dots , I_{\ell,m} \}$.


	Let $k$ be a positive integer and $\D$ be a bag of $n$ reads.  Define the domain $X$ as the set of bags of $\ell$ indexes of reads of $\D$.  Then define the family of real-valued functions $\mathcal{F} = \{ f_{K,\ell}, \forall K \in \Sigma^k\}$ where, for every $I_\ell \in X$ and for every $f_{K,\ell} \in \mathcal{F}$, we have $f_{K,\ell}(I_\ell) = \min(1, o_{I_{\ell}}(K)) / (\ell \ell_{\D,k})$, where $o_{I_{\ell}}(K) = \sum_{i \in I_{\ell}} \sum_{j=0}^{n_i-k} \phi_{r_i,K}(j) $ counts the number of occurrences of $K$ in all the $\ell$ reads of $I_{\ell}$.
	Therefore $f_{K,\ell}(I_\ell) \in \{0 ,\frac{1}{\ell \ell_{\D,k}}\}$ $\forall f_{K,\ell}$ and $\forall I_\ell$.  
	For each $f_{K,\ell} \in \mathcal{F}$, the subset of $X' = X \times \{0 ,\frac{1}{\ell \ell_{\D,k}}\}$ defined as $R_{f_{K,\ell}} = \{ (I_\ell,t) : t \leq f_{K,\ell}(I_\ell) \}$ is the associated range. Let $\mathcal{F}^+ = \{R_{f_{K,\ell}} , f_{K,\ell} \in \mathcal{F}\}$ be the range set on $X'$, and its corresponding range space $Q'$ be $Q'=(X',\mathcal{F}^+)$. 

Note that, for a given bag $I_{\ell}$, the functions $f_{K,\ell}$ are then biased if $K$ appears more than $1$ times in all the $\ell$ reads of $I_{\ell}$. 
We prove the following upper bound to the pseudodimension $PD(X , \mathcal{F})$ (see Proposition \ref{prop:pseudobound_bags} of Supplemental Material Section~\ref{appx:sec:advancedbounds_bags}).

\begin{proposition}
	\label{prop:pseudobound_bags_maintext}
	The pseudodimension $PD(X , \mathcal{F})$ satisfies $PD(X , \mathcal{F}) \leq \lfloor \text{log}_2(\ell \ell_{max,\D,k}) \rfloor + 1$.
\end{proposition}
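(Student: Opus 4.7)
The plan is to mimic the pseudodimension argument used for Proposition~\ref{prop:pseudobound} (single-read sampling), but exploit the fact that here the functions $f_{K,\ell}$ are essentially \emph{binary-valued}, taking only the values $0$ and $v := 1/(\ell \ell_{\D,k})$ because of the $\min(1,\cdot)$ truncation. This truncation is exactly what forces the pseudodimension argument to reduce to a VC-dimension argument on a simpler set system.

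First I would suppose, for contradiction, that a subset $D = \{(I_{\ell,1}, t_1), \ldots, (I_{\ell,d}, t_d)\} \subseteq X'$ is shattered by $\mathcal{F}^+$ with $d = PD(X,\mathcal{F})$. For each subset $S \subseteq \{1,\ldots,d\}$ there must exist a $k$-mer $K_S \in \Sigma^k$ such that $t_j \leq f_{K_S,\ell}(I_{\ell,j})$ iff $j \in S$. Since $f_{K,\ell}(I_\ell) \in \{0, v\}$, every $t_j$ must belong to $\{0,v\}$ as well. Observe that if some $t_j = 0$, then $t_j \leq f_{K_S,\ell}(I_{\ell,j})$ automatically holds for every $K_S$, so $j$ would lie in \emph{every} witness subset, contradicting shattering. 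Hence, for $D$ to be shattered, we must have $t_j = v$ for all $j \in \{1,\ldots,d\}$, and the condition $t_j \leq f_{K_S,\ell}(I_{\ell,j})$ becomes equivalent to ``$K_S$ appears in at least one read of $I_{\ell,j}$''. Shattering $D$ is therefore equivalent to shattering the family of bags $\{I_{\ell,1}, \ldots, I_{\ell,d}\}$ by the set system $\{B_K : K \in \Sigma^k\}$ where $B_K := \{I_\ell : o_{I_\ell}(K) \geq 1\}$.

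Next I would carry out the standard counting step via a ``smallest-index witness'' argument. For every nonempty $S \subseteq \{1,\ldots,d\}$ the witness $K_S$ must appear in $I_{\ell,\min(S)}$; moreover distinct subsets $S$ yield distinct $k$-mers $K_S$. Grouping the $2^d - 1$ nonempty subsets by their minimum index $j$ gives exactly $2^{d-j}$ subsets per index, hence at least $2^{d-1}$ distinct $k$-mers all appearing in $I_{\ell,1}$. But a bag of $\ell$ reads contains at most $\ell \, \ell_{\max,\D,k}$ $k$-mer occurrences, hence at most that many \emph{distinct} $k$-mers. Therefore $2^{d-1} \leq \ell \, \ell_{\max,\D,k}$, which upon taking logarithms and using the integrality of $d$ yields $d \leq \lfloor \log_2(\ell \, \ell_{\max,\D,k}) \rfloor + 1$, as desired.

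The main conceptual step (and the only one that really differs from Proposition~\ref{prop:pseudobound}) is the first one: recognizing that the binary nature of $f_{K,\ell}$ collapses the pseudodimension question into a VC-dimension question on bags, which is what makes it safe to replace the per-read capacity $\ell_{\max,\D,k}$ by the per-bag capacity $\ell \, \ell_{\max,\D,k}$. The combinatorial part is then routine; I do not expect obstacles beyond being careful that the reads inside a bag are sampled \emph{with replacement}, so that the crude bound of $\ell$ reads per bag (times $\ell_{\max,\D,k}$ $k$-mer positions per read) remains correct.
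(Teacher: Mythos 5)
Your proposal is correct and follows essentially the same route as the paper's proof in Supplemental Material Section~\ref{appx:sec:advancedbounds_bags}: both first rule out elements of the form $(I_\ell,0)$ in a shattered set (you re-derive this directly; the paper invokes Lemma~\ref{lemma:lemmapseudo}), then count that some bag must support $2^{d-1}$ distinct witness $k$-mers, and bound that by the $\ell\,\ell_{\max,\D,k}$ available $k$-mer positions in a bag. Your ``smallest-index witness'' grouping and the paper's observation that each element of a shattered set lies in exactly $2^{v-1}$ projections are the same counting step in different clothing.
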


We define the frequency $\hat{f}_{S_\ell}(K)$ of a $k$-mer $K$ obtained from the sample $S_\ell$ of bags of reads as  
$
\hat{f}_{S_\ell}(K) = \frac{1}{m}\sum_{I_{\ell,i} \in S_\ell} f_{K,\ell}(I_{\ell,i}).
$
Note that $\hat{f}_{S_\ell}(K)$ is a ``biased" version of 
$
f_{S_\ell}(K) = \frac{1}{m}\sum_{I_{\ell,i} \in S_\ell} o_{I_{\ell}}(K) / (\ell \ell_{\D,k}),
$
 which is an unbiased estimator of $f_\D(K)$ (i.e., $\E[f_{S_\ell}(K)] = f_\D(K)$). 


The following is our main technical results, and establishes a rigorous relation between the number  $m$ of bags of $\ell$ reads and the guarantees obtained by approximating the frequency $f_\D(K)$ of a $k$-mer $K$ with its (biased) estimate $\hat{f}_{S_\ell}(K)$. (The full analysis is in Supplemental Material Section~\ref{appx:sec:advancedbounds_bags} - see  Proposition~\ref{prop:pseudo_reads_samplesize_bags_appendix}.)

\begin{proposition}
	\label{prop:sample_bound3}
	Let $k$ and $\ell$ be two positive integers. Consider a sample $S_\ell$ of $m$ bags of $\ell$ reads from $\D$. For fixed frequency threshold $\theta \in (0,1]$, error parameter $\varepsilon \in (0,\theta)$, and confidence parameter $\delta \in (0,1)$, if 
	\begin{equation}
		\label{eq:sample_size3}
		m \geq \frac{2}{\varepsilon^2} \left( \frac{1}{\ell\ell_{\D,k}}\right)^2 \left( \lfloor \log_2\min( 2 \ell\ell_{\max,\D,k} , \sigma^k ) \rfloor+\ln\left(\frac{1}{\delta} \right) \right) 
	\end{equation}
then, with probability at least $1 - \delta$: 
\begin{packed_itemize}
	\item for any $k$-mer $K \in FK(\D,k,\theta)$ such that  $f_\D(A) \geq \tilde{\theta} = \frac{\ell_{\max,\D,k}}{\ell_{\D,k}}(1- (1- \ell \ell_{\D,k} \theta)^{1/\ell})$ it holds $\hat{f}_{S_\ell}(K)  \geq \theta - \varepsilon / 2$;
	\item for any $k$-mer $K$ with $\hat{f}_{S_\ell}(K) \geq \theta - \varepsilon/2$ it holds $f_\D(K) \geq \theta - \varepsilon$;
	\item for any $k$-mer $K \in FK(\D,k,\theta)$ it holds $f_\D(K) \geq \hat{f}_{S_\ell}(K) - \varepsilon / 2$;
	\item for any $k$-mer $K$ with $ \ell \ell_{\D,k}(\hat{f}_{S_\ell}(K) +  \varepsilon / 2) \leq 1 $ it holds $f_\D(K) \leq \frac{\ell_{\max,\D,k}}{\ell_{\D,k}}(1 - (1-\ell \ell_{\D,k} (\hat{f}_{S_\ell}(K) +  \varepsilon / 2))^{(1/\ell)})$.
\end{packed_itemize}
\end{proposition}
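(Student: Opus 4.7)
The plan is to combine the uniform convergence guarantee of Proposition~\ref{prop:pseudosamplesize} (instantiated with the pseudodimension bound of Proposition~\ref{prop:pseudobound_bags_maintext}) with an elementary analytic sandwich relating $\E[\hat{f}_{S_\ell}(K)]$ to $f_\D(K)$ through the per-read probability $p_K = \Pr_{i}[K\text{ appears in }r_i]$ that a uniformly chosen read of $\D$ contains $K$ at least once.

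First, I would apply Proposition~\ref{prop:pseudosamplesize} to the range space $Q' = (X', \mathcal{F}^+)$ defined just before Proposition~\ref{prop:pseudobound_bags_maintext}, with $b - a = 1/(\ell \ell_{\D,k})$, accuracy parameter $\varepsilon/2$, and pseudodimension upper-bounded by $v = \lfloor \log_2 \min(2\ell\ell_{\max,\D,k},\, \sigma^k)\rfloor$ (the minimum of the bound from Proposition~\ref{prop:pseudobound_bags_maintext} and the trivial bound $\lfloor \log_2|\mathcal{F}|\rfloor \le \lfloor\log_2\sigma^k\rfloor$). Using $c \le 1/2$, the resulting threshold matches~\eqref{eq:sample_size3}, so with probability at least $1-\delta$ the uniform bound $|\hat{f}_{S_\ell}(K) - \E[\hat{f}_{S_\ell}(K)]|\le \varepsilon/2$ holds simultaneously for all $k$-mers $K$; I condition on this event for the rest of the argument.

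Next, since $f_{K,\ell}(I_\ell)$ equals $1/(\ell\ell_{\D,k})$ precisely when at least one of the $\ell$ independently sampled reads contains $K$, a direct computation gives
\[
\E[\hat{f}_{S_\ell}(K)] \;=\; \frac{1-(1-p_K)^\ell}{\ell\ell_{\D,k}}.
\]
I then establish the sandwich $f_\D(K)\ell_{\D,k}/\ell_{\max,\D,k} \le p_K \le f_\D(K)\ell_{\D,k}$, which is immediate by double counting: each of the $p_K n$ reads containing $K$ contributes between $1$ and $\ell_{\max,\D,k}$ occurrences to the total $o_\D(K) = n\ell_{\D,k} f_\D(K)$. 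Combining the upper side with Bernoulli's inequality $1-(1-x)^\ell \le \ell x$ yields $\E[\hat{f}_{S_\ell}(K)] \le f_\D(K)$; substituting the lower side into $(1-\cdot)^\ell$ yields $\E[\hat{f}_{S_\ell}(K)] \ge (1-(1-f_\D(K)\ell_{\D,k}/\ell_{\max,\D,k})^\ell)/(\ell\ell_{\D,k})$.

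Each of the four bullets then follows from one application of the uniform bound together with one of these inequalities. Bullet~1 uses that $f_\D(K) \ge \tilde{\theta}$ is, by direct algebra, equivalent to the lower-envelope bound on $\E[\hat{f}_{S_\ell}(K)]$ being at least $\theta$, after which the uniform deviation gives $\hat{f}_{S_\ell}(K)\ge \theta - \varepsilon/2$. Bullets~2 and~3 follow from the upper envelope $\E[\hat{f}_{S_\ell}(K)] \le f_\D(K)$, shifted by $\pm\varepsilon/2$. Bullet~4 inverts the lower-envelope relation: the hypothesis $\ell\ell_{\D,k}(\hat{f}_{S_\ell}(K)+\varepsilon/2)\le 1$ is exactly what keeps $1-\ell\ell_{\D,k}(\hat{f}_{S_\ell}(K)+\varepsilon/2)$ non-negative, so its $\ell$-th root is well defined; applying $f_\D(K)\le p_K\ell_{\max,\D,k}/\ell_{\D,k}$ then yields the stated closed form. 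I expect this fourth step to be the delicate one, since $p\mapsto 1-(1-p)^\ell$ must be inverted and the validity of that inversion is precisely encoded in the side condition that appears in the proposition.
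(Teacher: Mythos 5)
Your proposal is correct and follows essentially the same route as the paper's proof: a uniform-convergence bound from Proposition~\ref{prop:pseudosamplesize} with the pseudodimension bound of Proposition~\ref{prop:pseudobound_bags_maintext} at accuracy $\varepsilon/2$, combined with the sandwich on the per-read containment probability (your $p_K$ is exactly the paper's $\tilde{f}_\D(K)$ in Proposition~\ref{prop:relation}) and the identity $\E[\hat{f}_{S_\ell}(K)] = (1-(1-p_K)^\ell)/(\ell\ell_{\D,k})$ from Proposition~\ref{prop:avg_bias_frequency}, then inverting that relation for the last bullet. The only cosmetic difference is that you derive the upper envelope $\E[\hat{f}_{S_\ell}(K)] \leq f_\D(K)$ via Bernoulli's inequality and the upper side of the sandwich, whereas the paper gets it directly from $\min(1,o_{I_\ell}(K)) \leq o_{I_\ell}(K)$ and the unbiasedness of $f_{S_\ell}(K)$; both are immediate.
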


Given a sample $S_\ell$ of $m$ bags of $\ell$ reads from $\D$, with $m$ satisfying the condition of Proposition~\ref{prop:sample_bound3}, the set $A = \{(K,f_{S_\ell}(K)) : \hat{f}_{S_\ell}(K) \geq \theta - \varepsilon/2\}$ is \emph{almost} an $\varepsilon$-approximation of $FK(\D,k,\theta)$: Proposition \ref{prop:sample_bound3} ensures that all $k$-mers in $A$ have frequency $f_{\D}(K) \geq \theta - \varepsilon$ with probability at least $1-\delta$, but it does not guarantee that all $k$-mers with frequency $\in [\theta , \tilde{\theta})$ will be in output. However, we show in Section~\ref{sec:exp1} that, in practice, almost all of them are reported by \algname. We further remark that the derivations of \cite{sakeima} to obtain tight confidence intervals for $f_\D(A)$ using multiple values of $\ell$ are relevant also for the sampling scheme we employ in \algname; we will extend our analysis in this direction in the full version of this work.

\begin{wrapfigure}[17]{l}{0.65\textwidth}
	\begin{center}
		\vspace{-0.8cm}
		\begin{algorithm}[H]
			\label{alg:approximation_exact}
			\KwData{$\D$, $k$, $\theta \in (0,1]$, $\delta \in (0,1)$, $\varepsilon \in (0,\theta)$, integer $\ell \geq 1$}
			\KwResult{Approximation $A$ of $FK(\D,k,\theta)$ with probability at least $1-\delta$}
			$m \leftarrow \lceil \frac{2}{\varepsilon^2} \left( \frac{1}{\ell\ell_{\D,k}}\right)^2 \left( \lfloor \log_2\min( 2 \ell\ell_{\max,\D,k} , \sigma^k ) \rfloor+\ln\left(\frac{1}{\delta} \right) \right) \rceil$\label{line:computem}\;
			$S \leftarrow$ sample of exactly $m \ell$ reads drawn from $\D$\label{line:sample}\;
			$T \leftarrow \texttt{exact\_counting}(S,k)$\label{line:exactcount}\;
			$S_{\ell} \leftarrow $ random partition of $S$ into $m$ bags of $\ell$ reads each\label{line:createbags}\;
			$A \leftarrow \emptyset$\;
			\ForAll{$(K ,o_{S}(K)) \in T$}  
			{
				$S_K \leftarrow $ number of bags of $S_{\ell}$ where $K$ appears\label{line:compute_sk}\;
				$\hat{f}_{S_{\ell}}(K) \leftarrow S_K/(m \ell \ell_{\D,k})$\label{line:replace2}\;
				$f_{S_{\ell}}(K) \leftarrow o_{S}(K)/(m \ell \ell_{\D,k})$ 
				\label{alg:unbiasedfreq}\;
				\lIf{$\hat{f}_{S_{\ell}}(K) \geq \theta - \varepsilon/2$}{$A \leftarrow A \cup (K,f_{S_{\ell}}(K))$}
			}
			\Return $A$\;
			\caption{\algname$(\D, k, \theta, \delta, \varepsilon,\ell)$}
		\end{algorithm}
	\end{center}
\end{wrapfigure}

Our algorithm~\algname\ (Alg.~\ref{alg:approximation_exact}) builds on Proposition~\ref{prop:sample_bound3}, and returns the approximation of $FK(\D,k,\theta)$ defined by the set $A = \{(K,f_{S_\ell}(K)) : \hat{f}_{S_\ell}(K) \geq \theta - \varepsilon/2\}$. Therefore, with probability at least $1-\delta$ the output of \algname\ provides the guarantees stated in Proposition~\ref{prop:sample_bound3}.

\algname\ starts by computing the number $m$  of bags of $\ell$ reads as in Eq.~\ref{eq:sample_size3}, based on the input parameters $k,\theta,\delta,\varepsilon,\ell$ and on the characteristics ($\ell_{\D,k},\ell_{\max,\D,k}, \sigma$) of data\-set $\D$. It then draws a sample $S$ of exactly $m \ell$ reads, uniformly and independently at random, from $\D$ (with replacement). Next, it computes for each $k$-mer $K$ the number of occurrences $o_{S}(K)$ of $K$ in sample $S$, using any exact $k$-mers counting  algorithm. We denote the call of this method by \texttt{exact\_counting}$(S,k)$ (line~\ref{line:exactcount}), which returns a collection $T$ of pairs $(K,o_{S}(K))$. 
The sample $S$ is then partitioned into $m$ bags, where each bag contains exactly $\ell$ reads (line~\ref{line:createbags}). For each $k$-mer $K$, \algname\ computes the biased frequency $\hat{f}_{S_{\ell}}(K)$ (line~\ref{line:replace2}) and the unbiased frequency $f_{S_{\ell}}(K)$ (line~\ref{alg:unbiasedfreq}), reporting in output only $k$-mers with biased frequency at least $\theta-\varepsilon/2$ (line~\ref{alg:unbiasedfreq}). Note that the estimated frequency of a $k$-mer $K$ reported in output is always given by the unbiased frequency $f_{S_{\ell}}(K)$.
In practice the partition of $S$ into $m$ bags (line~\ref{line:createbags}) and the computation of $S_K$ (line~\ref{line:compute_sk}) could be high demanding in terms of running time and space, since one has to compute and store, for each $k$-mer $K$, the exact number $S_K$ of bags where $K$ appears at least once among all reads of the bag. 

We now describe an alternative, and much more efficient, approach to approximate the values $S_K$, without the need to explicitly compute the bags (line~\ref{line:createbags}). The number of reads in a given bag where $K$ appears is well approximated by a Poisson distribution $Poisson(R[K]/m)$, where $R[K]$ is the number of reads of $S$ where $k$-mer $K$ appears at least once. Therefore, the number $S_K$ of bags where $K$ appears at least once is approximated by a binomial distribution $Binomial(m,1-e^{-R[K]/m})$. Thus, one can avoid to explicitely create the bags and to exactly count $S_K$ by removing line~\ref{line:createbags}, and replacing lines~\ref{line:compute_sk}  and \ref{line:replace2} with $``\hat{f}_{S_{\ell}}(K) \leftarrow Binomial(m,1-e^{-R[K]/m})/(m \ell \ell_{\D,k})"$. Corollary 5.11~of~\cite{mitzenmacher2017probability} guarantees that, by using this Poisson distribution to approximate $S_K$, the output of \algname\ satisfies the properties of Proposition~\ref{prop:sample_bound3} with probability at least $1-2\delta$. This leads to the replacement of $``\ln(1/\delta)"$ with $``\ln(2/\delta)"$ in line~\ref{line:computem}. However, this approach requires to compute, for each $k$-mer $K$, the number of reads $R[K]$ of $S$ where $k$-appears at least once. 
We believe such computation can be obtained with minimal effort within the implementation of most $k$-mer counters, but we now describe a simple idea to approximate $R[K]$. Since most $k$-mers appear at most once in a read, the number of reads $R[K]$ where a $k$-mer $K$ appears is well approximated by the number of occurrences $T[K]$ of $K$ in the sample $S$. Thus, we can replace lines~\ref{line:compute_sk}  and \ref{line:replace2} with $``\hat{f}_{S_{\ell}}(K) \leftarrow Binomial(m,1-e^{-T[K]/m})/(m \ell \ell_{\D,k})"$, which only requires the counts $T[K]$ obtained from the exact counting procedure \texttt{exact\_counting}$(S,k)$ of line \ref{line:exactcount} (see Algorithm~\ref{alg:approximation_appendix} in Supplement Material). Note that approximating $R[K]$ with $T[K]$ leads to overestimate frequencies of few $k$-mers who reside in very repetitive sequences, e.g. $k$-mers composed by the same $k$ consecutive nucleotides, for which $T[K] \gg R[K]$. However, since the majority of $k$-mers reside in non-repetitive sequences, we can assume  $R[K] \approx T[K]$.

\section{Experimental Evaluation}
In this section we present results of our experimental evaluation. In particular:
\begin{packed_itemize}
	\item We assess the performance of \algname\ in approximating the set of frequent $k$-mers from a dataset of reads. In particular, we evaluate the accuracy of estimated frequencies and false negatives in the approximation, and compare \algname\ with the state-of-the-art sampling algorithm \sakeima~\cite{sakeima} in terms of sample size and running time.
	\item We evaluate \algname's performance for the comparison of metagenomic datasets. We use \algname's approximations to estimate abundance based distances (e.g., the Bray-Curtis distance) between metagenomic datasets, and show that the estimated distances can be used to obtain informative clusterings of metagenomic datasets (from the Sorcerer II Global Ocean Sampling Expedition \cite{GOS}\footnote{\url{https://www.imicrobe.us}}) in a fraction of the time required by the exact distances computation (i.e., based on exact $k$-mers frequencies).
	\item We test \algname\ to discover discriminative $k$-mers between pairs of datasets. We show that \algname\ identifies almost all discriminative $k$-mers from pairs of metagenomic datasets from~\cite{liu2017unbiased} and the Human Microbiome Project (HMP)\footnote{\url{https://hmpdacc.org/HMASM/}}, with a significant speed-up compared to standard approaches.
	\end{packed_itemize}

\subsection{Implementation, Datasets, Parameters, and Environment}
We implemented \algname\ as a combination of a Python script, which performs the reads sampling and saves the sample on a file, and C++, as a modification of \kmc~3~\cite{kokot2017kmc}\footnote{Available at \url{https://github.com/refresh-bio/KMC}}, a fast and efficient counting $k$-mers algorithm. Note that our flexible sampling technique can be combined with any $k$-mer counting algorithm. (See Supplemental Material for results, e.g. Figure~\ref{fig:runningtimes_withJelly}, obtained using \jellyfish~v.~\texttt{2.3}\footnote{Available at \url{https://github.com/gmarcais/Jellyfish}} as $k$-mer counter in \algname). 
We use the variant of \algname\ that employs the Poisson approximation for computing $S_{K}$ (see end of Section~\ref{sec:pseudo_alg_bags}). 
\algname\ implementation and scripts for reproducing all results are publicity available\footnote{Available at \url{https://github.com/VandinLab/SPRISS}}. 
We compared \algname\ with the exact $k$-mer counter \kmc\ and with \sakeima~\cite{sakeima}\footnote{Available at \url{https://github.com/VandinLab/SAKEIMA}}, the state-of-the-art sampling-based algorithm for approximating frequent $k$-mers. In all experiments we fix $\delta=0.1$ and $\varepsilon = \theta - 2/t_{\D,k}$. If not stated otherwise, we considered $k=31$ and $\ell = \lfloor 0.9/(\theta \ell_{\D,k}) \rfloor$ in our experiments. 
When comparing running times, we did not consider the time required by \algname\ to materialize the sample in a file, since this step is not explicitly performed in \sakeima\ and could be easily done at the time of creation of the reads dataset. 
For \sakeima, as suggested in~\cite{sakeima} we set the number $\ell_{SK}$ of $k$-mers in a bag to be $\ell_{SK}= \lfloor 0.9/\theta \rfloor$. We remark that a bag of reads of \algname\ contains the same (expected) number of $k$-mers positions of a bag of \sakeima; this guarantees that both algorithms provide outputs with the same guarantees, thus making the comparison between the two methods fair.
To assess \algname\ in approximating frequent $k$-mers, we considered $6$ large metagenomic datasets
from HMP, each with $\approx 10^8$ reads and average read length $\approx 100$ (see Supplemental Table \ref{tab:dataHMP}). 
For the evaluation of \algname\ in comparing metagenomic datasets, we also used $37$ small metagenomic datasets from the Sorcerer II Global Ocean Sampling Expedition \cite{GOS}, each with $\approx 10^4\text{-}10^5$ reads and average read length $\approx 1000$ (see Supplement Table~\ref{tab:dataGOS}). For the assessment of \algname\ in the discovery of discriminative $k$-mers we used two large datasets from~\cite{liu2017unbiased}, \texttt{B73} and \texttt{Mo17}, each with $\approx 4 \cdot 10^8$ reads and average read length $ = 250$ (see Supplemental Table~\ref{tab:dataMOB}), and we also experimented with the HMP datasets.
All experiments have been performed on a machine with 512 GB of RAM and 2 Intel(R) Xeon(R) CPU E5-2698 v3 @2.3GHz, with one worker, if not stated otherwise. All reported results  are averages over $5$ runs. 

\subsection{Approximation of Frequent $k$-mers}
\label{sec:exp1}
In this section we first assess the quality of the approximation of $FK(\D,k,\theta)$ provided by \algname, and then compare \algname\ with \sakeima.

We use \algname\ to extract approximations of frequent $k$-mers on 6 datasets from HMP for values of the minimum frequency threshold $\theta\in \{2.5\cdot10^{-8}, 5\cdot10^{-8}, 7.5\cdot10^{-8},10^{-7}\}$.
The output of \algname\ satisfied the guarantees from Proposition \ref{prop:sample_bound3} for all 5 runs of every combination of dataset and $\theta$.
In all cases the estimated frequencies provided by \algname\ are close to the exact ones (see Figure~\ref{fig:frequencies_bounds_SPRISS} for an example). In fact, the average (across all reported $k$-mers) absolute deviation of  the estimated frequency w.r.t. the true frequency is always small, i.e. one order of magnitude smaller than $\theta$ (Figure~ \ref{fig:avg_dev}), and the maximum deviations is very small as well  (Figure~\ref{fig:max_dev}).
In addition, \algname\ results in a very low false negative rate (i.e., fraction of $k$-mers of $FK(\D,k,\theta)$ not reported by \algname), which is always been below $0.012$ in our experiments.

 In terms of running time, \algname\ required at most $64\%$ of the time required by the exact approach \kmc\ (Figure~\ref{fig:running_time}). This is due to \algname\ requiring to analyze at most $34\%$ of the entire dataset (Figure~\ref{fig:sample_size}).
Note that the use of collections of bags of reads is crucial to achieve useful sample size, i.e. lower than the whole dataset:
the sample size from Hoeffding's inequality and union bound (Proposition~\ref{prop:samplebound1}), and the one from pseudodimension without collections of bags (Proposition~\ref{prop:samplebound2}) are $\approx 10^{16}$ and $\approx 10^{15}$, respectively, which are useless for datasets of $\approx 10^{8}$ reads. 
These results show that \algname\ obtains very accurate approximations of frequent $k$-mers in a fraction of the time required by exact counting approaches.

\begin{figure}[h!]
	\centering
	\subfloat[]{\includegraphics[width=.45\linewidth]{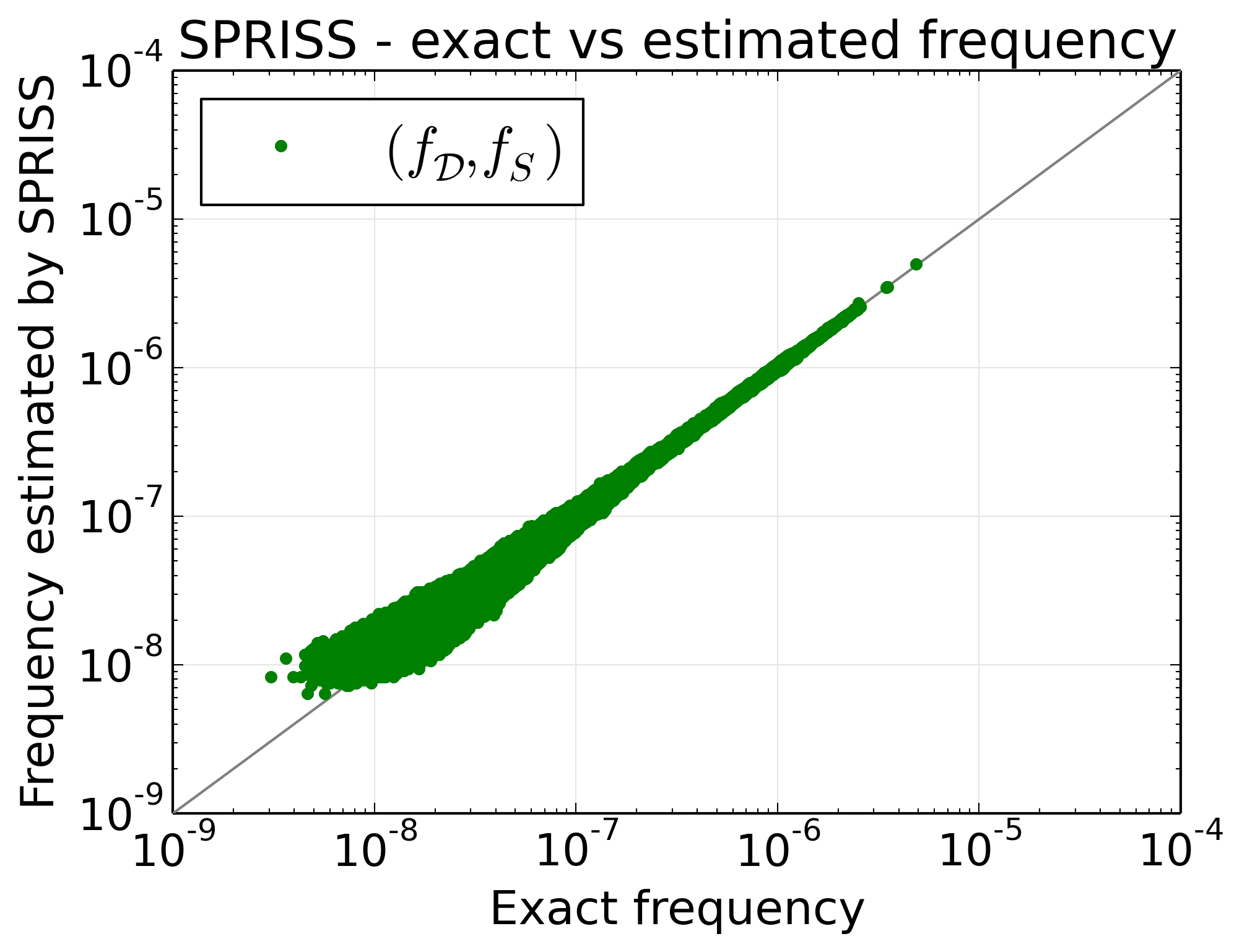} \label{fig:frequencies_bounds_SPRISS}}
	\subfloat[]{\includegraphics[width=.435\linewidth]{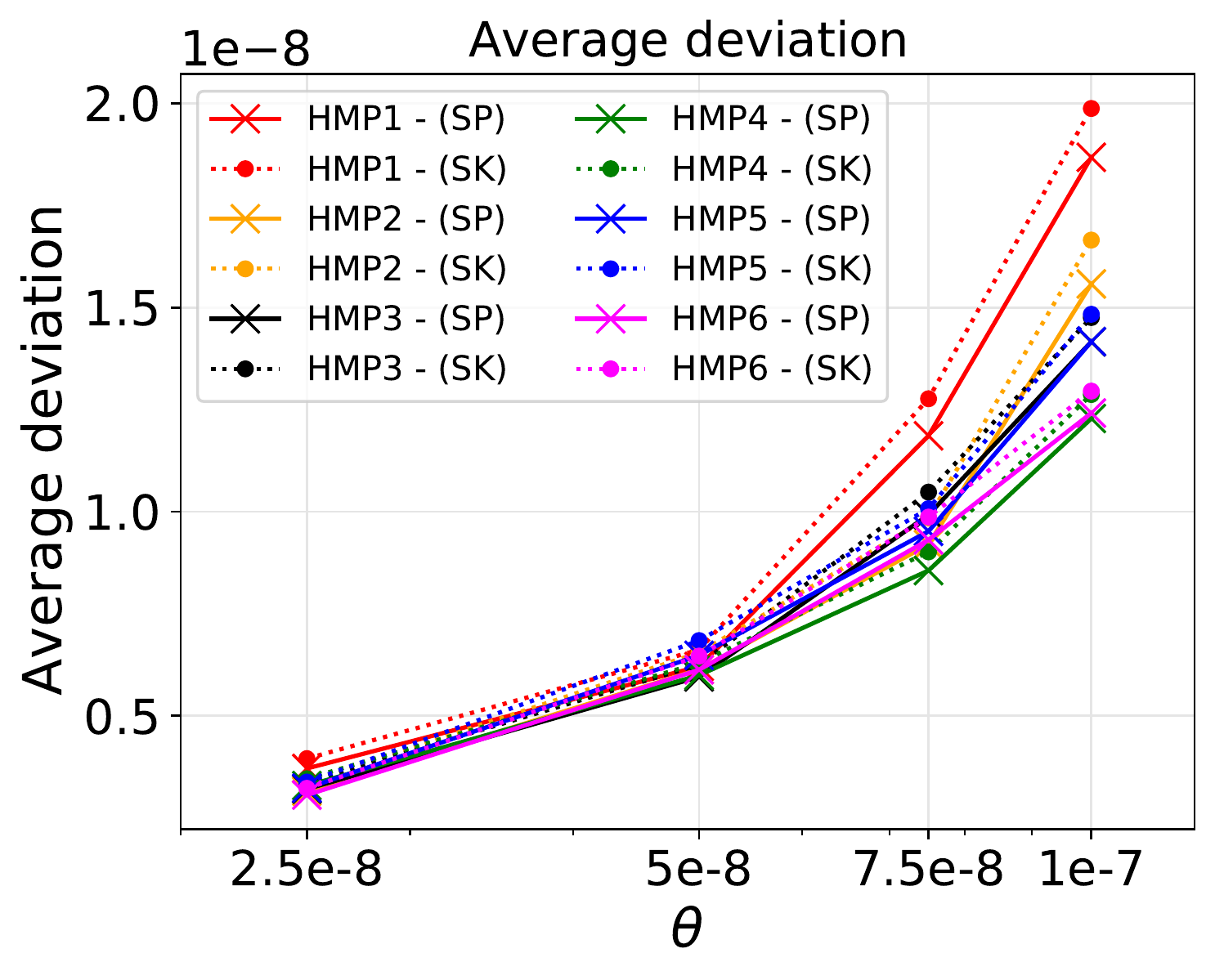} \label{fig:avg_dev}}\\
	\subfloat[]{\includegraphics[width=.45\linewidth]{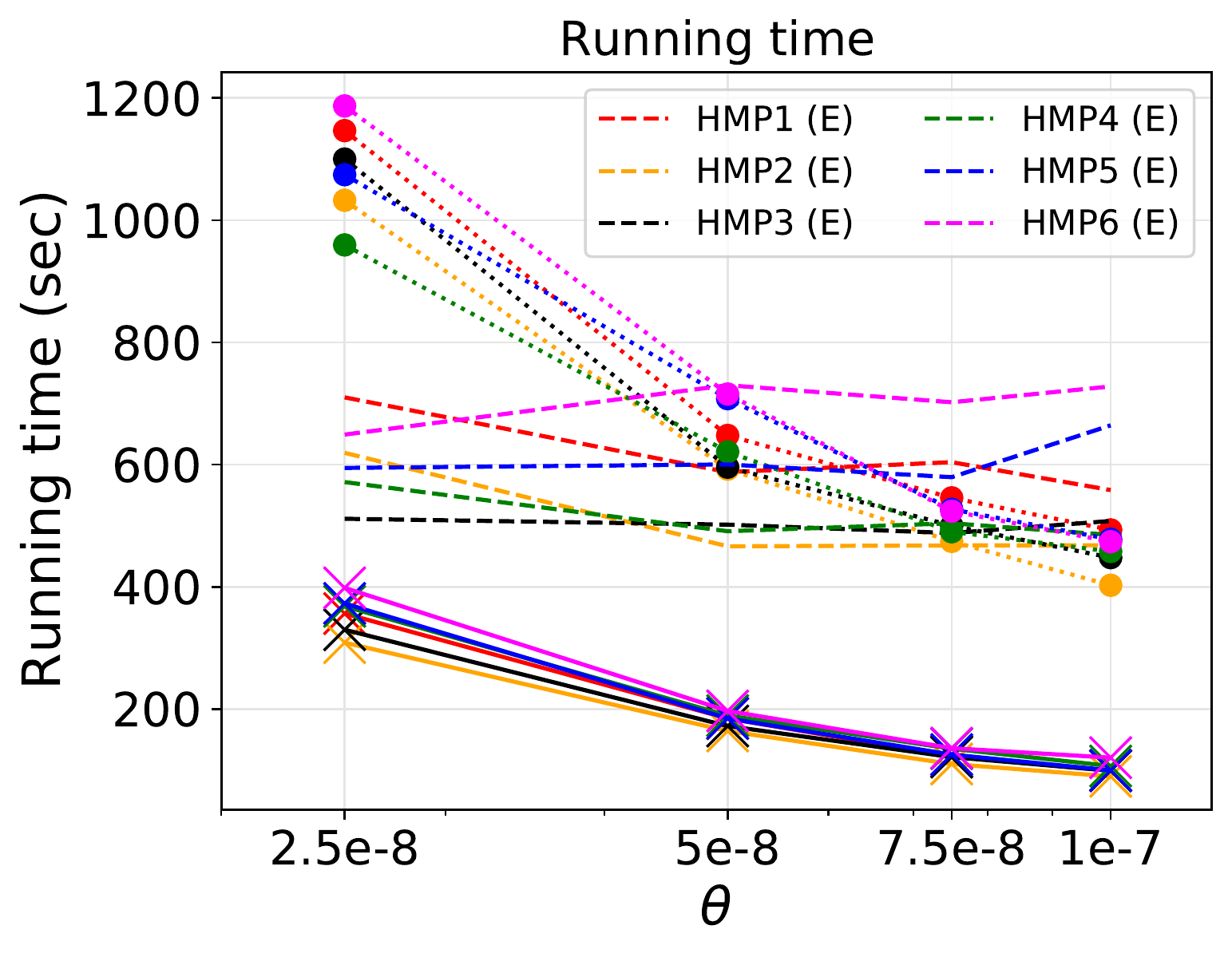} \label{fig:running_time}}
	\subfloat[]{\includegraphics[width=.435\linewidth]{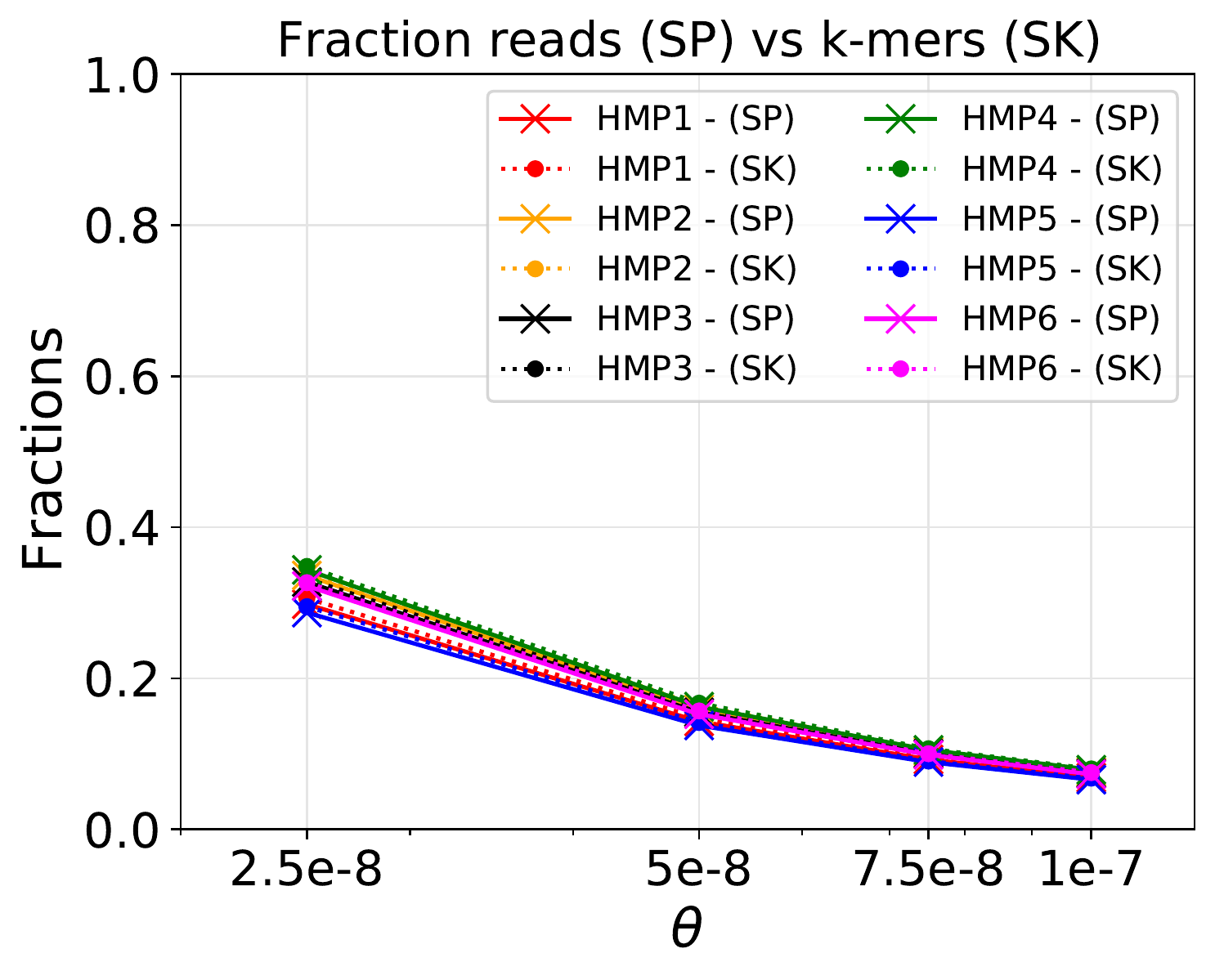} \label{fig:sample_size}}\\
	\caption{ (a) $k$-mers exact frequency and frequency estimated by \algname\ for dataset \texttt{SRS024075} and $\theta = 2.5\cdot10^{-8}$. (b) Average  deviations between exact frequencies and frequencies estimated by \algname\ (\algnameshort) and \sakeima\ (\sakeimashort), for various datasets and values of $\theta$. (c) Running time of \algname\ (\algnameshort), \sakeima\ (\sakeimashort), and the exact computation (\exactshort) - see also legend of panel~\ref{fig:avg_dev}). (d) Fraction of the dataset analyzed by \algname\ (\algnameshort) and by \sakeima\ (\sakeimashort).}
\end{figure}

We then compared \algname\ with \sakeima. In terms of quality of approximation, \algname\ reports approximations with an average deviation lower than \sakeima's approximations, while \sakeima's approximations have a lower maximum deviation. However, the ratio between the maximum deviation of \algname\ and the one of \sakeima\ are always below 2. Overall, the quality of the approximation provided by \algname\ and \sakeima\ are, thus, comparable. In terms of running time, \algname\ significantly improves over \sakeima\ (Figure~\ref{fig:running_time}), and processes slightly smaller portions of the dataset compared to \sakeima\ (Figure~\ref{fig:sample_size}).
Summarizing, \algname\ is able to report most of the frequent $k$-mers and estimate their frequencies with small errors, by analyzing small samples of the datasets and with significant improvements on running times compared to exact approaches and to state-of-the-art sampling algorithms. 

\subsection{Comparing Metagenomic Datasets }
\label{sec:exp2}

We evaluated \algname\ to compare metagenomic datasets by computing an approximation to the Bray-Curtis (BC) distance between pairs of datasets of reads, and using such approximations to cluster datasets.

Let $\D_1$ and $\D_2$ be two datasets of reads. Let $\mathcal{F}_1 = FK(\D_1,k,\theta)$ and $\mathcal{F}_2 = FK(\D_2,k,\theta)$ be the set of frequent $k$-mers respectively of $\D_1$ and $\D_2$, where $\theta$ is  a minimum frequency threshold. 
The \emph{BC distance} between $\D_1$ and $\D_2$ considering only frequent $k$-mers is defined as
$BC(\D_1,\D_2,\mathcal{F}_1,\mathcal{F}_2) = 1 - 2I/U$, where $I = \sum_{K \in \mathcal{F}_1 \cap \mathcal{F}_2} \min\{o_{\D_1}(K),o_{\D_2}(K)\}$ and $U = \sum_{K \in \mathcal{F}_1} o_{\D_1}(K) + \sum_{K \in \mathcal{F}_2} o_{\D_2}(K).$ Conversely, the \emph{BC similarity} is defined as $1 - BC(\D_1,\D_2,\mathcal{F}_1,\mathcal{F}_2)$.

We considered 6 datasets from HMP, and estimated the BC distances among them by using \algname\ to approximate the sets of frequent $k$-mers $\mathcal{F}_1 = FK(\D_1,k,\theta)$ and $\mathcal{F}_2 = FK(\D_2,k,\theta)$ for the values of $\theta$ as in Section~\ref{sec:exp1}. We compared such estimated distances with the exact BC distances and with the estimates obtained using \sakeima. Both \algname\ and \sakeima\ provide accurate estimates of the BC distances (Figure~\ref{fig:legend_and_theta1} and Figure~\ref{fig:all_thetas}), which can be used to assess the relative similarity of pairs of datasets. However, to obtain such approximations  \algname\ requires at most $25\%$ of the time required by \sakeima\ and usually $30\%$ of the time required by the exact computation with \kmc (Figure~\ref{fig:all_running_times}). Therefore \algname\ provides accurate estimates of metagenomic distances in a fraction of time required by other approaches.

 \begin{figure}[h!]
	\centering
	\subfloat[]{\includegraphics[width=.44\linewidth]{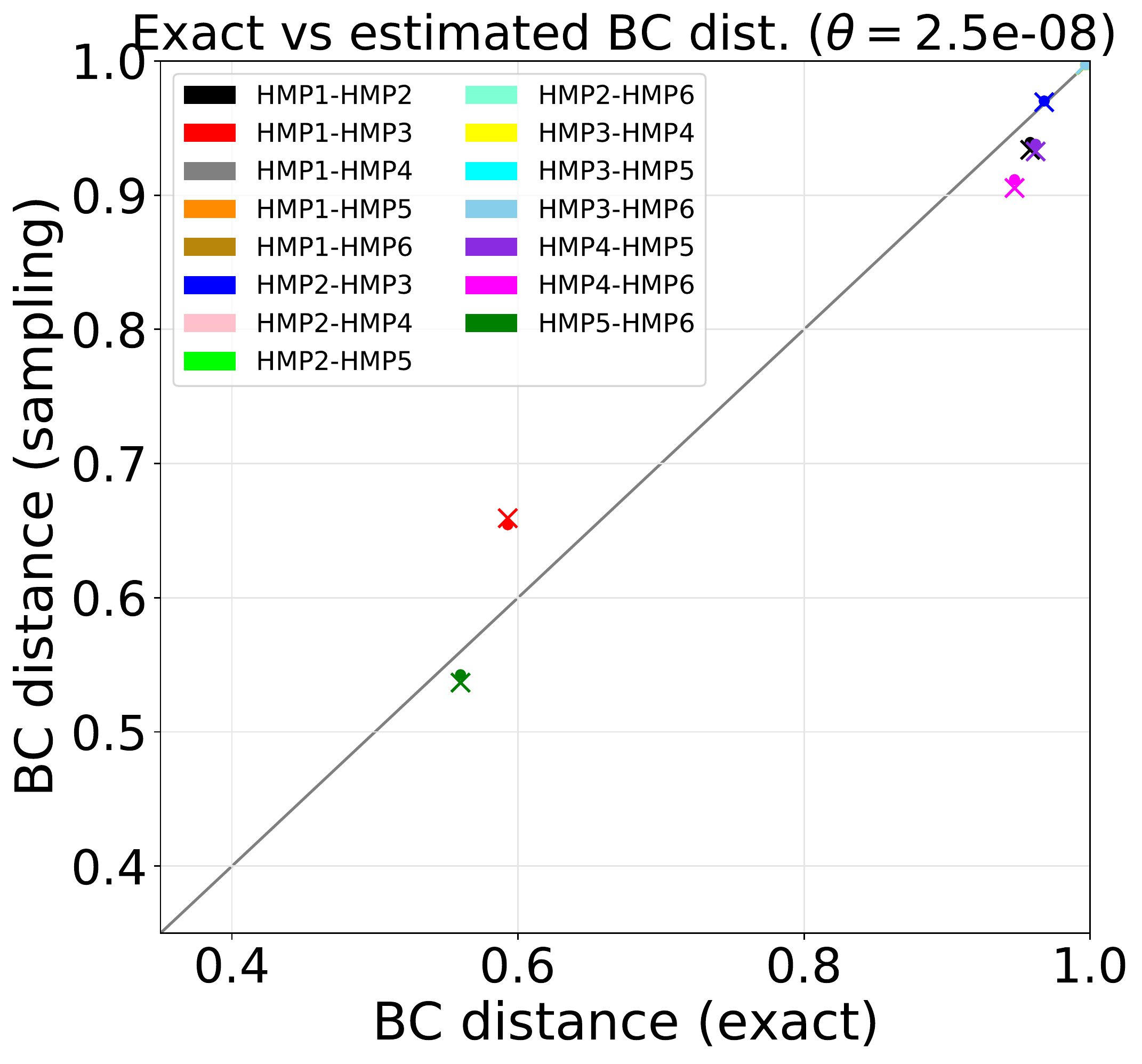} \label{fig:legend_and_theta1}}
	\subfloat[]{\includegraphics[width=.445\linewidth]{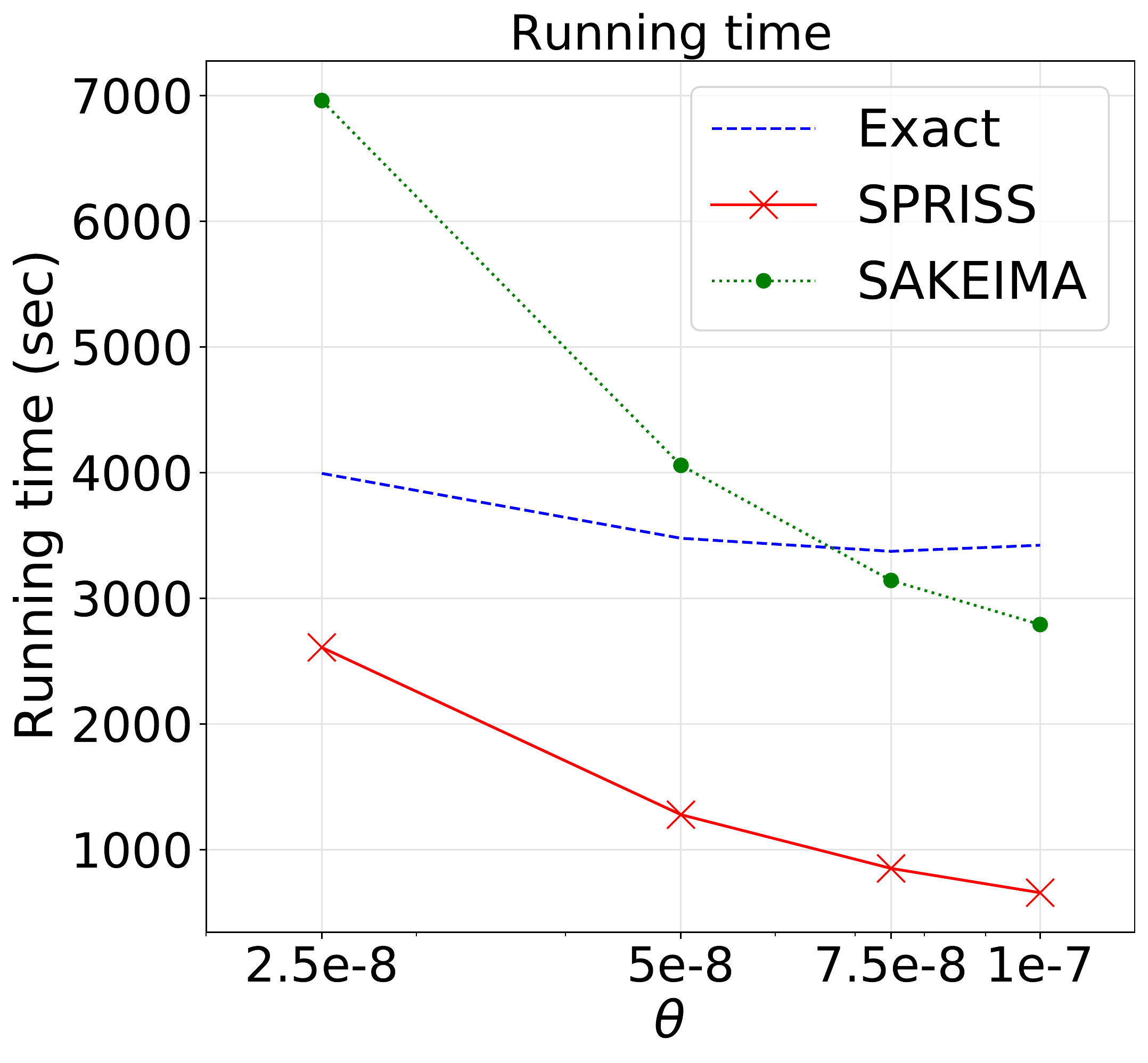}  \label{fig:all_running_times}}\\
	\subfloat[]{\includegraphics[width=.435\linewidth]{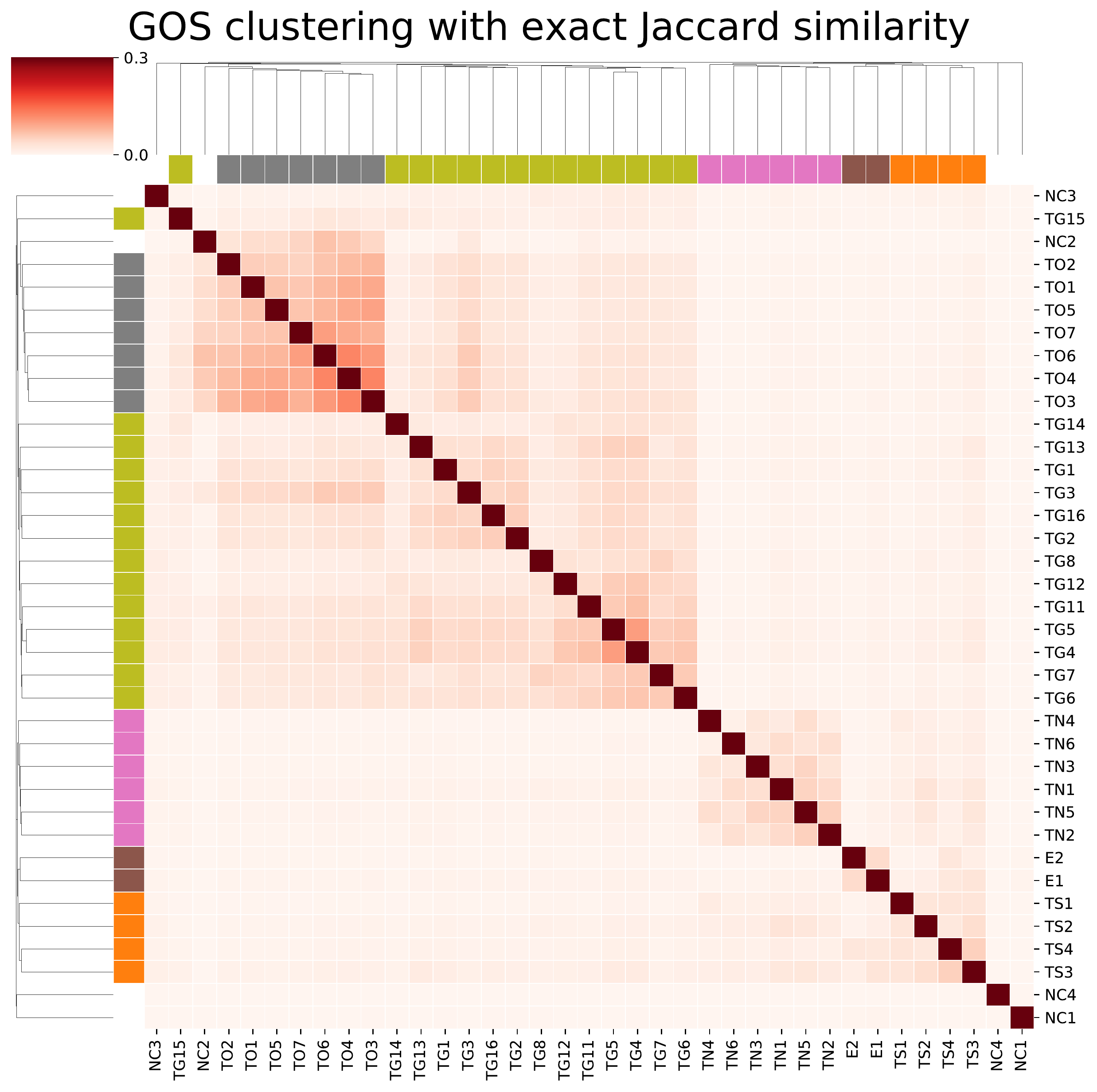}  \label{fig:jaccard_clustering}}
	\subfloat[]{\includegraphics[width=.435\linewidth]{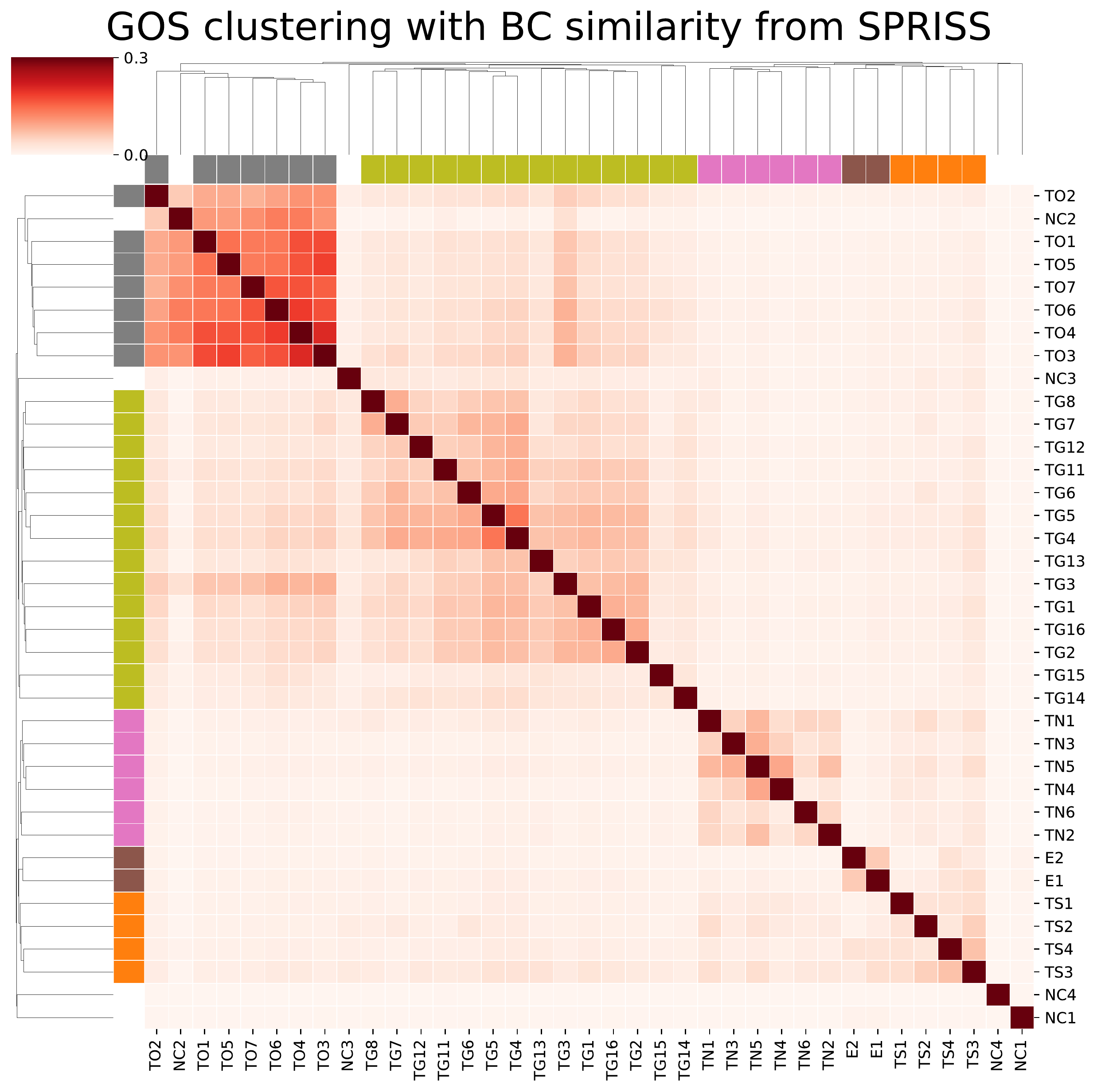} \label{fig:bc_sampling_clustering}}
	\caption{(a) Comparison of the approximations of the Bray-Curtis (BC) distances using approximations of frequent $k$-mers provided by \algname\ ($\times$) and by \sakeima\ ($\bullet$), and the exact distances, for $\theta=  2.5\cdot10^{-8}$. (b) Running time to approximate BC distances for all pairs of datasets with \algname, with \sakeima, and the exact approach. (c) Average linkage hierarchical clustering of GOS datasets using Jaccard similarity. 
(d) Same as (c), using estimated BC similarity from \algname\ with $50\%$ of the data. (See also larger Figures~\ref{fig:JC_clustering_supp}-\ref{fig:BCS_clustering_supp} in Supplemental Material for better readability of datasets' labels and computed clusters.)}
	\label{fig:BC_approximation_comparison}
\end{figure}

As an example of the impact in accurately estimating distances among metagenomic datasets, we used the sampling approach of \algname\ to approximate all pairwise BC distances among $37$ small datasets from the Sorcerer II Global Ocean Sampling Expedition (GOS) \cite{GOS}, and used such distances to cluster the datasets using  average linkage hierarchical clustering. The $k$-mer based clustering of metagenomic datasets is often performed by using \emph{presence-based} distances, such as the Jaccard distance~\cite{ondov2016mash}, which estimates similarities between two datasets by computing the fraction of $k$-mers in common between the two datasets. Abundance-based distances, such as the BC distance~\cite{benoit2016multiple,danovaro2017submarine,dickson2017carryover}, 
provide more detailed measures based also on the $k$-mers abundance, but are often not used due to the heavy computational requirements to extract all $k$-mers counts. However, the sampling approach of \algname\ can significantly speed-up the computation of all BC distances, and, thus, the entire clustering analysis. In fact, for this experiment, the use of \algname\ reduces the time required to analyze the datasets (i.e., obtain $k$-mers frequencies, compute all pairwise distances, and obtain the clustering) by $62\%$.

We then compared the clustering obtained using the Jaccard distance (Figure~\ref{fig:jaccard_clustering}) and the clustering obtained using the estimates of the BC distances (Figure~\ref{fig:bc_sampling_clustering}) obtained using only $50\%$ of reads in the GOS datasets, which are assigned to groups and macro-groups according to the origin of the sample~\cite{GOS}.
Even if the BC distance is computed using only a sample of the datasets, while the Jaccard distance is computed using the entirety of all datasets, the use of approximate BC distances leads to a better clustering in terms of correspondence of clusters to groups,
 and to the correct cluster separation for macro-groups. 
In addition, the similarities among datasets in the same group
and the dissimilarities among datasets in different groups are more accentuated using the approximated BC distance. In fact, the ratio between the average BC similarity among datasets in the same group and the analogous average Jaccard is in the interval 
$[1.25,1.75]$ for all groups. In addition, the ratio between i) the difference of the average BC similarity within the tropical macro-group and the average BC similarity between the tropical and temperate groups, and ii) the analogous difference using the Jaccard similarity is $\approx 1.53$. These results tell us the approximate BC-distances, computed using only half of the reads in each dataset, increase by $\approx 50\%$ the similarity signal inside all groups defined by the original study \cite{GOS}, and the dissimilarities between the two macro-groups (tropical and temperate).
 
To conclude, the estimates of the BC similarities obtained using the sampling scheme of \algname\ allows to better cluster metagenomic datasets than using the Jaccard similarity, while requiring less than $40\%$ of the time needed by the exact computation of BC similarities, even for fairly small metagenomic datasets.

\subsection{Approximation of Discriminative $k$-mers}
\label{sec:exp3}
In this section we assess \algname\ for approximating discriminative $k$-mers in metagenomic datasets.
In particular, we consider the following definition of discriminative $k$-mers~\cite{liu2017unbiased}.
Given two datasets $\D_1,\D_2$, and a minimum frequency threshold $\theta$,  we define the set $DK(\D_1,\D_2,k,\theta,\rho)$ of $\D_1$-discriminative $k$-mers as the collection of $k$-mers $K$ for which the following conditions both hold: 1. $K \in FK(\D_1,k,\theta)$; 2. $f_{\D_1}(K) \geq \rho f_{\D_2}(K)$, with $\rho=2$.  Note that the computation of $DK(\D_1,\D_2,k,\theta,\rho)$ requires to extract $FK(\D_1,k,\theta)$ and $FK(\D_2,k,\theta/\rho)$.  \sloppy{\algname\ can be used to approximate the set $DK(\D_1,\D_2,k,\theta,\rho)$, by computing approximations $\overline{FK}(\D_i,k,\theta)$ of the sets $FK(\D_i,k,\theta)$, $i=1,2$, of frequent $k$-mers in $\D_1,\D_2$, and then reporting a $k$-mer $K$ as $\D_1$-discriminative if the following conditions both hold: 1. $K \in \overline{FK}(\D_1,k,\theta)$; 2. $K \notin \overline{FK}(\D_2,k,\theta)$, or $f_{S^1_\ell}(K) \geq \rho f_{S^2_\ell}(K)$ when $K \in \overline{FK}(\D_2,k,\theta)$.}

To evaluate such approach, we considered two datasets from~\cite{liu2017unbiased}, and $\theta=2 \cdot 10^{-7}$  and $\rho=2$, which are the parameters used in~\cite{liu2017unbiased}. We used the sampling approach of \algname\ with $\ell = \lfloor 0.02/(\theta \ell_{\D,k}) \rfloor$ and $\ell = \lfloor 0.04/(\theta \ell_{\D,k}) \rfloor$, resulting in analyzing of $5\%$ and $10\%$ of all reads, to approximate the sets of discriminative  $\D_1$-discriminative and of $\D_2$-discriminative $k$-mers. When $5\%$ of the reads are used, the false negative rate is $<0.028$, while when $10\%$ of the reads are used, the false negative rate is $<0.018$. The running times are $\approx 1130$ sec. and $\approx 1970$ sec., respectively, while the exact computation of the discriminative $k$-mers with \kmc\ requires $\approx 10^4$ sec. (we used 32 workers for both \algname\ and \kmc).  Similar results are obtained when analyzing pairs of HMP datasets, for various values of $\theta$ (Figure~\ref{fig:results_discriminative_HMP}).
These results show that \algname\ can identify discriminative $k$-mers with small false negative rates while providing a remarkable improvement in running time compared to the exact approach. 

\section{Conclusions}

We presented \algname, an efficient algorithm to compute rigorous approximations of frequent $k$-mers and their frequencies by sampling reads. \algname\ builds on pseudodimension, an advanced concept from statistical learning theory. Our extensive experimental evaluation shows that \algname\ provides high-quality approximations and can be employed to speed-up exploratory analyses in various applications, such as the analysis of metagenomic datasets and the identification of discriminative $k$-mers. Overall, the sampling approach used by \algname\ provides an efficient way to obtain a representative subset of the data that can be used to perform complex analyses more efficiently than examining the whole data, while obtaining representative results.

\newpage

\newpage
\appendix{ \huge Supplemental Material}

\setcounter{figure}{0}
\renewcommand{\thefigure}{S\arabic{figure}}

\section{Analysis of Simple Reads Sampling Algorithm}
\label{sec:A}

In this section we prove Proposition \ref{prop:samplebound1}, which here corresponds to  Proposition \ref{prop:unionboundsample}. To this aim, we need to introduce and prove some preliminary results.
\begin{proposition}
	The expectation $\E[t_{S,k}]$ of the size of the multiset of $k$-mers that appear in $S$ is $m \ell_{\D,k}$.
\end{proposition}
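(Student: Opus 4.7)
The plan is to apply linearity of expectation after rewriting $t_{S,k}$ as a sum of per-read contributions. Write $S = \{s_1, \ldots, s_m\}$ and, for each sampled read $s_j$ having length $|s_j|$, let $Y_j = |s_j| - k + 1$ denote the number of $k$-mer positions in $s_j$. By the definition of $t_{S,k}$ in the preliminaries, we have $t_{S,k} = \sum_{j=1}^{m} Y_j$, so linearity of expectation gives $\E[t_{S,k}] = \sum_{j=1}^{m} \E[Y_j]$.

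Next I would compute $\E[Y_j]$ for a single sample. Since $s_j$ is drawn uniformly at random with replacement from the bag of $n$ reads in $\D$, the random variable $Y_j$ takes value $n_i - k + 1$ with probability $1/n$ for each $i \in \{1, \ldots, n\}$. Therefore
\[
\E[Y_j] \;=\; \frac{1}{n} \sum_{i=1}^{n} (n_i - k + 1) \;=\; \frac{t_{\D,k}}{n} \;=\; \ell_{\D,k},
\]
using directly the definitions $t_{\D,k} = \sum_{i} (n_i - k + 1)$ and $\ell_{\D,k} = t_{\D,k}/n$ from the preliminaries.

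Finally, summing the $m$ identical contributions yields $\E[t_{S,k}] = m \, \ell_{\D,k}$, which is the claim. There is essentially no obstacle here: the argument is pure linearity of expectation together with the uniform-with-replacement sampling assumption built into the definition of a reads sample. The only subtle point to flag, for readers, is that the random quantity $t_{S,k}$ aggregates both the randomness in how many reads are sampled (fixed at $m$) and in which reads are sampled (i.i.d.\ uniform over $\D$), and that the identity $\E[Y_j] = \ell_{\D,k}$ does not require any assumption about the distribution of read lengths in $\D$ beyond the ones already implicit in $\ell_{\D,k}$ being well defined.
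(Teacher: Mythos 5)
Your proof is correct and follows essentially the same route as the paper's: decompose $t_{S,k}$ into per-read contributions $n_i-k+1$, compute the expectation of a single uniformly sampled read's contribution as $\frac{1}{n}\sum_{i=1}^{n}(n_i-k+1)=\ell_{\D,k}$, and conclude by linearity of expectation over the $m$ samples. Your presentation is in fact slightly cleaner, since indexing the sample by $j$ avoids the paper's mildly ambiguous reuse of $r_i$ for both the sampled reads and the reads of $\D$.
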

\begin{proof}
	Let $X(r_i) = n_i - k + 1$ be the number of starting positions for $k$-mers in read $r_i$ sampled uniformly at random form $\D$, $i \in \{1,\dots,n\}$. $\E[X(r_i)] = \sum_{r_i \in \D} \frac{1}{n} (n_i - k + 1) = \ell_{\D,k}$. Combining this with the linearity of the expectation, we have:
	\begin{equation*}
	\E[t_{S,k}] = \E \left[ \sum_{r_i \in S} (n_i - k + 1) \right] = \sum_{r_i \in S} \E[n_i - k + 1] = m \E[X(r_i)] = m \ell_{\D,k}.
	\end{equation*}
\end{proof}

Given a $k$-mer $K$, its support $o_S(K)$ in $S$ is defined as $o_S(K) = \sum_{r_i \in S} \sum_{j=0}^{n_i-k} \phi_{r_i,K}(j)$. We define the frequency of $K$ in $S$ as $f_S(K) = o_S(K) / (m \ell_{\D,k})$, that is the ratio between the support of $K$ and the expectation $\E[t_{S,k}] = m \ell_{\D,k}$ of the size of the multiset of $k$-mers that appear in $S$. This definition of $f_S(K)$ gives us an unbiased estimator for $f_\D(K)$.

\begin{proposition}
	\label{prop:unbiased}
	The frequency $f_S(K) = o_S(K) / (m \ell_{\D,k})$ is an unbiased estimator for $f_\D(K) = o_\D(K)/t_{\D,k}$.
\end{proposition}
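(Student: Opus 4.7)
The plan is to compute $\E[o_S(K)]$ via linearity of expectation over the $m$ independently sampled reads, and then divide by the deterministic quantity $m\ell_{\D,k}$, using the identity $t_{\D,k}=n\ell_{\D,k}$ (which is immediate from the definition $\ell_{\D,k}=t_{\D,k}/n$).

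First, I would fix a single sampled read $r$. Since $r$ is drawn uniformly at random from $\D$, the expected contribution to the support of $K$ from that one read is
$$\E\!\left[\sum_{j=0}^{n_r-k}\phi_{r,K}(j)\right]=\sum_{r_i\in\D}\frac{1}{n}\sum_{j=0}^{n_i-k}\phi_{r_i,K}(j)=\frac{o_\D(K)}{n},$$
by the definition of $o_\D(K)$.

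Next, writing $o_S(K)=\sum_{r\in S}\sum_{j=0}^{n_r-k}\phi_{r,K}(j)$ and applying linearity of expectation over the $m$ i.i.d.\ sampled reads gives $\E[o_S(K)]=m\,o_\D(K)/n$. Dividing by the constant $m\ell_{\D,k}$ and using $n\ell_{\D,k}=t_{\D,k}$ yields
$$\E[f_S(K)]=\frac{\E[o_S(K)]}{m\ell_{\D,k}}=\frac{o_\D(K)}{n\ell_{\D,k}}=\frac{o_\D(K)}{t_{\D,k}}=f_\D(K),$$
which is the claim.

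There is no genuine obstacle here: the argument is essentially a two-line computation relying on (i) linearity of expectation, (ii) uniform sampling of reads with replacement, and (iii) the bookkeeping identity $t_{\D,k}=n\ell_{\D,k}$. The only thing to be careful about is that the normalization uses the \emph{expected} size $m\ell_{\D,k}$ of the $k$-mer multiset in $S$ rather than the realized (random) size $t_{S,k}$; dividing by the random $t_{S,k}$ would in general give a biased estimator, so it is important that $m\ell_{\D,k}$ is a deterministic constant depending only on $\D$ and $m$.
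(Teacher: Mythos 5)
Your proof is correct and follows essentially the same route as the paper's: compute the expected per-read contribution $o_\D(K)/n$ under uniform sampling, apply linearity of expectation over the $m$ sampled reads, and divide by the deterministic normalizer $m\ell_{\D,k}=m\,t_{\D,k}/n$. Your closing remark about why one must normalize by the expected multiset size $m\ell_{\D,k}$ rather than the random realized size $t_{S,k}$ is a sensible clarification, but it does not change the argument.
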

\begin{proof}
	Let $X_{r_i}(K) = \sum_{j=0}^{n_i-k} \phi_{r_i,K}(j)$ be the number of distinct positions where $k$-mer $K$ appears in read $r_i$ sampled uniformly at random form $\D$, $i \in \{1,\dots,n\}$. 
	$E[ X_{r_i}(K)] = \sum_{r_i \in \D}\left( \frac{1}{n} \sum_{j=0}^{n_i-k} \phi_{r_i,K}(j) \right) = o_\D(K)/n.$ 
	Combining this with the linearity of the expectation, we have:
	\begin{equation*}
	\E[f_S(K)] = \frac{E[o_S(K)]}{m \ell_{\D,k}} =  \frac{\E[\sum_{r_i \in S} \sum_{j=0}^{n_i-k} \phi_{r_i,K}(j)]}{m \ell_{\D,k}}  = \frac{\E[ X_{r_i}(K)]}{\ell_{\D,k}} = \frac{o_\D(K)}{n \ell_{\D,k}} = \frac{o_\D(K)}{t_{\D,k}} = f_\D(K). 
	\end{equation*}
\end{proof}

By using the sampling framework based on reads and the Hoeffding inequality \cite{mitzenmacher2017probability}, we prove the following bound on the probability that $f_S(K)$ is not within $\varepsilon/2$ from $f_\D(K)$, for an arbitrary $k$-mer $K$. 

\begin{proposition}
	\label{prop:hoeffding}
	Consider a sample $S$ of $m$ reads from $\D$. Let $\ell_{\max,\D,k} = \max_{r_i \in \D}(n_i - k + 1)$. Let $K \in \Sigma^k$ be an arbitrary $k$-mer. For a fixed accuracy parameter $\varepsilon \in (0,1)$ we have:
	\begin{equation}
		\Pr \left( | f_S(K) - f_\D(K) | \geq \frac{\varepsilon}{2} \right) \leq 2 \exp\left( -   \frac{1}{2} m \varepsilon^2 \left( \frac{\ell_{\D,k} }{ \ell_{\max,\D,k}}\right)^2 \right).
	\end{equation}
\end{proposition}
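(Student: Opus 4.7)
The plan is to recognize that $f_S(K)$ is, up to scaling, an average of $m$ i.i.d.\ bounded random variables, and then invoke Hoeffding's inequality directly.

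First, I would reuse the per-read count $X_{r_i}(K) = \sum_{j=0}^{n_i-k}\phi_{r_i,K}(j)$ already introduced in the proof of Proposition~\ref{prop:unbiased}, and define the rescaled variables $Y_i = X_{r_i}(K)/\ell_{\D,k}$ for each of the $m$ sampled reads. Because the reads in $S$ are drawn independently and uniformly from $\D$, the $Y_i$ are i.i.d. From the proof of Proposition~\ref{prop:unbiased} we already know $\E[X_{r_i}(K)] = o_\D(K)/n$, hence $\E[Y_i] = o_\D(K)/(n\ell_{\D,k}) = o_\D(K)/t_{\D,k} = f_\D(K)$.

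Next, I would observe that $f_S(K) = \frac{1}{m}\sum_{i=1}^m Y_i$ by construction of $f_S(K) = o_S(K)/(m\ell_{\D,k})$, and that $Y_i$ takes values in the interval $[0, \ell_{\max,\D,k}/\ell_{\D,k}]$ since any read $r_i$ contains at most $n_i - k + 1 \leq \ell_{\max,\D,k}$ positions at which $K$ could occur, hence $X_{r_i}(K)\le\ell_{\max,\D,k}$.

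Finally, I would apply Hoeffding's inequality (in the standard two-sided form for bounded i.i.d.\ variables, e.g.\ as stated in~\cite{mitzenmacher2017probability}) with deviation parameter $\varepsilon/2$ and range $b-a = \ell_{\max,\D,k}/\ell_{\D,k}$, yielding
\begin{equation*}
\Pr\!\left(\bigl| f_S(K) - f_\D(K) \bigr| \geq \tfrac{\varepsilon}{2}\right) \leq 2\exp\!\left(-\frac{2m(\varepsilon/2)^2}{(\ell_{\max,\D,k}/\ell_{\D,k})^2}\right) = 2\exp\!\left(-\tfrac{1}{2}m\varepsilon^2\Bigl(\tfrac{\ell_{\D,k}}{\ell_{\max,\D,k}}\Bigr)^{\!2}\right),
\end{equation*}
which is the claimed bound. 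There is no real obstacle here: the only modelling choice is to scale the per-read counts by $\ell_{\D,k}$ so that the average equals $f_S(K)$ and has expectation $f_\D(K)$; once this is set up, Hoeffding's inequality delivers the bound immediately. The factor $(\ell_{\max,\D,k}/\ell_{\D,k})^2$ in the exponent is precisely the squared width of the range of the $Y_i$, and is what will ultimately force the large sample size in Proposition~\ref{prop:samplebound1}.
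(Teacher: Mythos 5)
Your proof is correct and follows essentially the same route as the paper's: both decompose $f_S(K)$ into $m$ independent per-read contributions bounded by $\ell_{\max,\D,k}/\ell_{\D,k}$ (after normalization), use Proposition~\ref{prop:unbiased} for the expectation, and apply the two-sided Hoeffding inequality with deviation $\varepsilon/2$. The only cosmetic difference is that you phrase it as an average of variables in $[0,\ell_{\max,\D,k}/\ell_{\D,k}]$ while the paper phrases it as a sum of variables in $[0,\ell_{\max,\D,k}/(m\ell_{\D,k})]$, which yields the identical bound.
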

\begin{proof}
	The frequency $f_S(K) = o_S(K) / (m \ell_{\D,k})$ of $K$ in $S$ can be rewritten as:
	\begin{equation}
		f_S(K) = \frac{ \sum_{r_i \in S} \sum_{j=0}^{n_i-k} \phi_{r_i,K}(j)}{m \ell_{\D,k}} = \sum_{r_i \in S} \sum_{j=0}^{n_i-k} \frac{\phi_{r_i,K}(j)}{m\ell_{\D,k}} = \sum_{r_i \in S}\hat{\phi}_{K}(r_i),
	\end{equation}
	where the random variable (r.v.) $\hat{\phi}_{K}(r_i) = \sum_{j=0}^{n_i-k} \frac{\phi_{r_i,K}(j)}{m \ell_{\D,k}}$ is the number of times $K$ appears in read $r_i$ divided by $m \ell_{\D,k}$. Thus, $f_S(K)$ can be rewritten as a sum of $m$ independent r.v. that take values in $[0,\frac{\ell_{\max,\D,k}}{m \ell_{\D,k}}]$. Combining this fact with Proposition~\ref{prop:unbiased}, and by applying the Hoeffding inequality \cite{mitzenmacher2017probability} we have:
	\begin{equation*}
		\Pr (|f_S(K) - f_\D(K) | \geq \frac{\varepsilon}{2}) \leq 2 \exp \left( \frac{- 2 (\varepsilon/2)^2}{m  \left(  \frac{\ell_{\max,\D,k}}{m \ell_{\D,k}} \right)^2 } \right) = 2 \exp \left(  - \frac{1}{2} m  \varepsilon^2 \left( \frac{\ell_{\D,k} }{ \ell_{\max,\D,k}}\right)^2  \right).
	\end{equation*}
\end{proof}

Since the maximum number of $k$-mers is $\sigma^k$, by combining the result above with the union bound we have the following result.

\begin{proposition}
	\label{prop:unionboundsample}
	Consider a sample $S$ of $m$ reads from $\D$. For fixed frequency threshold $\theta \in (0,1]$, error parameter $\varepsilon \in (0,\theta)$, and confidence parameter $\delta \in (0,1)$, if 
	\begin{equation}
		\label{eq:samplesize1_appendix}
		m \geq \frac{2}{\varepsilon^2} \left(\frac{\ell_{\max,\D,k}}{\ell_{\D,k}} \right)^2 \left( \ln \left( 2 \sigma^k\right)+\ln\left(\frac{1}{\delta} \right) \right)
	\end{equation}
	then, with probability $\ge 1- \delta$, $FK(S,k,\theta-\varepsilon/2)$ is an $\varepsilon$-approximation of $FK(\D,k,\theta)$. 
\end{proposition}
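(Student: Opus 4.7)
The plan is to combine Proposition~\ref{prop:hoeffding} with a union bound over all $k$-mers, and then verify that a uniform additive-$\varepsilon/2$ approximation of frequencies implies the three conditions in Definition~\ref{def:approximation}. Concretely, there are at most $\sigma^k$ distinct $k$-mers in $\Sigma^k$, so by applying Proposition~\ref{prop:hoeffding} to each $K \in \Sigma^k$ and taking the union bound, the probability that there exists some $K$ with $|f_S(K) - f_\D(K)| \geq \varepsilon/2$ is at most
\[
2\sigma^k \exp\!\left(-\tfrac{1}{2} m \varepsilon^2 \left(\tfrac{\ell_{\D,k}}{\ell_{\max,\D,k}}\right)^2\right).
\]
Forcing this quantity to be at most $\delta$ and solving the resulting inequality for $m$ yields exactly the lower bound on $m$ stated in Equation~\ref{eq:samplesize1_appendix}.

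Next I would condition on the good event $E = \{\forall K \in \Sigma^k : |f_S(K) - f_\D(K)| \leq \varepsilon/2\}$, which by the above has probability at least $1-\delta$, and show that $FK(S,k,\theta-\varepsilon/2)$ satisfies the three properties of an $\varepsilon$-approximation from Definition~\ref{def:approximation}. For the no-false-negatives property, any $K$ with $f_\D(K) \geq \theta$ satisfies $f_S(K) \geq f_\D(K) - \varepsilon/2 \geq \theta - \varepsilon/2$, so $(K, f_S(K))$ is included. For the no-spurious-low-frequency property, any $K$ reported has $f_S(K) \geq \theta - \varepsilon/2$, hence $f_\D(K) \geq f_S(K) - \varepsilon/2 \geq \theta - \varepsilon$. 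The frequency-accuracy property is immediate from $E$.

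There is no genuine technical obstacle here; the statement is essentially the textbook Hoeffding plus union-bound template. The only subtlety worth being careful about is that the relevant random variable is $f_S(K)$ viewed as a sum of $m$ independent per-read contributions, each taking values in $[0, \ell_{\max,\D,k}/(m\ell_{\D,k})]$ rather than in $[0,1/m]$; this is exactly why the factor $(\ell_{\max,\D,k}/\ell_{\D,k})^2$ appears in the sample-size bound, and this setup was already handled in the proof of Proposition~\ref{prop:hoeffding}. Once that is in place, the remaining algebra to invert the exponential inequality for $m$ and the case analysis verifying Definition~\ref{def:approximation} are both routine.
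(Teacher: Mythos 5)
Your proposal is correct and follows essentially the same route as the paper's proof: apply Proposition~\ref{prop:hoeffding} to each of the at most $\sigma^k$ $k$-mers, take a union bound so that the uniform deviation event $|f_S(K)-f_\D(K)|\le \varepsilon/2$ holds simultaneously with probability at least $1-\delta$ for the stated $m$, and then verify the three properties of Definition~\ref{def:approximation} exactly as you do. Your remark about the per-read contributions lying in $[0,\ell_{\max,\D,k}/(m\ell_{\D,k})]$ is precisely the point handled in the paper's proof of Proposition~\ref{prop:hoeffding}, so nothing is missing.
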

\begin{proof}
	Let $E_K$ be the event ``$|f_S(K) - f_\D(K) | \leq \frac{\varepsilon}{2}$" for a $k$-mer $K$. By the choice of $m$ and Proposition~\ref{prop:hoeffding} we have that the probability of the complementary event $\overline{E}_K$ of $E_K$ is
	\begin{equation*}
	     \Pr(\overline{E}_K) = \Pr  \left( |f_S(K) - f_\D(K) | \geq \frac{\varepsilon}{2} \right) = 2 \exp\left( -   \frac{1}{2} m \varepsilon^2 \left( \frac{\ell_{\D,k} }{ \ell_{\max,\D,k}}\right)^2 \right) \leq \frac{\delta}{\sigma^k}.
	\end{equation*}
	Now, by applying the union bound, the probability that for at least one $k$-mer $K$ of $\Sigma^k$ the event $\overline{E}_K$ holds is bounded by $\sum_{K \in \Sigma^k} \Pr(\overline{E}_K) \leq \delta$. Thus, the probability that events $E_K$ simultaneously hold for all $k$-mers $K$ in $\Sigma^k$ is at least $1-\delta$. 
	
	Now we prove that, with probability at least $1-\delta$, $FK(S,k,\theta-\varepsilon/2)$ is an  $\varepsilon$-approximation of $FK(\D,k,\theta)$, when, with probability at least $1-\delta$, ``$|f_S(K) - f_\D(K) | \leq \frac{\varepsilon}{2}$" for all $k$-mers $K$. Note that the third property of Definition~\ref{def:approximation} is already satisfied. Let $K$ be a $k$-mer of $FK(\D,k,\theta)$, that is $f_\D(K) \geq \theta$. Given that $f_S(K) \geq f_\D(K) - \varepsilon/2$, we have $f_S(K) \geq \theta - \varepsilon/2$ and the first property of Definition~\ref{def:approximation} holds. Combining $f_\D(K) \geq f_S(K) - \varepsilon/2$ and $f_S(K) \geq \theta - \varepsilon/2$, we have $f_\D(K) \geq \theta - \varepsilon$ and the second property of Definition~\ref{def:approximation} holds. 
\end{proof}

The previous theorem gives us the following simple procedure for approximating the set of frequent $k$-mers with guarantees on the quality of the solution: build a sample $S$ of $m \geq \frac{2}{\varepsilon^2} \left(\frac{\ell_{\max,\D,k}}{\ell_{\D,k}} \right)^2 \left( \ln \left( 2 \sigma^k\right)+\ln\left(\frac{1}{\delta} \right) \right)$ reads from $\D$, and output the set $FK(S,k,\theta-\varepsilon/2)$ which is an $\varepsilon$-approximation of $FK(\D,k,\theta)$ with probability at least $1-\delta$. Since the frequencies of $k$-mers we are estimating are small, then $\epsilon$ must be set to a small value. This typically results in a sample size $m$ larger than $|\D|$, making useless the sampling approach.

\section{Analysis of the First Improvement: A Pseudodimension-based Algorithm for $k$-mers Approximation by Sampling Reads}
\label{sec:advancedbounds}
In this section we prove Proposition \ref{prop:pseudobound} and Proposition  \ref{prop:samplebound2}, which here corresponds to Proposition \ref{prop:pseudobound_appendix} and Proposition \ref{prop:pseudo_reads_samplesize}, respectively. In order to help the reader to avoid too many jumps to the main text, we reintroduce some important definitions and results.

Let $\mathcal{F}$ be a class of real-valued functions from a domain $X$ to $[a,b] \subset \mathbb{R}$. Consider, for each $f \in \mathcal{F}$, the subset of $X' = X \times [a,b]$ defined as $R_f = \{ (x,t) : t \leq f(x) \}$, and call it \emph{range}. Let $\mathcal{F}^+ = \{R_f , f\in \mathcal{F}\}$ be a \emph{range set} on $X'$, and its corresponding \emph{range space} $Q'$ be $Q'=(X',\mathcal{F}^+)$. 
We say that a subset $D \subset X'$ is \emph{shattered} by $\mathcal{F}^+$ if the size of the \emph{projection set} $proj_{\mathcal{F}^+}(D) = \{ r \cap D: r \in \mathcal{F}^+ \}$ is equal to $2^{|D|}$. The \emph{VC dimension} $VC(Q')$ of $Q'$ is the maximum size of a subset of $X'$ shattered by $\mathcal{F}^+$.
The \emph{pseudodimension} $PD(X,\mathcal{F})$ is then defined as the VC dimension of $Q'$: $PD(X,\mathcal{F}) = VC(Q')$.

Let $\pi$ the uniform distribution on $X$, and let $S$ be a sample of $X$ of size  $|S| = m$, with every element of $S$ sampled independently and uniformly at random from $X$. We define, $\forall f \in \mathcal{F}$, $f_S = \frac{1}{m} \sum_{x \in S}f(x)$ and $f_X =\E_{x\sim\pi}[f(x)]$. Note that $\E[f_S] = f_X$.
The following result relates the accuracy and confidence parameters $\varepsilon$,$\delta$ and the pseudodimension with the probability that the expected values of the functions in $\mathcal{F}$ are well approximated by their averages computed from a finite random sample.

\begin{proposition}[\cite{talagrand1994sharper,long1999complexity}]
	\label{prop:pseudosamplesize_appendix}
	Let $X$ be a domain and $\mathcal{F}$ be a class of real-valued functions from $X$ to $[a,b]$. Let $PD(X,\mathcal{F}) = VC(Q') \leq v$. There exist an absolute positive constant $c$ such that, for fixed $\varepsilon,\delta \in (0,1)$, if $S$ is a random sample of $m$ samples drawn independently and uniformly at random from $X$ with
	\begin{equation}
		m \geq \frac{c \left( b-a \right)^2}{\varepsilon^2}\left(v + \ln \left(\frac{1}{\delta}\right)\right)
	\end{equation}
	then, with probability $\geq 1-\delta$, it holds simultaneously $\forall f \in \mathcal{F}$ that $| f_S  - f_X | \leq \varepsilon$.
\end{proposition}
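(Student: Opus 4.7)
The plan is to follow the classical empirical process route---symmetrization, then chaining, then a pseudodimension-based covering number estimate, concluded by a high-probability upgrade via an exponential concentration inequality. I will use $PD(X,\mathcal{F}) \leq v$ throughout and track the dependence on $b-a$, $\varepsilon$, $\delta$ to recover the stated bound on $m$.

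First I would symmetrize: introducing an independent ghost sample $S'$ of size $m$ drawn from $\pi$ and Rademacher signs $\sigma_1,\dots,\sigma_m \in \{\pm 1\}$, a standard double-sample argument yields, for $m \geq 2/\varepsilon^2$,
\[
\Pr\Bigl(\sup_{f \in \mathcal{F}} |f_S - f_X| > \varepsilon\Bigr) \leq 2\,\Pr\Bigl(\sup_{f \in \mathcal{F}} \Bigl|\frac{1}{m}\sum_{i=1}^m \sigma_i f(x_i)\Bigr| > \varepsilon/4\Bigr).
\]
This reduces uniform deviation control to bounding a Rademacher process indexed by $\mathcal{F}$. I would then apply Dudley's entropy integral: conditionally on $S$,
\[
\E \sup_{f \in \mathcal{F}} \Bigl|\frac{1}{m}\sum_{i=1}^m \sigma_i f(x_i)\Bigr| \leq \frac{K}{\sqrt{m}} \int_0^{b-a} \sqrt{\log N(u,\mathcal{F},L_2(P_m))}\,du,
\]
where $N(u,\mathcal{F},L_2(P_m))$ is the $u$-covering number of $\mathcal{F}$ under the empirical $L_2$ pseudometric on $S$.

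Third, I would bring in the pseudodimension through the Haussler--Pollard covering-number bound: whenever $PD(X,\mathcal{F}) \leq v$, one has $N(u,\mathcal{F},L_2(P_m)) \leq C_1 (C_2 (b-a)/u)^{2v}$. Plugging this estimate into the chaining integral and integrating gives $\E \sup \leq K'(b-a)\sqrt{v/m}$. Finally, a Talagrand/Bousquet exponential concentration inequality for empirical processes of bounded functions upgrades the expected supremum into a high-probability statement: with probability $\geq 1 - \delta$, the sup deviation is at most of order $(b-a)\sqrt{(v+\log(1/\delta))/m}$. Requiring this quantity to be at most $\varepsilon$ and solving for $m$ yields the advertised sample complexity $m \geq c(b-a)^2(v + \ln(1/\delta))/\varepsilon^2$ for an absolute constant $c$.

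The main obstacle I expect is the pseudodimension-to-covering-numbers passage: extending Sauer--Shelah-type combinatorial arguments from binary-valued to real-valued function classes so that the pseudodimension controls the $L_2$ covering numbers with only polynomial-in-$1/u$ growth. This is the technical heart of the result---it is what allows the combinatorial dimension $v$ (potentially much smaller than $\ln |\mathcal{F}|$) to replace a naive union bound over $\mathcal{F}$, and it is the ingredient that ultimately delivers the $v + \ln(1/\delta)$ dependence rather than $\ln |\mathcal{F}| + \ln(1/\delta)$ one would get from Hoeffding plus union bound (as in Proposition~\ref{prop:unionboundsample}).
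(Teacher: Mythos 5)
The paper does not prove this proposition at all: it is imported verbatim from the cited references (Talagrand 1994; Long 1999), so there is no in-paper argument to compare against. Your outline---symmetrization, Dudley chaining, Haussler's pseudodimension-based covering-number bound, and an exponential concentration step to trade the expected supremum for a $\ln(1/\delta)$ tail---is precisely the architecture used in those sources to obtain the sharp $v+\ln(1/\delta)$ (rather than $v\ln(1/\varepsilon)+\ln(1/\delta)$) dependence, and it is correct as a proof sketch, with the one caveat you already flag yourself: the packing/covering lemma that converts pseudodimension into polynomial $L_2$ covering numbers is invoked rather than proven, and it is the genuinely hard ingredient.
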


The universal constant $c$ has been experimentally estimated to be at most $0.5$ \cite{loffler2009shape}. 


Here we define the range space associated to $k$-mers and derive an upper bound to its pseudodimension. Finally, we derive a tighter sample size compared to the one proposed in Proposition~\ref{prop:unionboundsample}. 

The definition of the range space $Q'=(X',\mathcal{F}^+)$ associated to $k$-mers requires to define the domain $X$ and the class of real-valued functions $\mathcal{F}$.

\begin{definition}
	\label{def:pseudodefs}
	Let $k$ be a positive integer and $\D$ be a bag of $n$ reads.  Define the domain $X$ as the set of integers $\{1, \dots, n\}$, where every $i \in X$ corresponds to the $i$-th read of $\D$. Then define the family of real-valued functions $\mathcal{F} = \{ f_K , \forall K \in \Sigma^k\}$ where, for every $i \in X$ and for every $f_K \in \mathcal{F}$, the function $f_K(i)$ is the number of distinct positions in read $r_i$ where $k$-mer $K$ appears divided by the average size of the multiset of $k$-mers that appear in a read of $\D$: $f_K(i) = \sum_{j=0}^{n_i-k} \frac{  \phi_{r_i,K}(j)}{\ell_{\D,k}}$. Therefore $f_K(i) \in [0 ,\frac{\ell_{\max,\D,k}}{\ell_{\D,k}} ]$.  For each $f_K \in \mathcal{F}$, the subset of $X' = X \times [0 ,\frac{\ell_{\max,\D,k}}{\ell_{\D,k}} ]$ defined as $R_{f_K} = \{ (i,t) : t \leq f_K(i) \}$ is the associated range. Let $\mathcal{F}^+ = \{R_{f_K} , f_K \in \mathcal{F}\}$ be the range set on $X'$, and its corresponding range space $Q'$ be $Q'=(X',\mathcal{F}^+)$. 
\end{definition}

A trivial upper bound to $PD(X , \mathcal{F})$ is given by $PD(X , \mathcal{F}) \leq \lfloor \log_2 |\mathcal{F}| \rfloor =\lfloor  \log_2 \sigma^k \rfloor$. Before proving a tighter bound to $PD(X , \mathcal{F})$, we first state a technical Lemma (Lemma 3.8 from~\cite{riondato2018abra}.

\begin{lemma}
	\label{lemma:lemmapseudo}
	Let $B \subseteq X'$ be a set that is shattered by $\mathcal{F}^+$. Then $B$ does not contain any element in the form $(i,0)$, for any $i \in X$.
\end{lemma}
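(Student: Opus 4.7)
The plan is to argue by contradiction, exploiting a structural asymmetry in the definition of the ranges $R_{f_K}$: because every function $f_K \in \mathcal{F}$ takes values in $[0, \ell_{\max,\D,k}/\ell_{\D,k}]$, the lower endpoint $0$ of the fiber $\{i\} \times [0, \ell_{\max,\D,k}/\ell_{\D,k}]$ is always dominated by $f_K(i)$. Consequently, a point of the form $(i,0)$ lies in \emph{every} range in $\mathcal{F}^+$, and such a point cannot contribute to shattering.

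First I would verify the non-negativity: since $f_K(i) = \sum_{j=0}^{n_i - k} \phi_{r_i,K}(j)/\ell_{\D,k}$ is a sum of non-negative indicators divided by a positive constant, we have $f_K(i) \ge 0$ for every $i \in X$ and every $K \in \Sigma^k$. From the definition $R_{f_K} = \{(x,t) : t \le f_K(x)\}$, the inequality $0 \le f_K(i)$ then gives $(i,0) \in R_{f_K}$ for every $f_K \in \mathcal{F}$.

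Next I would use this to bound the projection. Suppose for contradiction that $B \subseteq X'$ is shattered by $\mathcal{F}^+$ and contains a point $p = (i,0)$ for some $i \in X$. Then for every $r \in \mathcal{F}^+$, the intersection $r \cap B$ contains $p$, so every element of the projection set $\mathrm{proj}_{\mathcal{F}^+}(B)$ contains $p$. Therefore
\[
|\mathrm{proj}_{\mathcal{F}^+}(B)| \le |\{A \subseteq B : p \in A\}| = 2^{|B|-1} < 2^{|B|},
\]
contradicting the definition of shattering, which requires $|\mathrm{proj}_{\mathcal{F}^+}(B)| = 2^{|B|}$.

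There is essentially no technical obstacle here; the statement is almost a definitional observation, and the only thing one must be careful about is confirming that the functions $f_K$ are genuinely non-negative (as opposed to, e.g., allowing negative values that would put $(i,0)$ outside some range). Once non-negativity is established, the one-line counting argument above closes the proof. The lemma will then be used in the subsequent proof of Proposition~\ref{prop:pseudobound_appendix} to restrict attention to shattered sets whose second coordinates are all strictly positive, which is what enables tying the pseudodimension to $\log_2(\ell_{\max,\D,k})$ rather than to $\log_2 \sigma^k$.
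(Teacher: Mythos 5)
Your proof is correct: since every $f_K$ is non-negative, $(i,0)$ belongs to every range $R_{f_K}$, so no subset of $B$ omitting that point can appear in the projection, and shattering fails. The paper itself does not prove this lemma (it imports it as Lemma~3.8 of~\cite{riondato2018abra}), and your argument is exactly the standard one behind that cited result, so there is nothing to add.
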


\begin{proposition}
	\label{prop:pseudobound_appendix}
	Let $\D$ be a bag of $n$ reads, $k$ a positive integer, $X$ be the domain and $\mathcal{F}$ be the family of real-valued functions defined in Definition \ref{def:pseudodefs}. Then the pseudodimension $PD(X , \mathcal{F})$ satisfies 
	\begin{equation}
		PD(X , \mathcal{F}) \leq \lfloor \log_2(\ell_{max,\D,k}) \rfloor + 1.
	\end{equation}
\end{proposition}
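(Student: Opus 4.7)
The plan is to bound the size $d$ of any subset of $X'$ that is shattered by $\mathcal{F}^+$, directly by a counting argument on $k$-mer occurrences. Let $B = \{(i_1,t_1),\dots,(i_d,t_d)\}$ be shattered. By Lemma~\ref{lemma:lemmapseudo}, every $t_h > 0$, so the integer threshold $c_h = \lceil t_h\,\ell_{\D,k}\rceil$ is at least $1$; the range condition ``$t_h \le f_K(i_h)$'' is then equivalent to ``$K$ occurs at least $c_h$ times in $r_{i_h}$,'' i.e.\ $o_{r_{i_h}}(K) \ge c_h$.

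First I would argue that we may assume $i_1,\dots,i_d$ are all distinct. If some read index $i$ appears $d_i \ge 2$ times among the $i_h$ with thresholds $t_{h_1} < \dots < t_{h_{d_i}}$, then as $K$ varies the induced pattern on those $d_i$ points is monotone, taking only $d_i+1$ values (an initial segment determined by $f_K(i)$). Shattering those $d_i$ points requires $2^{d_i} \le d_i+1$, forcing $d_i \le 1$.

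Now for each $S \subseteq [d]$ pick a witness $k$-mer $K_S$ with $o_{r_{i_h}}(K_S) \ge c_h$ iff $h \in S$. Different patterns are realised by different $k$-mers, so the $2^d$ $k$-mers $\{K_S\}_{S\subseteq[d]}$ are pairwise distinct. Fix any coordinate $h^\ast \in [d]$. On the one hand, summing occurrences in read $r_{i_{h^\ast}}$ over those $2^{d-1}$ patterns that contain $h^\ast$ gives
\[
\sum_{S \ni h^\ast} o_{r_{i_{h^\ast}}}(K_S) \;\ge\; 2^{d-1}\, c_{h^\ast} \;\ge\; 2^{d-1}.
\]
On the other hand, since the $K_S$ are distinct $k$-mers and $\sum_{K \in \Sigma^k} o_{r_{i_{h^\ast}}}(K) = n_{i_{h^\ast}} - k + 1 \le \ell_{\max,\D,k}$, the same sum is at most $\ell_{\max,\D,k}$. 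Combining the two inequalities yields $2^{d-1} \le \ell_{\max,\D,k}$, i.e.\ $d \le \lfloor\log_2 \ell_{\max,\D,k}\rfloor + 1$, which is exactly the claimed bound.

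The step I expect to be most delicate is justifying the reduction to distinct indices, because $B$ is an arbitrary shattered subset of $X'$ and could in principle pile several of its points on the same read. The monotonicity observation above handles this cleanly, but it is the one place where the real-valued (rather than binary) structure of the $f_K$ is used nontrivially; the rest is a clean pigeonhole on the total $k$-mer mass $\sum_K o_{r_i}(K) = n_i - k + 1$ in each read, which automatically brings in the quantity $\ell_{\max,\D,k}$ that appears in the target bound.
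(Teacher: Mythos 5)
Your proof is correct and follows essentially the same route as the paper's: both use Lemma~\ref{lemma:lemmapseudo} to force strictly positive thresholds, then fix one element of the shattered set and observe that the $2^{d-1}$ subsets containing it must be witnessed by $2^{d-1}$ distinct $k$-mers all occurring in that single read, whose $k$-mer capacity $n_i-k+1\le\ell_{\max,\D,k}$ gives the bound. The only differences are cosmetic: you pigeonhole on the total occurrence mass $\sum_K o_{r_i}(K)=n_i-k+1$ where the paper counts distinct $k$-mers appearing in $r_i$, and your preliminary reduction to distinct read indices, while valid, is not actually needed since the counting argument is applied to a single fixed coordinate.
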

\begin{proof}
	From the definition of pseudodimension we have $PD(X,\mathcal{F}) = VC(Q')$, therefore showing $VC(Q') = v \leq  \lfloor \text{log}_2(\ell_{max,\D,k}) \rfloor + 1$ is sufficient for the proof.
	An immediate consequence of Lemma \ref{lemma:lemmapseudo} is that for all elements $(i,t)$ of any set $B$ that is shattered by $\mathcal{F}^+$ it holds $t \geq 1/\ell_{\D,k}$.
	Now we denote an integer $v$ and suppose that $VC(Q') = v$. Thus, there must exist a set $B \subseteq X'$ with $|B| = v$ which needs to be shattered by $\mathcal{F}^+$. This means that $2^v$ subsets of $B$ must be in projection of $\mathcal{F}^+$ on $B$. If this is true, then every element of $B$ needs to belong to exactly $2^{v-1}$ such sets. This means that for a given $(i,t)$ of $B$, all the projections of $2^{v-1}$ elements of $\mathcal{F}^+$ contain $(i,t)$. Since $t \geq 1/\ell_{\D,k}$, there need to exist $2^{v-1}$ distinct $k$-mers appearing at least once in the read $r_i$. More formally, it needs to hold $n_i - k +1 \geq 2^{v-1}$, that implies $v \leq \lfloor \text{log}_2(n_i - k +1 ) \rfloor + 1$, $\forall (i,t) \in B$. Since $n_i - k +1 \leq \ell_{max,\D,k}$ for each $(i,t) \in B$, then $v \leq \lfloor \text{log}_2(\ell_{max,\D,k}) \rfloor + 1$, and the thesis holds.
\end{proof}

Based on the previous result, we obtain the following.

\begin{proposition}
	\label{prop:pseudo_reads_samplesize}
	Consider a sample $S$ of $m$ reads from $\D$. For fixed frequency threshold $\theta \in (0,1]$, error parameter $\varepsilon \in (0,\theta)$, and confidence parameter $\delta \in (0,1)$, if 
	\begin{equation}
		\label{eq:samplesize2_appendix}
		m \geq \frac{2}{\varepsilon^2} \left( \frac{\ell_{\max,\D,k}}{\ell_{\D,k}}\right)^2 \left( \lfloor \log_2\min( 2 \ell_{\max,\D,k} , \sigma^k ) \rfloor+\ln\left(\frac{1}{\delta} \right) \right)
	\end{equation}
	then, with probability $\ge 1- \delta$, $FK(S,k,\theta-\varepsilon/2)$ is an $\varepsilon$-approximation of $FK(\D,k,\theta)$.
\end{proposition}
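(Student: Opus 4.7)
The plan is to apply Proposition~\ref{prop:pseudosamplesize_appendix} to the range space $Q' = (X', \mathcal{F}^+)$ of Definition~\ref{def:pseudodefs} and then translate the resulting uniform convergence statement into the three conditions of Definition~\ref{def:approximation}. The crucial observation is that, under Definition~\ref{def:pseudodefs}, the empirical average of $f_K$ over a uniform sample $S$ of $m$ reads is exactly
\[
f_S := \frac{1}{m}\sum_{i \in S} f_K(i) \;=\; \frac{1}{m\ell_{\D,k}}\sum_{r_i \in S}\sum_{j=0}^{n_i-k}\phi_{r_i,K}(j) \;=\; f_S(K),
\]
while by Proposition~\ref{prop:unbiased} its expectation equals $f_\D(K)$. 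Thus uniform convergence of $f_S$ to $f_X$ over $\mathcal{F}$ is precisely uniform convergence of the sample frequencies $f_S(K)$ to the true frequencies $f_\D(K)$ over all $K \in \Sigma^k$.

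First, I would instantiate Proposition~\ref{prop:pseudosamplesize_appendix} with $[a,b] = [0, \ell_{\max,\D,k}/\ell_{\D,k}]$, so $b-a = \ell_{\max,\D,k}/\ell_{\D,k}$, and with accuracy parameter $\varepsilon/2$ in place of $\varepsilon$; this changes the numerator from $c(b-a)^2$ to $4c(b-a)^2$. Using the empirically estimated constant $c \leq 1/2$ from~\cite{loffler2009shape}, the factor becomes $2$, matching the leading constant in Eq.~\ref{eq:samplesize2_appendix}. For the pseudodimension $v$, I would combine the trivial bound $PD(X,\mathcal{F}) \leq \lfloor \log_2 \sigma^k\rfloor$ with Proposition~\ref{prop:pseudobound_appendix}, which gives $PD(X,\mathcal{F}) \leq \lfloor \log_2 \ell_{\max,\D,k}\rfloor + 1 = \lfloor \log_2(2\ell_{\max,\D,k})\rfloor$, so that one may take $v = \lfloor \log_2\min(2\ell_{\max,\D,k},\sigma^k)\rfloor$. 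Plugging this $v$ into the bound of Proposition~\ref{prop:pseudosamplesize_appendix} yields exactly the hypothesis on $m$ in Eq.~\ref{eq:samplesize2_appendix}. Hence, with probability at least $1-\delta$, the event
\[
E := \bigl\{\, |f_S(K) - f_\D(K)| \leq \varepsilon/2 \text{ for every } K \in \Sigma^k \,\bigr\}
\]
holds.

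Conditioning on $E$, I would then verify the three properties of Definition~\ref{def:approximation} for $\mathcal{C} = FK(S,k,\theta-\varepsilon/2)$, exactly as in the proof of Proposition~\ref{prop:unionboundsample}. If $K \in FK(\D,k,\theta)$, then $f_\D(K) \geq \theta$, so $f_S(K) \geq \theta - \varepsilon/2$ and $K \in \mathcal{C}$ (no false negatives). Conversely, if $K \in \mathcal{C}$ then $f_S(K) \geq \theta - \varepsilon/2$, hence $f_\D(K) \geq \theta - \varepsilon$ (no $k$-mers far below threshold). The third property, $|f_\D(K) - f_S(K)| \leq \varepsilon/2$, is the event $E$ itself. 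Combining these with the probability-$\geq 1-\delta$ bound on $E$ proves the claim.

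The main obstacle, and indeed the only non-mechanical part of the argument, was already handled in Proposition~\ref{prop:pseudobound_appendix}: obtaining the $\log_2(\ell_{\max,\D,k})$-type bound on the pseudodimension, which is what allows the bound here to replace $\ln(2\sigma^k)$ in Proposition~\ref{prop:unionboundsample} by the much smaller $\lfloor \log_2 \min(2\ell_{\max,\D,k},\sigma^k)\rfloor$. Given that ingredient, the remaining work is bookkeeping on the constants (in particular, tracking the factor of $4$ from rescaling $\varepsilon \to \varepsilon/2$ against the value of $c$) and a short deterministic argument from uniform convergence to the $\varepsilon$-approximation guarantee.
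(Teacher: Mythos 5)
Your proposal is correct and follows essentially the same route as the paper's proof: instantiate the pseudodimension-based uniform convergence bound (Proposition~\ref{prop:pseudosamplesize_appendix}) on the range space of Definition~\ref{def:pseudodefs}, identify $f_S$ with $f_S(K)$ and $f_X$ with $f_\D(K)$, take $v = \lfloor \log_2\min(2\ell_{\max,\D,k},\sigma^k)\rfloor$ by combining the trivial bound with Proposition~\ref{prop:pseudobound_appendix}, and then repeat the deterministic argument of Proposition~\ref{prop:unionboundsample} to obtain the three properties of Definition~\ref{def:approximation}. Your bookkeeping of the constants (the factor of $4$ from rescaling $\varepsilon \to \varepsilon/2$ against $c \leq 1/2$) is in fact slightly more explicit than what the paper writes out.
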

\begin{proof}
	Let consider the domain $X$ and the class of real-valued functions $\mathcal{F}$ defined in Definition~\ref{def:pseudodefs}. For a given function $f \in \mathcal{F}$ (so for a given $k$-mer $K$), we have for $f_X = \E_{x\sim\pi}[f(x)]$ that
	\begin{equation*}
	 	f_X = \E_{r_i\sim\D}[f_K(i)]  =  \E_{r_i\sim\D} \left[ \sum_{j=0}^{n_i-k} \frac{  \phi_{r_i,K}(j)}{\ell_{\D,k}} \right] = \frac{1}{\ell_{\D,k}} \sum_{r_i \in \D} \frac{1}{n} \sum_{j=0}^{n_i-k}  \phi_{r_i,K}(j) = \frac{o_D(K)}{n \ell_{\D,k}} = f_\D(K), 
	\end{equation*}
	and for $f_S = \frac{1}{m} \sum_{x \in S}f(x)$ that
	\begin{equation*}
		f_S = \frac{1}{m} \sum_{r_i \in S} f_K(i) = \frac{1}{m} \sum_{r_i \in S} \sum_{j=0}^{n_i-k} \frac{  \phi_{r_i,K}(j)}{\ell_{\D,k}} = \frac{o_S(K)}{m \ell_{\D,k}} = f_S(K).
	\end{equation*}
	Combining the trivial bound $PD(X , \mathcal{F}) \leq \lfloor  \log_2 \sigma^k \rfloor$ with  Propositions~\ref{prop:pseudosamplesize} and~\ref{prop:pseudobound} we have that, with probability at least $1-\delta$, $|f_S(K) - f_\D(K)| \leq \varepsilon/2$ simultaneously holds for every $k$-mer $K$.
	
	Now, as for Proposition~\ref{prop:unionboundsample}, we prove that, with probability at least $1-\delta$, $FK(S,k,\theta-\varepsilon/2)$ is an $\varepsilon$-approximation of $FK(\D,k,\theta)$, when, with probability at least $1-\delta$, ``$|f_S(K) - f_\D(K) | \leq \frac{\varepsilon}{2}$" for all $k$-mers $K$. Note that the third property of Definition~\ref{def:approximation} is already satisfied. Let $K$ be a $k$-mer of $FK(\D,k,\theta)$, that is $f_\D(K) \geq \theta$. Given that $f_S(K) \geq f_\D(K) - \varepsilon/2$, we have $f_S(K) \geq \theta - \varepsilon/2$ and the first property of Definition~\ref{def:approximation} holds. Combining $f_\D(K) \geq f_S(K) - \varepsilon/2$ and $f_S(K) \geq \theta - \varepsilon/2$, we have $f_\D(K) \geq \theta - \varepsilon$ and the second property of Definition~\ref{def:approximation} holds. 
\end{proof}

This bound significantly improves on the result of Proposition~\ref{prop:unionboundsample}, since the factor $\ln(2 \sigma^k)$ has been reduced to $\lfloor \log_2\min( 2 \ell_{\max,\D,k} , \sigma^k ) \rfloor$. Finally, by taking a sample $S$ of size $m$ according to Proposition~\ref{prop:pseudo_reads_samplesize} and by extracting the set $FK(S,k,\theta-\varepsilon/2)$ we get an $\varepsilon$-approximation of $FK(\D,k,\theta)$ with probability at least $1-\delta$. However,  also this approach typically results in a sample size $m$ larger than $|\D|$. 

\section{Analysis of the Main Technical Result (Proposition \ref{prop:sample_bound3})}
\label{appx:sec:advancedbounds_bags}

This section is dedicated to prove our main technical result on which \algname\ is built, i.e. Proposition \ref{prop:sample_bound3} (here it corresponds to Proposition \ref{prop:pseudo_reads_samplesize_bags_appendix}). We also prove Proposition \ref{prop:pseudobound_bags_maintext} of the main text (here Proposition \ref{prop:pseudobound_bags}) and some additional but necessary results. As for the previous section, we reintroduce some important definitions and results.

We define $I_{\ell} = \{i_1 , i_2 , \dots , i_{\ell} \}$ as a \emph{bag} of $\ell$ indexes of reads of $\D$ chosen uniformly at random, with replacement, from the set $\{1,\dots,n\}$. Then we define an $\ell$-\emph{reads sample} $S_\ell$ as a bag of $m$ bags of $\ell$ reads $S_\ell = \{I_{\ell,1} , \dots , I_{\ell,m} \}$.
The definition of a new range space $Q'=(X',\mathcal{F}^+)$ associated to $k$-mers requires to define  a new domain $X$ and a new class of real-valued functions $\mathcal{F}$.

\begin{definition}
	\label{def:pseudodefs_bags}
	Let $k$ be a positive integer and $\D$ be a bag of $n$ reads.  Define the domain $X$ as the set of bags of $\ell$ indexes of reads of $\D$.  Then define the family of real-valued functions $\mathcal{F} = \{ f_{K,\ell}, \forall K \in \Sigma^k\}$ where, for every $I_\ell \in X$ and for every $f_{K,\ell} \in \mathcal{F}$, we have $f_{K,\ell}(I_\ell) = \min(1, o_{I_{\ell}}(K)) / (\ell \ell_{\D,k})$, where $o_{I_{\ell}}(K) = \sum_{i \in I_{\ell}} \sum_{j=0}^{n_i-k} \phi_{r_i,K}(j) $ counts the number of occurrences of $K$ in all the $\ell$ reads of $I_{\ell}$.
	Therefore $f_{K,\ell}(I_\ell) \in \{0 ,\frac{1}{\ell \ell_{\D,k}}\}$ $\forall f_{K,\ell}$ and $\forall I_\ell$.  
	For each $f_{K,\ell} \in \mathcal{F}$, the subset of $X' = X \times \{0 ,\frac{1}{\ell \ell_{\D,k}}\}$ defined as $R_{f_{K,\ell}} = \{ (I_\ell,t) : t \leq f_{K,\ell}(I_\ell) \}$ is the associated range. Let $\mathcal{F}^+ = \{R_{f_{K,\ell}} , f_{K,\ell} \in \mathcal{F}\}$ be the range set on $X'$, and its corresponding range space $Q'$ be $Q'=(X',\mathcal{F}^+)$. 
\end{definition}

Note that, for a given bag $I_{\ell}$, the functions $f_{K,\ell}$ are then biased if $K$ appears more than $1$ times in all the $\ell$ reads of $I_{\ell}$. 
We now prove an upper bound to the pseudodimension $PD(X , \mathcal{F})$.

\begin{proposition}
	\label{prop:pseudobound_bags}
	Let $\D$ be a bag of $n$ reads, $k$ a positive integer, $X$ be the domain and $\mathcal{F}$ be the family of real-valued functions defined in Definition \ref{def:pseudodefs_bags}. Then the pseudodimension $PD(X , \mathcal{F})$ satisfies 
	\begin{equation}
	PD(X , \mathcal{F}) \leq \lfloor \text{log}_2(\ell \ell_{max,\D,k}) \rfloor + 1.
	\end{equation}
\end{proposition}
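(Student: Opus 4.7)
My plan is to imitate the proof of Proposition~\ref{prop:pseudobound_appendix}, adapting it to the new domain $X$ of bags of $\ell$ indexes and to the two-valued functions $f_{K,\ell}$. Since $PD(X,\mathcal{F}) = VC(Q')$, it suffices to bound $VC(Q')$. Let $B \subseteq X'$ be a subset shattered by $\mathcal{F}^+$, with $|B| = v$, and aim to show $v \leq \lfloor \log_2(\ell \ell_{\max,\D,k}) \rfloor + 1$.

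First I would establish the analog of Lemma~\ref{lemma:lemmapseudo}: no element of $B$ can be of the form $(I_\ell, 0)$, because such an element lies in \emph{every} range $R_{f_{K,\ell}}$ (as $0 \leq f_{K,\ell}(I_\ell)$ always), and hence could never be separated out by any projection. Because $f_{K,\ell}$ takes only the two values $0$ and $\frac{1}{\ell \ell_{\D,k}}$, it immediately follows that every element $(I_\ell,t) \in B$ satisfies $t = \frac{1}{\ell \ell_{\D,k}}$; in particular the condition $t \leq f_{K,\ell}(I_\ell)$ reduces to $f_{K,\ell}(I_\ell) = \frac{1}{\ell \ell_{\D,k}}$, i.e.\ $o_{I_\ell}(K) \geq 1$.

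Next I would use the shattering property. Since $B$ is shattered, every element of $B$ belongs to exactly $2^{v-1}$ of the projected sets. Fix an arbitrary $(I_\ell, t) \in B$: there must exist at least $2^{v-1}$ distinct ranges $R_{f_{K,\ell}}$ containing $(I_\ell,t)$, hence at least $2^{v-1}$ distinct $k$-mers $K$ such that $o_{I_\ell}(K) \geq 1$, i.e.\ at least $2^{v-1}$ distinct $k$-mers appearing in the $\ell$ reads indexed by $I_\ell$. But the total multiset of $k$-mer occurrences in $I_\ell$ has size $\sum_{i \in I_\ell}(n_i - k + 1) \leq \ell \cdot \ell_{\max,\D,k}$, so the number of \emph{distinct} $k$-mers appearing in $I_\ell$ is at most $\ell \cdot \ell_{\max,\D,k}$. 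Therefore $2^{v-1} \leq \ell \cdot \ell_{\max,\D,k}$, giving $v \leq \lfloor \log_2(\ell \ell_{\max,\D,k}) \rfloor + 1$, as required.

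I do not anticipate any serious obstacle: the only delicate point is verifying that Lemma~\ref{lemma:lemmapseudo} transfers verbatim to this range space (it does, since the argument only uses that $t \leq f(x)$ is trivially satisfied when $t = 0$ and $f \geq 0$), and that the bound on the number of distinct $k$-mers in a bag is the trivial sum-of-lengths bound $\ell \cdot \ell_{\max,\D,k}$ rather than something that would depend on the particular indexes chosen. Both are routine, so the proposition follows directly.
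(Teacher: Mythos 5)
Your proof is correct and follows essentially the same route as the paper's: transfer Lemma~\ref{lemma:lemmapseudo} to the new range space to rule out elements of the form $(I_\ell,0)$, then use the shattering count to force $2^{v-1}$ distinct $k$-mers to appear in a single bag, which is bounded by the $\sum_{i\in I_\ell}(n_i-k+1)\leq \ell\,\ell_{\max,\D,k}$ positions available. The only (harmless) difference is that you additionally observe $t$ must equal $\frac{1}{\ell \ell_{\D,k}}$ exactly because $f_{K,\ell}$ is two-valued, where the paper only uses $t\geq 1/(\ell\ell_{\D,k})$.
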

\begin{proof}
	From the definition of pseudodimension we have $PD(X,\mathcal{F}) = VC(Q')$, therefore showing $VC(Q') = v \leq  \lfloor \text{log}_2(\ell \ell_{max,\D,k}) \rfloor + 1$ is sufficient for the proof.
	Since Lemma \ref{lemma:lemmapseudo} is also valid for the new definition of the range space $Q'=(X',\mathcal{F}^+)$, an immediate consequence is that for all elements $(i,t)$ of any set $B$ that is shattered by $\mathcal{F}^+$ it holds $t \geq 1/(\ell \ell_{\D,k})$.
	Now we denote an integer $v$ and suppose that $VC(Q') = v$. Thus, there must exist a set $B \subseteq X'$ with $|B| = v$ which needs to be shattered by $\mathcal{F}^+$. This means that $2^v$ subsets of $B$ must be in projection of $\mathcal{F}^+$ on $B$. If this is true, then every element of $B$ needs to belong to exactly $2^{v-1}$ such sets. This means that for a given $(I_\ell,t)$ of $B$, all the projections of $2^{v-1}$ elements of $\mathcal{F}^+$ contain $(I_\ell,t)$. Since $t \geq 1/(\ell \ell_{\D,k})$, there need to exist $2^{v-1}$ distinct $k$-mers appearing at least once in the bag of $\ell$ reads associated with $I_\ell$. More formally, it needs to hold $\sum_{i \in I_\ell}(n_i - k +1) \geq 2^{v-1}$, that implies $v \leq \lfloor \text{log}_2\sum_{i \in I_\ell}(n_i - k +1) \rfloor + 1$, $\forall (I_\ell,t) \in B$. Since $n_i - k +1 \leq \ell_{max,\D,k}$ for each $(I_\ell,t) \in B$ and $i \in I_\ell$, then $v \leq \lfloor \text{log}_2(\ell \ell_{max,\D,k}) \rfloor + 1$, and the thesis holds.
\end{proof}

Before showing an improved bound on the sample size, we need to define the frequency $\hat{f}_{S_\ell}(K)$ of a $k$-mer $K$ computed from the sample $S_\ell$:
\begin{equation}
	\label{eq:freq_bags_bias}
	\hat{f}_{S_\ell}(K) = \frac{1}{m}\sum_{I_{\ell,i} \in S_\ell} f_{K,\ell}(I_{\ell,i}),
\end{equation}
which is the bias version of 
\begin{equation}
	\label{eq:freq_bags_unbias}
	f_{S_\ell}(K) = \frac{1}{m}\sum_{I_{\ell,i} \in S_\ell} o_{I_{\ell}}(K) / (\ell \ell_{\D,k}).
\end{equation}
Note that $\E[f_{S_\ell}(K)] = f_\D(K)$. In order to find a relation between $\E[\hat{f}_{S_\ell}(K)]$ and $f_\D(K)$, we need the following proposition.

\begin{proposition}
	\label{prop:relation}
	Let $\tilde{f}_\D(K) = \sum_{r_i \in \D} \mathbbm{1}(K \in r_i) /n$ and $f_\D(K) = o_\D(K)/t_{\D,k}$. It holds that:
	\begin{equation}
		\frac{\ell_{\D,k}}{\ell_{\max,\D,k}} f_\D(K) \leq \tilde{f}_\D(K) \leq  \ell_{\D,k} f_\D(K).
	\end{equation}
\end{proposition}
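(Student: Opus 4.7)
The plan is to prove both inequalities by comparing, for each read $r_i$, the per-read indicator $\mathbbm{1}(K \in r_i)$ appearing in $\tilde{f}_\D(K)$ with the per-read occurrence count $\sum_{j=0}^{n_i-k} \phi_{r_i,K}(j)$ that aggregates into $o_\D(K)$. Both bounds then follow by summing over all reads in $\D$ and dividing by $n$, using the identity $t_{\D,k} = n\,\ell_{\D,k}$ so that $f_\D(K) = o_\D(K)/(n\,\ell_{\D,k})$.

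For the upper bound $\tilde{f}_\D(K) \leq \ell_{\D,k}\, f_\D(K)$, the key observation is that whenever $K$ appears in $r_i$, the number of occurrences $\sum_j \phi_{r_i,K}(j)$ is at least $1$. Hence
\[
\mathbbm{1}(K \in r_i) \;\leq\; \sum_{j=0}^{n_i-k} \phi_{r_i,K}(j).
\]
Summing over $r_i \in \D$ and dividing by $n$ yields $\tilde{f}_\D(K) \leq o_\D(K)/n = \ell_{\D,k}\,f_\D(K)$.

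For the lower bound $\tilde{f}_\D(K) \geq (\ell_{\D,k}/\ell_{\max,\D,k})\, f_\D(K)$, I would use the complementary direction: if $K$ does not appear in $r_i$ then $\sum_j \phi_{r_i,K}(j) = 0$, and if it does appear, the number of distinct positions where it occurs is at most $n_i - k + 1 \leq \ell_{\max,\D,k}$. Thus in every case
\[
\sum_{j=0}^{n_i-k} \phi_{r_i,K}(j) \;\leq\; \ell_{\max,\D,k}\,\mathbbm{1}(K \in r_i).
\]
Summing over reads gives $o_\D(K) \leq \ell_{\max,\D,k}\, n\, \tilde{f}_\D(K)$, which rearranges to $\tilde{f}_\D(K) \geq o_\D(K)/(n\,\ell_{\max,\D,k}) = (\ell_{\D,k}/\ell_{\max,\D,k})\,f_\D(K)$.

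There is no real obstacle here; the proof is essentially bookkeeping via the two pointwise bounds $\mathbbm{1}(K \in r_i) \leq \sum_j \phi_{r_i,K}(j) \leq \ell_{\max,\D,k}\,\mathbbm{1}(K \in r_i)$. The only thing to be careful about is the conversion between the two normalizations ($1/n$ versus $1/t_{\D,k}$), which introduces the $\ell_{\D,k}$ factor on the right-hand sides and explains why the bounds are tight precisely when all reads have length $\ell_{\max,\D,k}$ (making $\ell_{\D,k}/\ell_{\max,\D,k} = 1$).
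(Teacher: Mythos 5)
Your proof is correct and follows essentially the same route as the paper's: both arguments rest on the pointwise bounds $o_{r_i}(K)/\ell_{\max,\D,k} \leq \mathbbm{1}(K \in r_i) \leq o_{r_i}(K)$ per read, summed over $\D$ and renormalized via $t_{\D,k} = n\,\ell_{\D,k}$. No gaps.
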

\begin{proof}
	Let us rewrite $\tilde{f}_\D(K)$:
	\begin{equation}
		\tilde{f}_\D(K) = \sum_{r_i \in \D} \mathbbm{1}(K \in r_i) /n = \frac{\mathbbm{1}(K \in r_1)}{\ell_{\D,k}} \frac{\ell_{\D,k}}{n} + \dots +  \frac{\mathbbm{1}(K \in r_n)}{\ell_{\D,k}} \frac{\ell_{\D,k}}{n}.
	\end{equation}

	Since $\mathbbm{1}(K \in r_i) \leq o_{r_i}(K)$ for every $i \in \{1,\dots,n\}$, we have
	\begin{equation}
		\tilde{f}_\D(K) \leq  \frac{o_{r_1}(K)}{\ell_{\D,k}} \frac{\ell_{\D,k}}{n} + \dots +  \frac{o_{r_n}(K)}{\ell_{\D,k}} \frac{\ell_{\D,k}}{n} = \ell_{\D,k} f_\D(K). 
	\end{equation}

	Since $\mathbbm{1}(K \in r_i) \geq o_{r_i}(K)/\ell_{\max,\D,k}$ for every $i \in \{1,\dots,n\}$, we have
	\begin{equation}
		\tilde{f}_\D(K) \geq  \frac{o_{r_1}(K)}{n \ell_{\D,k}} \frac{\ell_{\D,k}}{\ell_{\max,\D,k}} + \dots +  \frac{o_{r_n}(K)}{n \ell_{\D,k}} \frac{\ell_{\D,k}}{\ell_{\max,\D,k}} = \frac{\ell_{\D,k}}{\ell_{\max,\D,k}} f_\D(K). 
	\end{equation}
\end{proof}

Now we show a relation between $\E[\hat{f}_{S_\ell}(K)]$ and $f_\D(K)$.
\begin{proposition}
	\label{prop:avg_bias_frequency}
	Let $\tilde{f}_\D(K) = \sum_{r_i \in \D} \mathbbm{1}(K \in r_i) /n$ and $f_\D(K) = o_\D(K)/t_{\D,k}$. Let  $S_\ell$ be a bag of $m$ bags of $\ell$ reads drawn from $\D$. Then:
	\begin{equation}
		\E[\hat{f}_{S_\ell}(K)] \geq \frac{1}{\ell \ell_{\D,k}} ( 1 - (1-\frac{\ell_{\D,k}}{\ell_{\max,\D,k}} f_\D(K))^{\ell}).
	\end{equation}
\end{proposition}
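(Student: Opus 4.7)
The plan is to first reduce the statement to a computation for a single bag, then evaluate the resulting probability explicitly, and finally apply Proposition~\ref{prop:relation} to convert from read-based to position-based frequency.

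First I would observe that, by linearity of expectation and the fact that the $m$ bags $I_{\ell,1},\dots,I_{\ell,m}$ are i.i.d., we have $\E[\hat{f}_{S_\ell}(K)] = \E[f_{K,\ell}(I_\ell)]$ for a single random bag $I_\ell$. Since $f_{K,\ell}(I_\ell) = \min(1, o_{I_\ell}(K))/(\ell \ell_{\D,k})$, the numerator is exactly the indicator $\mathbbm{1}\{K \text{ appears in at least one read of } I_\ell\}$. Therefore
\begin{equation*}
\E[\hat{f}_{S_\ell}(K)] = \frac{1}{\ell \ell_{\D,k}} \Pr(K \text{ appears in at least one read of } I_\ell).
\end{equation*}

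Next I would compute this probability using the fact that the $\ell$ reads composing $I_\ell$ are drawn independently and uniformly at random (with replacement) from $\D$. The probability that $K$ does not appear in a single uniformly drawn read is exactly $1 - \tilde{f}_\D(K)$, where $\tilde{f}_\D(K) = \sum_{r_i \in \D} \mathbbm{1}(K \in r_i)/n$. By independence of the $\ell$ draws, the probability that $K$ does not appear in any of the $\ell$ reads is $(1-\tilde{f}_\D(K))^\ell$, hence
\begin{equation*}
\E[\hat{f}_{S_\ell}(K)] = \frac{1}{\ell \ell_{\D,k}}\bigl(1 - (1-\tilde{f}_\D(K))^\ell\bigr).
\end{equation*}

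Finally I would apply the lower bound $\tilde{f}_\D(K) \geq \frac{\ell_{\D,k}}{\ell_{\max,\D,k}} f_\D(K)$ from Proposition~\ref{prop:relation}. Since the map $x \mapsto 1-(1-x)^\ell$ is monotonically increasing on $[0,1]$, replacing $\tilde{f}_\D(K)$ by the smaller quantity $\frac{\ell_{\D,k}}{\ell_{\max,\D,k}} f_\D(K)$ only decreases the right-hand side, yielding the claimed inequality. The only step that requires a small amount of care is verifying that $\frac{\ell_{\D,k}}{\ell_{\max,\D,k}} f_\D(K) \in [0,1]$ so that the monotonicity argument applies, which follows since $f_\D(K) \leq 1$ and $\ell_{\D,k} \leq \ell_{\max,\D,k}$. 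There is no real obstacle here: the calculation is short, and the nontrivial content was already isolated in Proposition~\ref{prop:relation}.
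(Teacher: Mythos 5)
Your proof is correct and follows essentially the same route as the paper's: rewrite $\E[\hat{f}_{S_\ell}(K)]$ as $\frac{1}{\ell\ell_{\D,k}}\Pr(o_{I_\ell}(K)>0)$, evaluate that probability as $1-(1-\tilde{f}_\D(K))^\ell$ using independence of the $\ell$ reads, and conclude via Proposition~\ref{prop:relation}. The only difference is that you make explicit the monotonicity of $x\mapsto 1-(1-x)^\ell$ and the domain check, which the paper leaves implicit.
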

\begin{proof}
	Let us rewrite $\E[\hat{f}_{S_\ell}(K)]$:
	\begin{equation}
		\E[\hat{f}_{S_\ell}(K)] = \frac{1}{\ell \ell_{\D,k}} \E [\min(1 , o_{I_{\ell}}(K))] = \frac{1}{\ell \ell_{\D,k}}  \Pr(o_{I_{\ell}}(K) > 0).
	\end{equation}
	
	Then, we have 
	
	\begin{equation}
		\E[\hat{f}_{S_\ell}(K)] = \frac{1}{\ell \ell_{\D,k}} Pr(o_{I_{\ell}}(K) > 0) =  \frac{1}{\ell \ell_{\D,k}} ( 1 - Pr(o_{I_{\ell}}(K) = 0)) =
	\end{equation}
	
	\begin{equation}
		= \frac{1}{\ell \ell_{\D,k}} ( 1 - \prod_{i \in I_{\ell}} \Pr(o_{r_i}(K) = 0))= \frac{1}{\ell \ell_{\D,k}} ( 1 - (1-\tilde{f}_\D(K))^{\ell}),
	\end{equation}
	and since $\tilde{f}_\D(K) \geq \frac{\ell_{\D,k}}{\ell_{\max,\D,k}} f_\D(K)$ by Proposition \ref{prop:relation}, the thesis holds.
\end{proof}

Let $\theta$ be a minimum frequency threshold. Using the previous proposition, if 
\begin{equation}
	f_\D(K) \geq \frac{\ell_{\max,\D,k}}{\ell_{\D,k}}(1- (1- \ell \ell_{\D,k} \theta)^{1/\ell})
\end{equation}
with $\ell \leq 1/(\ell_{\D,k} \theta)$, then $\E[\hat{f}_{S_\ell}(K)] \geq \theta$.

\begin{proposition}
	\label{prop:pseudo_reads_samplesize_bags_appendix}
	Let $k$ and $\ell$ be two positive integers. Consider a sample $S_\ell$ of $m$ bags of $\ell$ reads from $\D$. For fixed frequency threshold $\theta \in (0,1]$, error parameter $\varepsilon \in (0,\theta)$, and confidence parameter $\delta \in (0,1)$, if 
	\begin{equation}
		\label{eq:final_sample_size}
		m \geq \frac{2}{\varepsilon^2} \left( \frac{1}{\ell\ell_{\D,k}}\right)^2 \left( \lfloor \log_2\min( 2 \ell\ell_{\max,\D,k} , \sigma^k ) \rfloor+\ln\left(\frac{1}{\delta} \right) \right) 
	\end{equation}
then, with probability at least $1 - \delta$: 
\begin{itemize}
	\item for any $k$-mer $K \in FK(\D,k,\theta)$ such that  $f_\D(A) \geq \frac{\ell_{\max,\D,k}}{\ell_{\D,k}}(1- (1- \ell \ell_{\D,k} \theta)^{1/\ell})$ it holds $\hat{f}_{S_\ell}(K)  \geq \theta - \varepsilon / 2$;
	\item for any $k$-mer $K$ with $\hat{f}_{S_\ell}(K) \geq \theta - \varepsilon/2$ it holds $f_\D(K) \geq \theta - \varepsilon$;
	\item for any $k$-mer $K \in FK(\D,k,\theta)$ it holds $f_\D(K) \geq \hat{f}_{S_\ell}(K) - \varepsilon / 2$;
	\item for any $k$-mer $K$ with $ \ell \ell_{\D,k}(\hat{f}_{S_\ell}(K) +  \varepsilon / 2) \leq 1 $ it holds $f_\D(K) \leq \frac{\ell_{\max,\D,k}}{\ell_{\D,k}}(1 - (1-\ell \ell_{\D,k} (\hat{f}_{S_\ell}(K) +  \varepsilon / 2))^{(1/\ell)})$.
\end{itemize}
\end{proposition}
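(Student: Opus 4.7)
The plan is to combine Proposition~\ref{prop:pseudosamplesize_appendix} with the pseudodimension bound from Proposition~\ref{prop:pseudobound_bags}, applied to the range space of Definition~\ref{def:pseudodefs_bags}, and then convert uniform deviations of $\hat{f}_{S_\ell}(K)$ from its mean into the four stated guarantees using Proposition~\ref{prop:avg_bias_frequency} and Proposition~\ref{prop:relation}. Concretely, since the functions $f_{K,\ell}$ take values in $[0, 1/(\ell\ell_{\D,k})]$, we have $(b-a)^2 = 1/(\ell\ell_{\D,k})^2$, and the pseudodimension is at most $\lfloor \log_2(2\ell\ell_{\max,\D,k}) \rfloor$. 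Combining this with the trivial bound $\lfloor \log_2 \sigma^k \rfloor$, invoking the accuracy parameter $\varepsilon/2$ in Proposition~\ref{prop:pseudosamplesize_appendix}, and using $c \leq 1/2$, one recovers exactly the sample size in \eqref{eq:final_sample_size}. Hence with probability at least $1-\delta$ the event $E = \{|\hat{f}_{S_\ell}(K) - \E[\hat{f}_{S_\ell}(K)]| \leq \varepsilon/2 \text{ for all } K \in \Sigma^k\}$ holds; the remainder of the proof assumes $E$.

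The main deterministic ingredient is a two-sided comparison between $\E[\hat{f}_{S_\ell}(K)]$ and $f_\D(K)$. From the proof of Proposition~\ref{prop:avg_bias_frequency}, $\E[\hat{f}_{S_\ell}(K)] = \frac{1}{\ell\ell_{\D,k}}\bigl(1 - (1-\tilde{f}_\D(K))^\ell\bigr)$ with $\tilde{f}_\D(K) = \frac{1}{n}\sum_{r_i} \mathbbm{1}(K \in r_i)$. For the lower direction I would substitute the bound $\tilde{f}_\D(K) \geq \frac{\ell_{\D,k}}{\ell_{\max,\D,k}} f_\D(K)$ from Proposition~\ref{prop:relation}, giving the lower bound in Proposition~\ref{prop:avg_bias_frequency}. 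For the upper direction, which is the subtler step, I would use the Bernoulli-type inequality $1-(1-x)^\ell \leq \ell x$ to get $\E[\hat{f}_{S_\ell}(K)] \leq \tilde{f}_\D(K)/\ell_{\D,k}$, and then apply the other half of Proposition~\ref{prop:relation}, $\tilde{f}_\D(K) \leq \ell_{\D,k} f_\D(K)$, to conclude $\E[\hat{f}_{S_\ell}(K)] \leq f_\D(K)$.

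With these sandwich inequalities in hand, the four items follow in a uniform way. For item 1, plug $f_\D(K) \geq \tilde{\theta} = \frac{\ell_{\max,\D,k}}{\ell_{\D,k}}(1-(1-\ell\ell_{\D,k}\theta)^{1/\ell})$ into the lower bound on $\E[\hat{f}_{S_\ell}(K)]$; the $(1/\ell)$-th power and the $\ell$-th power cancel to yield $\E[\hat{f}_{S_\ell}(K)] \geq \theta$, and $E$ then gives $\hat{f}_{S_\ell}(K) \geq \theta - \varepsilon/2$. For item 2, $\hat{f}_{S_\ell}(K) \geq \theta - \varepsilon/2$ together with $E$ gives $\E[\hat{f}_{S_\ell}(K)] \geq \theta - \varepsilon$, and the upper comparison $\E[\hat{f}_{S_\ell}(K)] \leq f_\D(K)$ yields $f_\D(K) \geq \theta - \varepsilon$. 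Item 3 is immediate: $E$ gives $\hat{f}_{S_\ell}(K) - \varepsilon/2 \leq \E[\hat{f}_{S_\ell}(K)] \leq f_\D(K)$.

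The only item that requires a bit of algebraic inversion is item 4. Here I would start from $E$ and the upper comparison to get $\ell\ell_{\D,k}(\hat{f}_{S_\ell}(K)+\varepsilon/2) \geq \ell\ell_{\D,k}\,\E[\hat{f}_{S_\ell}(K)] = 1-(1-\tilde{f}_\D(K))^\ell$, solve for $\tilde{f}_\D(K)$ by extracting an $\ell$-th root (justified precisely by the hypothesis $\ell\ell_{\D,k}(\hat{f}_{S_\ell}(K)+\varepsilon/2) \leq 1$, which keeps the argument in $[0,1]$), and then apply the lower inequality $\tilde{f}_\D(K) \geq \frac{\ell_{\D,k}}{\ell_{\max,\D,k}} f_\D(K)$ of Proposition~\ref{prop:relation} to isolate $f_\D(K)$, producing exactly the stated bound. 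The only real obstacle is bookkeeping the direction of inequalities through the $\ell$-th root and ensuring that the root is well-defined; the hypothesis in item 4 is tailored precisely to this.
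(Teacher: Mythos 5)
Your proof is correct and follows essentially the same route as the paper: a uniform $\varepsilon/2$-deviation of $\hat{f}_{S_\ell}(K)$ from its mean via Propositions~\ref{prop:pseudosamplesize_appendix} and~\ref{prop:pseudobound_bags}, followed by the two-sided comparison between $\E[\hat{f}_{S_\ell}(K)]$ and $f_\D(K)$ to extract the four items. The only cosmetic difference is in the upper comparison $\E[\hat{f}_{S_\ell}(K)] \leq f_\D(K)$: the paper obtains it in one line from $\min(1,o_{I_\ell}(K)) \leq o_{I_\ell}(K)$ and the unbiasedness of $f_{S_\ell}(K)$, whereas you route it through Bernoulli's inequality and Proposition~\ref{prop:relation}; both derivations are valid.
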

\begin{proof}
	Let us consider $\hat{f}_{S_\ell}(K) = \frac{1}{m}\sum_{I_{\ell,i} \in S_\ell} f_{K,\ell}(I_{\ell,i})$ and its expectation $E[\hat{f}_{S_\ell}(K)] = \E[ f_{K,\ell}(I_{\ell,i})]$, which is taken with respect to the uniform distribution over bags of $\ell$ reads. By using Proposition \ref{prop:pseudosamplesize}, Proposition \ref{prop:pseudobound_bags}, and by the choice of $m$, we have that with probability at least  $1-\delta$ it holds $|\E[\hat{f}_{S_\ell}(K)] - \hat{f}_{S_\ell}(K)| \leq \varepsilon/2$ for every $k$-mer $K$, which implies $\hat{f}_{S_\ell}(K) \geq \E[\hat{f}_{S_\ell}(K)] - \varepsilon/2$. Using Proposition \ref{prop:avg_bias_frequency}, when $f_\D(K) \geq \frac{\ell_{\max,\D,k}}{\ell_{\D,k}}(1- (1- \ell \ell_{\D,k} \theta)^{1/\ell})$, then $\E[\hat{f}_{S_\ell}(K)] \geq \theta$ and the first part holds.
	
	By the definitions of $\hat{f}_{S_\ell}(K)$ and $f_{S_\ell}(K)$ of Equation \ref{eq:freq_bags_bias} and Equation \ref{eq:freq_bags_unbias} we have $E[\hat{f}_{S_\ell}(K)] \leq E[f_{S_\ell}(K)] = f_\D(K)$. From the proof of the first part we have $|\E[\hat{f}_{S_\ell}(K)] - \hat{f}_{S_\ell}(K)| \leq \varepsilon/2$ for every $k$-mer $K$. If we consider a $k$-mer $K$ with $f_\D(K) < \theta - \varepsilon$ we have $\hat{f}_{S_\ell}(K) \leq \E[\hat{f}_{S_\ell}(K)] + \varepsilon/2 \leq f_\D(K) + \varepsilon/2 < \theta - \varepsilon/2$ and the second part holds.
	
	Since $f_\D(K) \geq E[\hat{f}_{S_\ell}(K)]$ and $|\E[\hat{f}_{S_\ell}(K)] - \hat{f}_{S_\ell}(K)| \leq \varepsilon/2$ for every $k$-mer $K$, we have $\E[\hat{f}_{S_\ell}(K)] \geq \hat{f}_{S_\ell}(K) - \varepsilon/2$ and the third part holds.
	
	By Proposition \ref{prop:avg_bias_frequency} we have $f_\D(K) \leq \frac{\ell_{\max,\D,k}}{\ell_{\D,k}}(1 - (1-\ell \ell_{\D,k} E[\hat{f}_{S_\ell}(K)] )^{(1/\ell)})$. Using the fact that $E[\hat{f}_{S_\ell}(K)] \leq \hat{f}_{S_\ell}(K) +  \varepsilon / 2$ for every $k$-mer $K$, the last part holds.

\end{proof}

\begin{center}
	\begin{minipage}{\linewidth}
		\begin{algorithm}[H]
			\label{alg:approximation_appendix}
			\KwData{$\D$, $k$, $\theta \in (0,1]$, $\delta \in (0,1)$, $\varepsilon \in (0,\theta)$, integer $\ell \geq 1$}
			\KwResult{Approximation $A$ of $FK(\D,k,\theta)$ with probability at least $1-2\delta$}
			$m \leftarrow \lceil \frac{2}{\varepsilon^2} \left( \frac{1}{\ell\ell_{\D,k}}\right)^2 \left( \lfloor \log_2\min( 2 \ell\ell_{\max,\D,k} , \sigma^k ) \rfloor+\ln\left(\frac{2}{\delta} \right) \right) \rceil$\;
			$S \leftarrow$ sample of exactly $m \ell$ reads drawn from $\D$\;
			$T \leftarrow exact\_counting(S,k)$\;
			$A \leftarrow \emptyset$\;
			\ForAll {$k$-mers $K \in T$} 
				{
					$\hat{f}_{S_{\ell}}(K) \leftarrow Binomial(m,1-e^{-T[K]/m})/(m \ell \ell_{\D,k})$ \tcp*{biased frequency \ref{eq:freq_bags_bias}}
					$f_{S_{\ell}}(K) \leftarrow T[K]/(m \ell \ell_{\D,k})$ \tcp*{unbiased frequency \ref{eq:freq_bags_unbias}}
					\If{$\hat{f}_{S_{\ell}}(K) \geq \theta - \varepsilon/2$}{$A \leftarrow A \cup (K,f_{S_{\ell}}(K))$}
				}
			\Return $A$\;
			\caption{\algname$(\D, k, \theta, \delta, \varepsilon,\ell)$}
		\end{algorithm}
	\end{minipage}
\end{center}

\section{Additional figures}
 \begin{figure}[H]
	\centering
	\includegraphics[width=.8\linewidth]{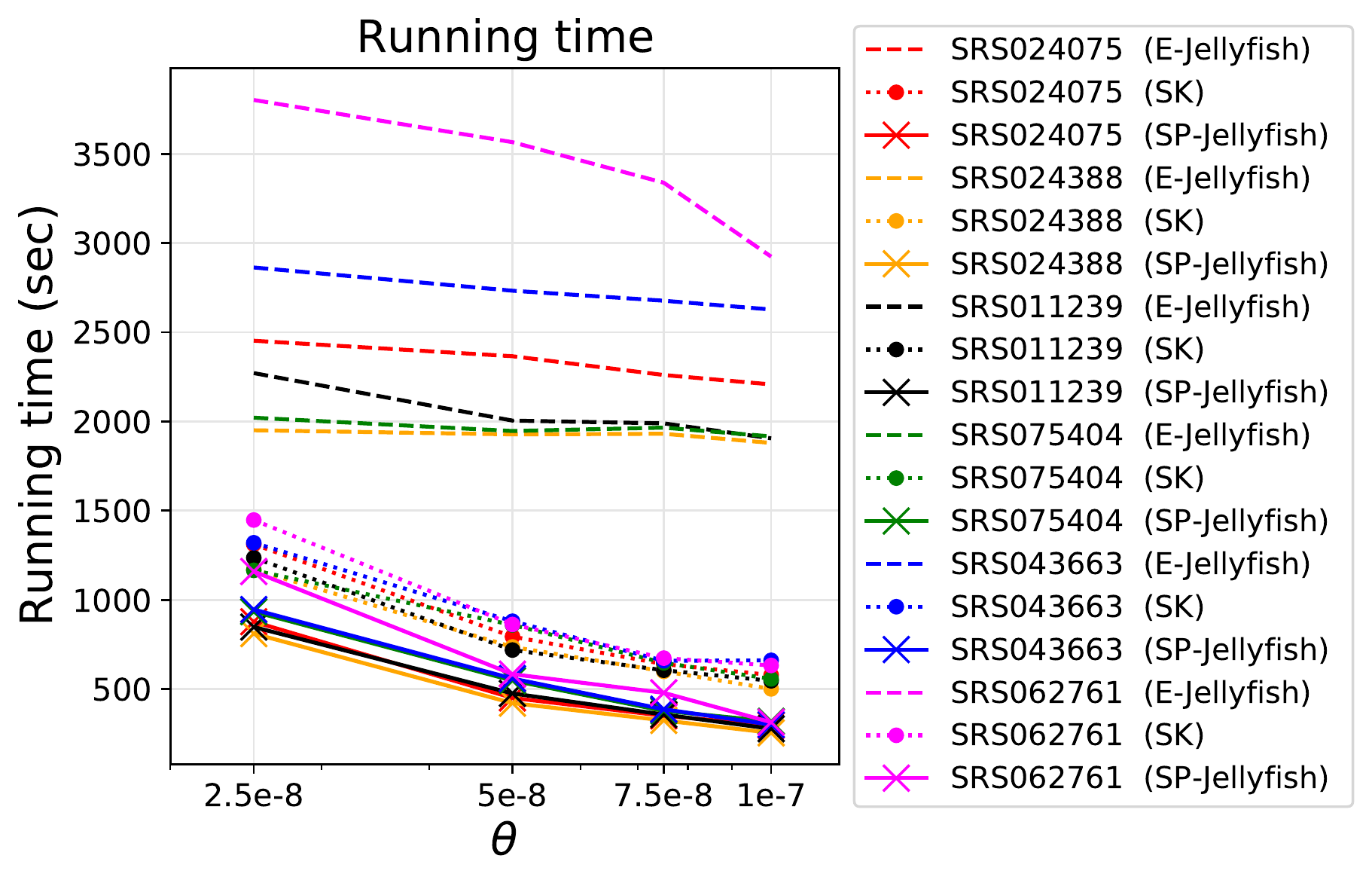}
	\caption{As function of $\theta$ and for each dataset $\D$, running times to approximate $FK(\D,k,\theta)$ with \algname\  using \jellyfish\ (\algnameshort-\jellyfish), with the state-of-the-art sampling algorithm \sakeima\ (\sakeimashort), and for exactly computing $FK(\D,k,\theta)$ with \jellyfish\ (\exactshort-\jellyfish).}
	\label{fig:runningtimes_withJelly}
\end{figure}

\begin{figure}[H]
	\centering
	\subfloat[]{\includegraphics[width=.50\linewidth]{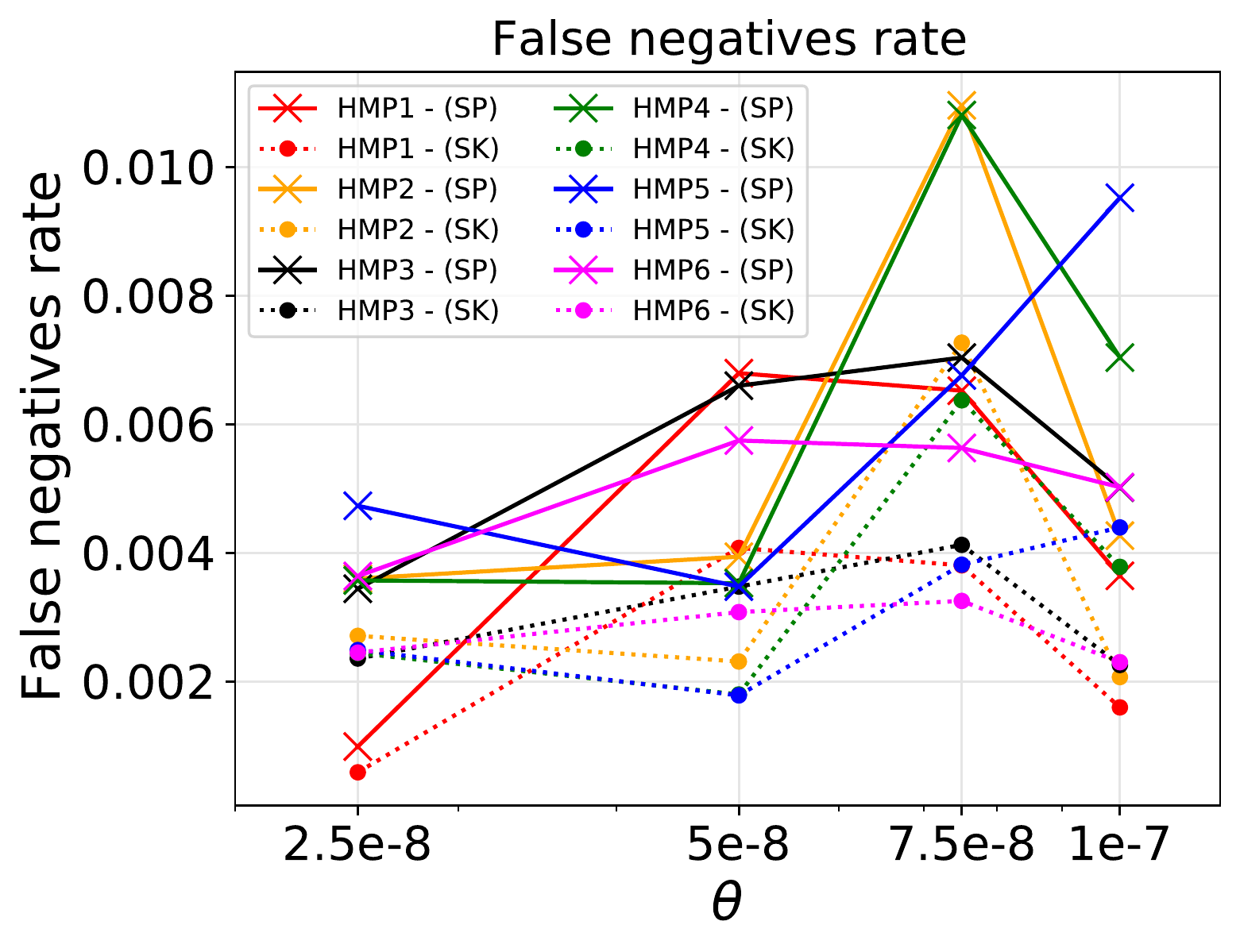} \label{fig:fn_rate}}
	\subfloat[]{\includegraphics[width=.49\linewidth]{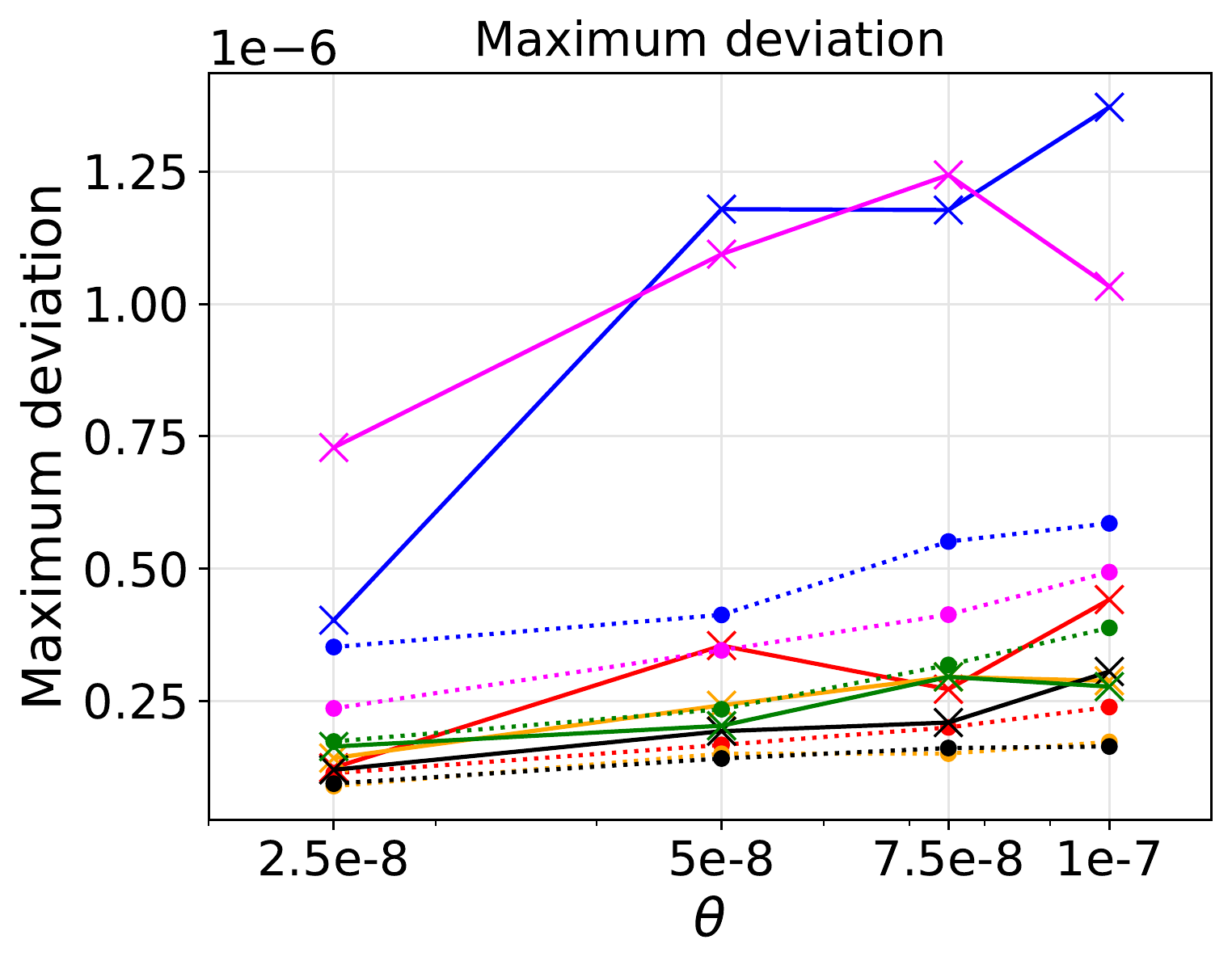} \label{fig:max_dev}}\\
	\caption{As function of $\theta$ and for every dataset: (a) False negatives rates, i.e. the fraction of $k$-mers of $FK(\D,k,\theta)$ not reported by the approximation sets, obtained using \algname\ (\algnameshort) and \sakeima\ (\sakeimashort); (b) Maximum deviations between exact and unbiased observed frequencies provided by the approximations sets of \algname\ (\algnameshort) and \sakeima\ (\sakeimashort).}
	\label{fig:quality}
\end{figure}

\begin{figure}[h]
	\centering
	\subfloat[]{\includegraphics[width=.45\linewidth]{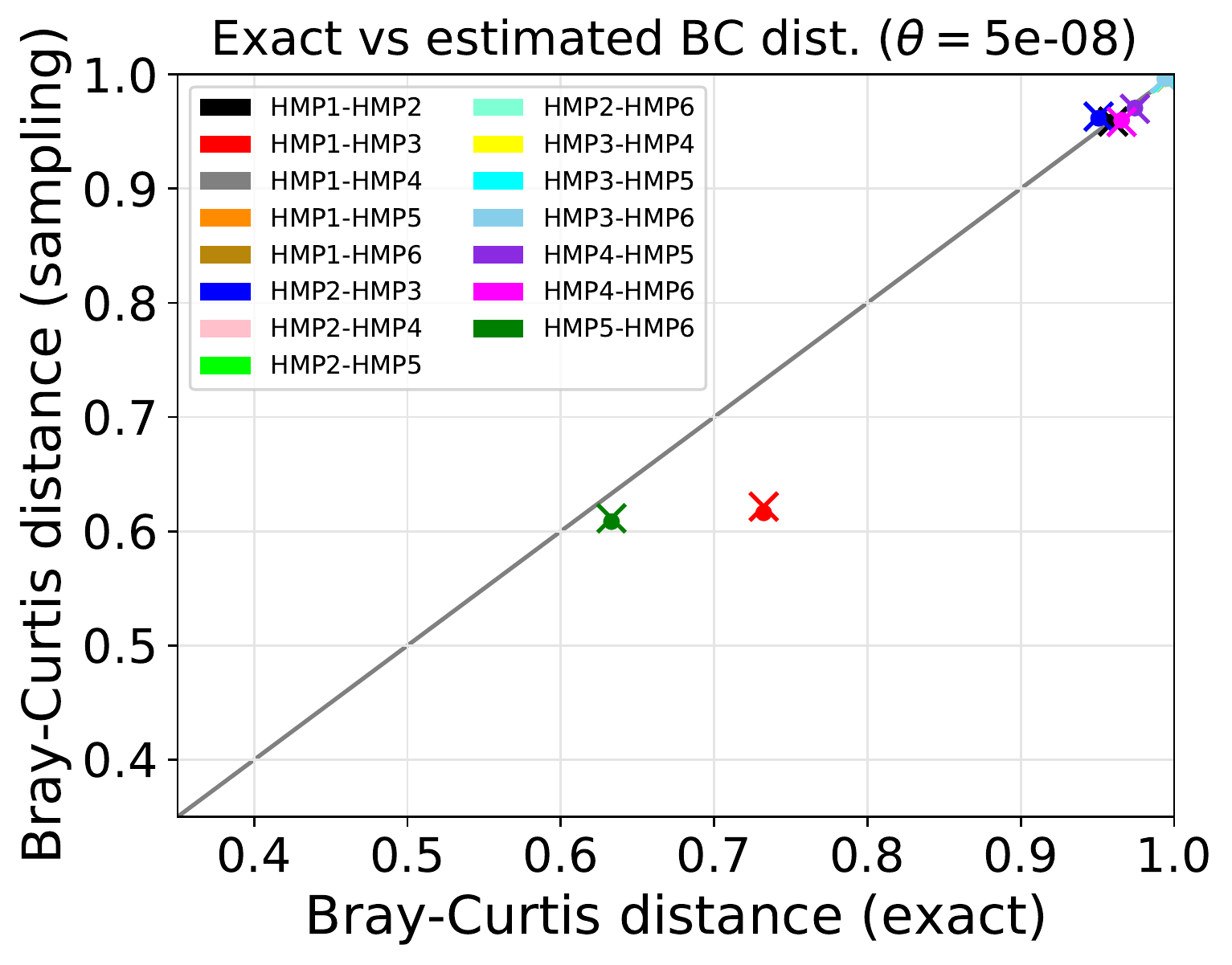} \label{fig:theta2}}
	\subfloat[]{\includegraphics[width=.45\linewidth]{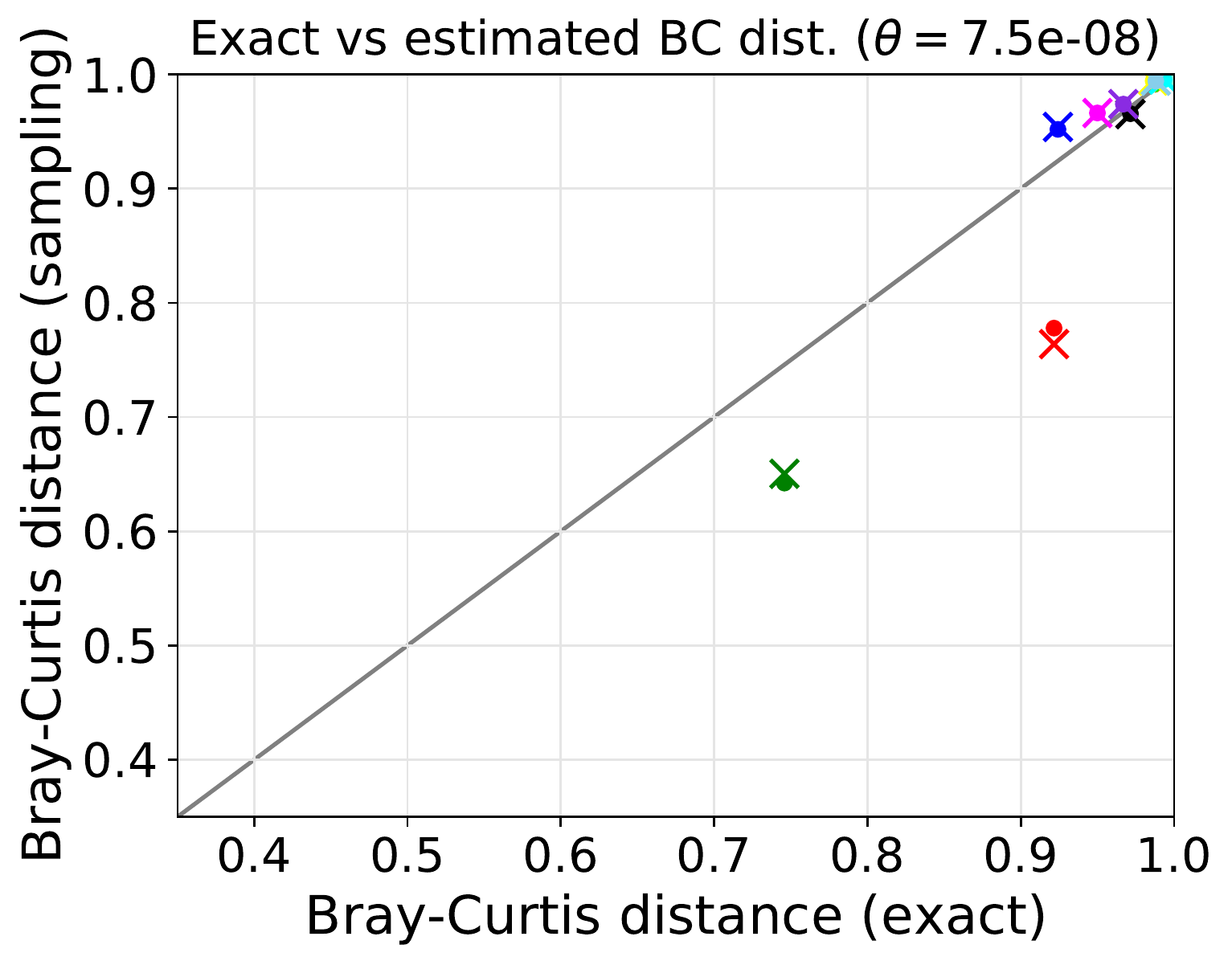}  \label{fig:theta3}}\\
	\subfloat[]{\includegraphics[width=.45\linewidth]{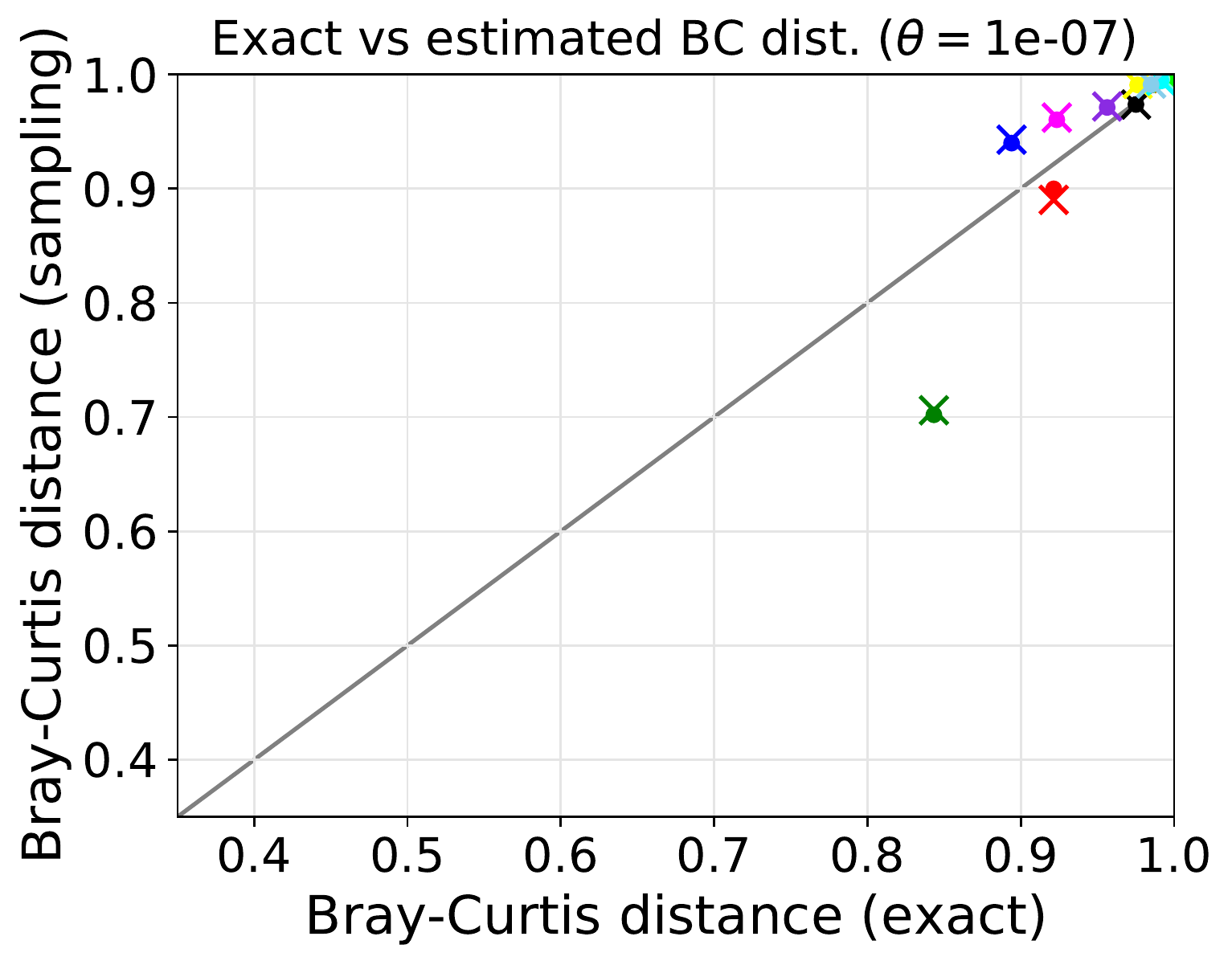}  \label{fig:theta4}}\\
	\caption{Comparison of the approximations of the Bray-Curtis distances using approximations of frequent $k$-mers sets provided by \algname\ ($\times$) and by \sakeima\ ($\bullet$) with the exact distances, for: (a) $\theta=  5\cdot10^{-8}$; (b) $\theta=  7.5\cdot10^{-8}$; (c) $\theta=  1\cdot10^{-7}$. }
	\label{fig:all_thetas}
\end{figure}

\clearpage

\begin{figure}[H]
	\centering
	\includegraphics[width=.55\linewidth]{figures/exact_jaccard_clustering.pdf}
	\caption{Average linkage hierarchical clustering of GOS datasets using Jaccard similarity. Prefix ID of the GOS datasets: TO = Tropical Open ocean, TG = Tropical Galapagos, TN = Temperate North, TS = Temperate South, E = Estuary, NC = Non-Classified as datasets from marine environments. }
	\label{fig:JC_clustering_supp}
\end{figure}
\begin{figure}[H]
	\centering
	\includegraphics[width=.55\linewidth]{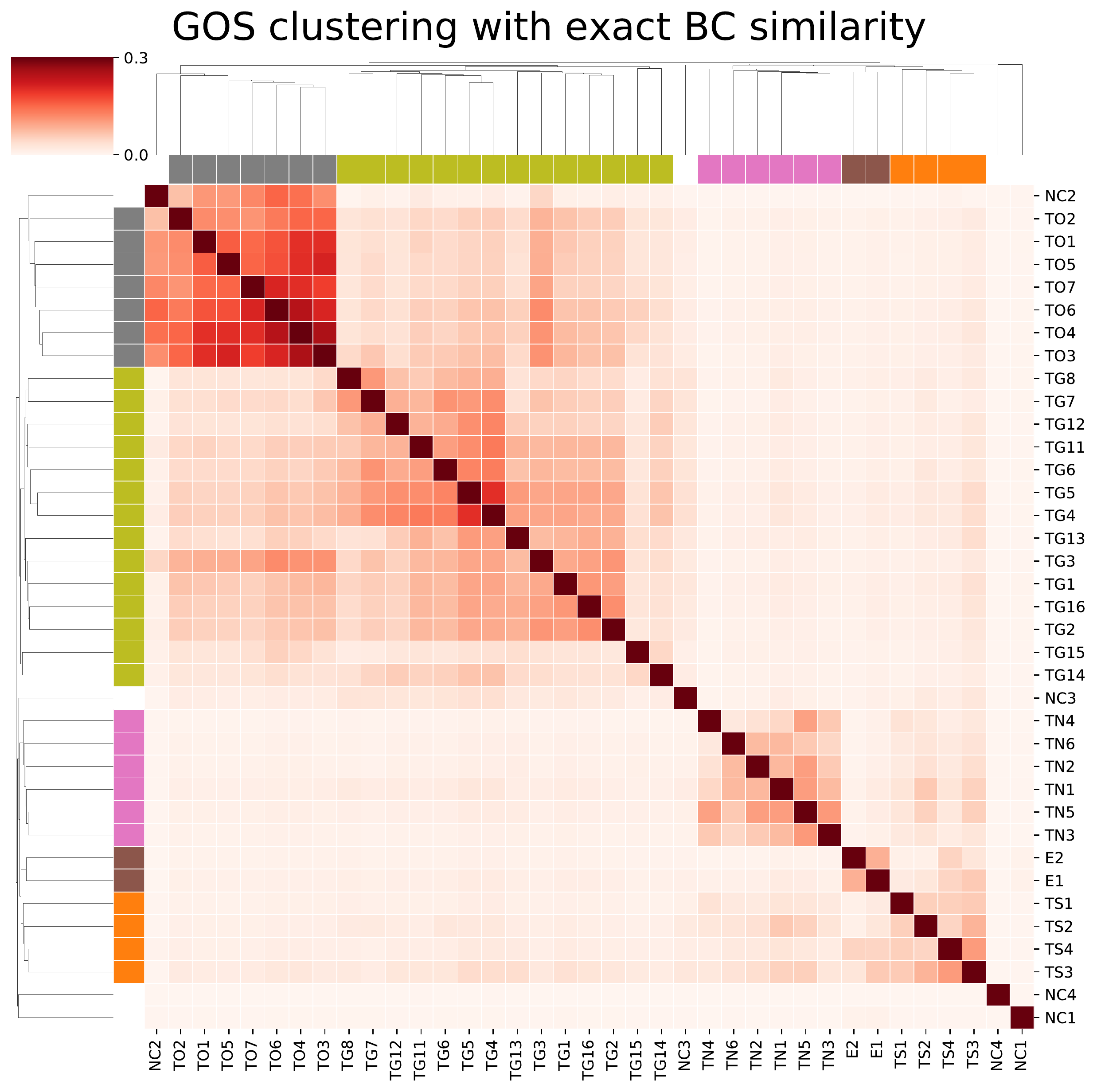}
	\caption{Average linkage hierarchical clustering of GOS datasets using Bray-Curtis similarity. Prefix ID of the GOS datasets: TO = Tropical Open ocean, TG = Tropical Galapagos, TN = Temperate North, TS = Temperate South, E = Estuary, NC = Non-Classified as datasets from marine environments. }
	\label{fig:BC_clustering}
\end{figure}
\begin{figure}[H]
	\centering
	\includegraphics[width=.55\linewidth]{figures/sampling50_BC_clustering.pdf}
	\caption{Average linkage hierarchical clustering of GOS datasets using estimated Bray-Curtis similarity from \algname\ with $50\%$ of the data. Prefix ID of the GOS datasets: TO = Tropical Open ocean, TG = Tropical Galapagos, TN = Temperate North, TS = Temperate South, E = Estuary, NC = Non-Classified as datasets from marine environments. }
	\label{fig:BCS_clustering_supp}
\end{figure}

\begin{figure}[h]
	\centering
	\subfloat[]{\includegraphics[width=.35\linewidth]{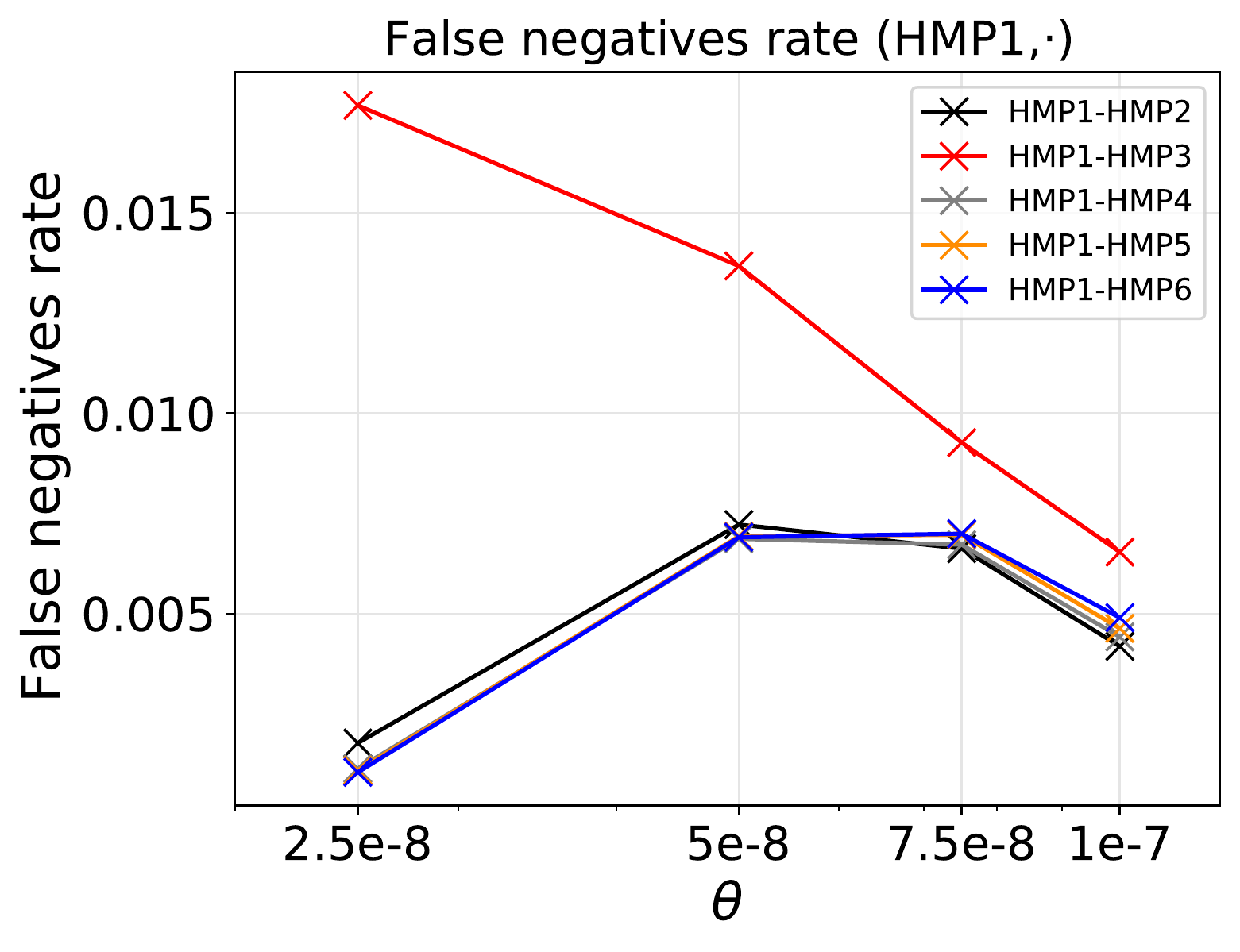}}
	\subfloat[]{\includegraphics[width=.35\linewidth]{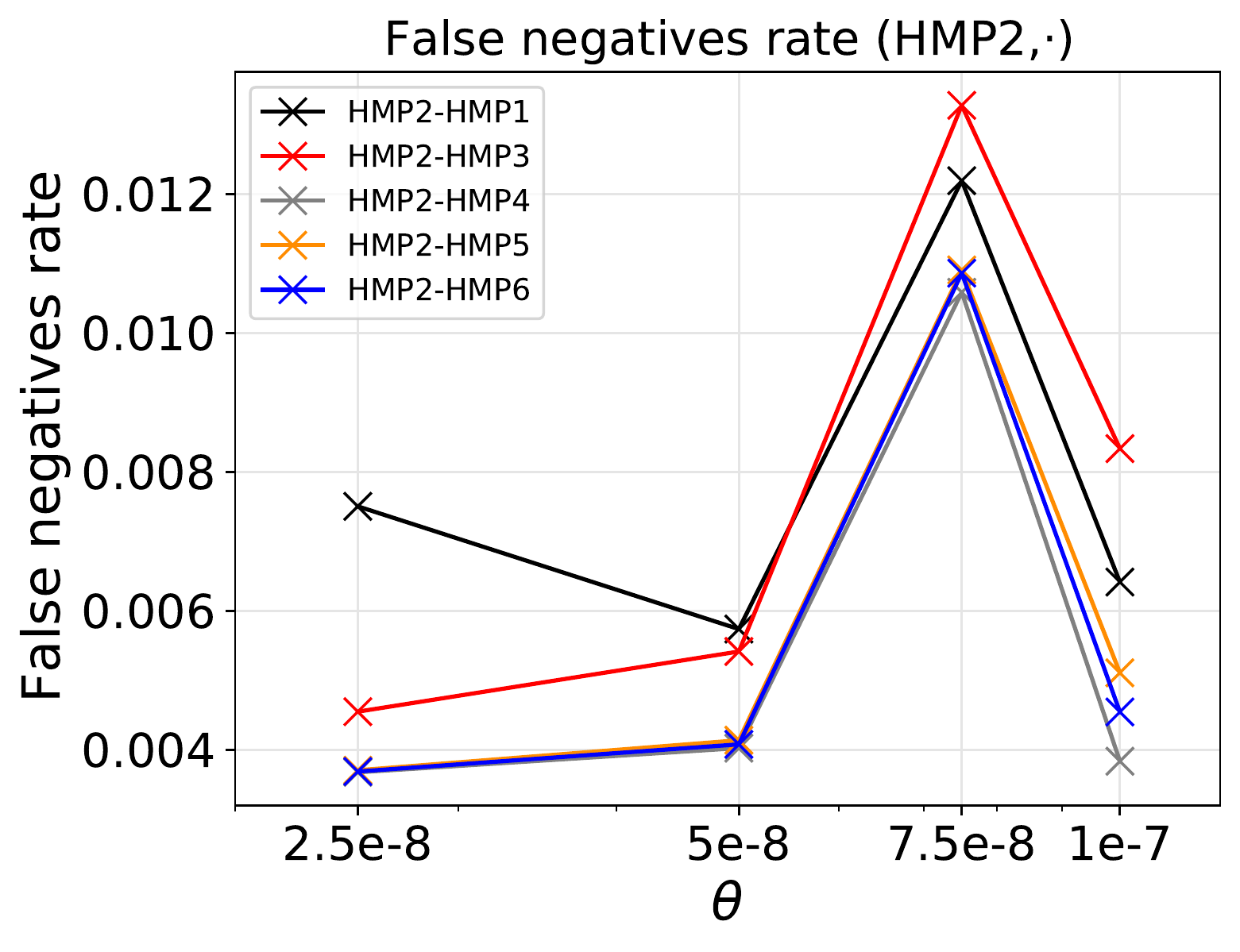}}\\
	\subfloat[]{\includegraphics[width=.35\linewidth]{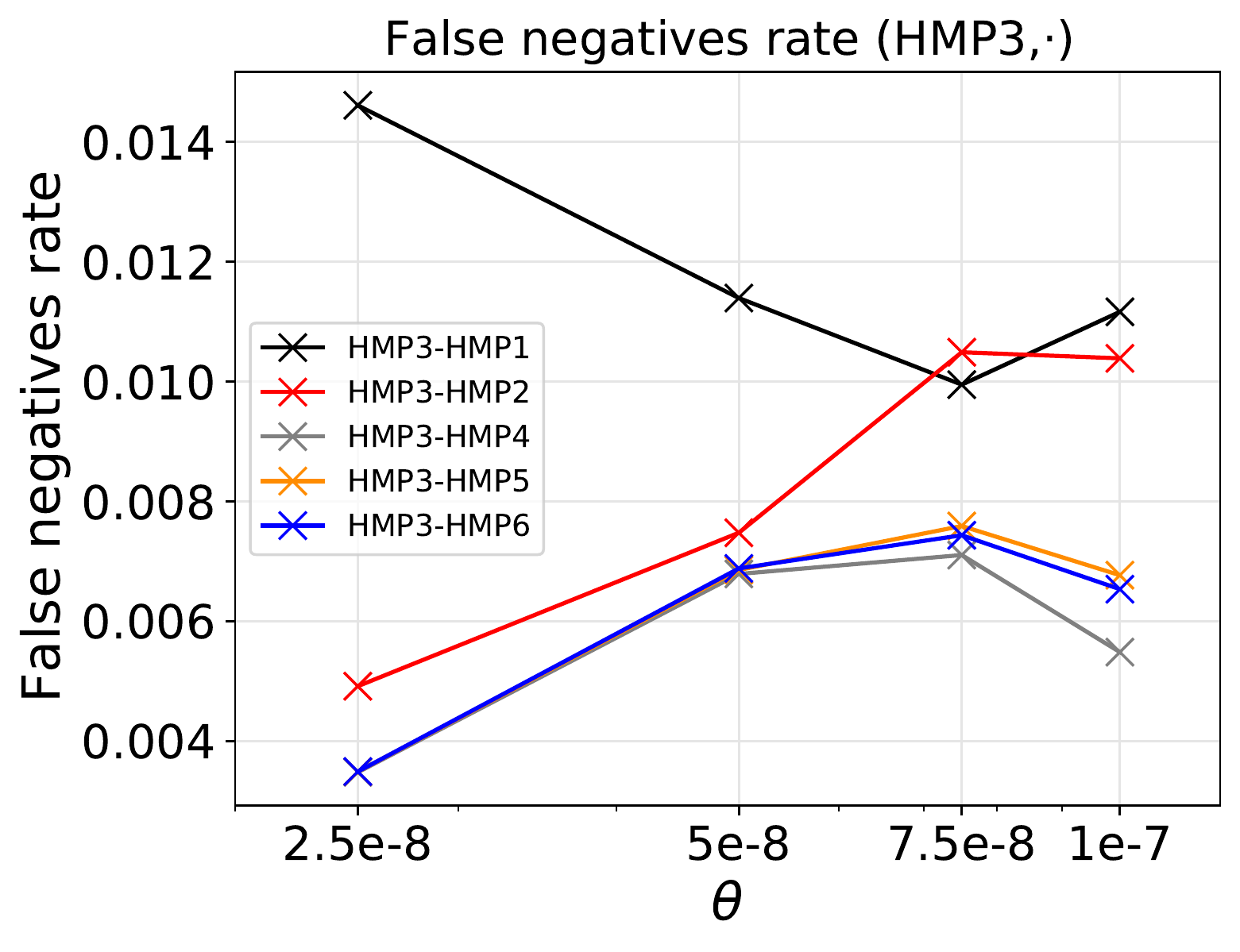}}
	\subfloat[]{\includegraphics[width=.35\linewidth]{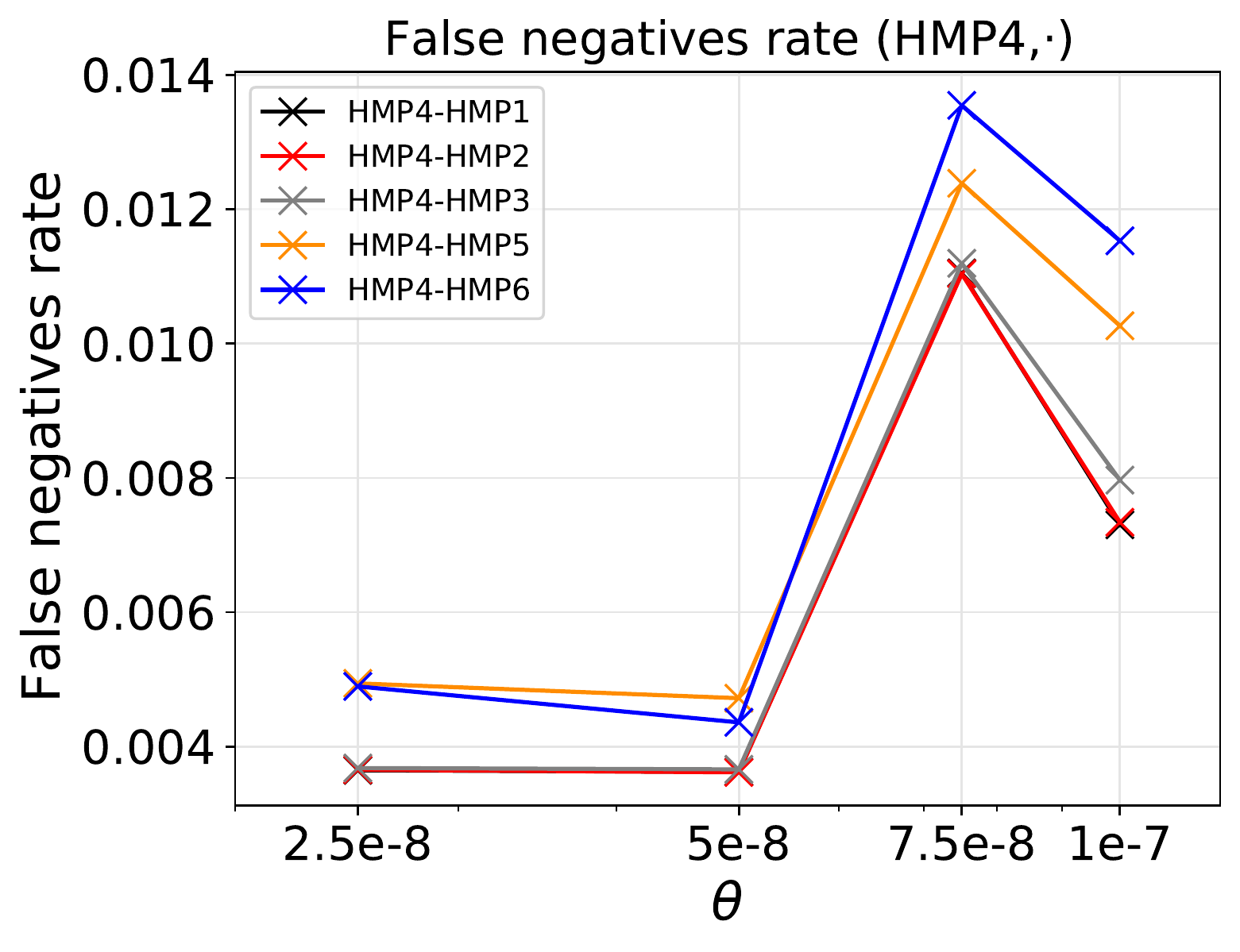}}\\
	\subfloat[]{\includegraphics[width=.35\linewidth]{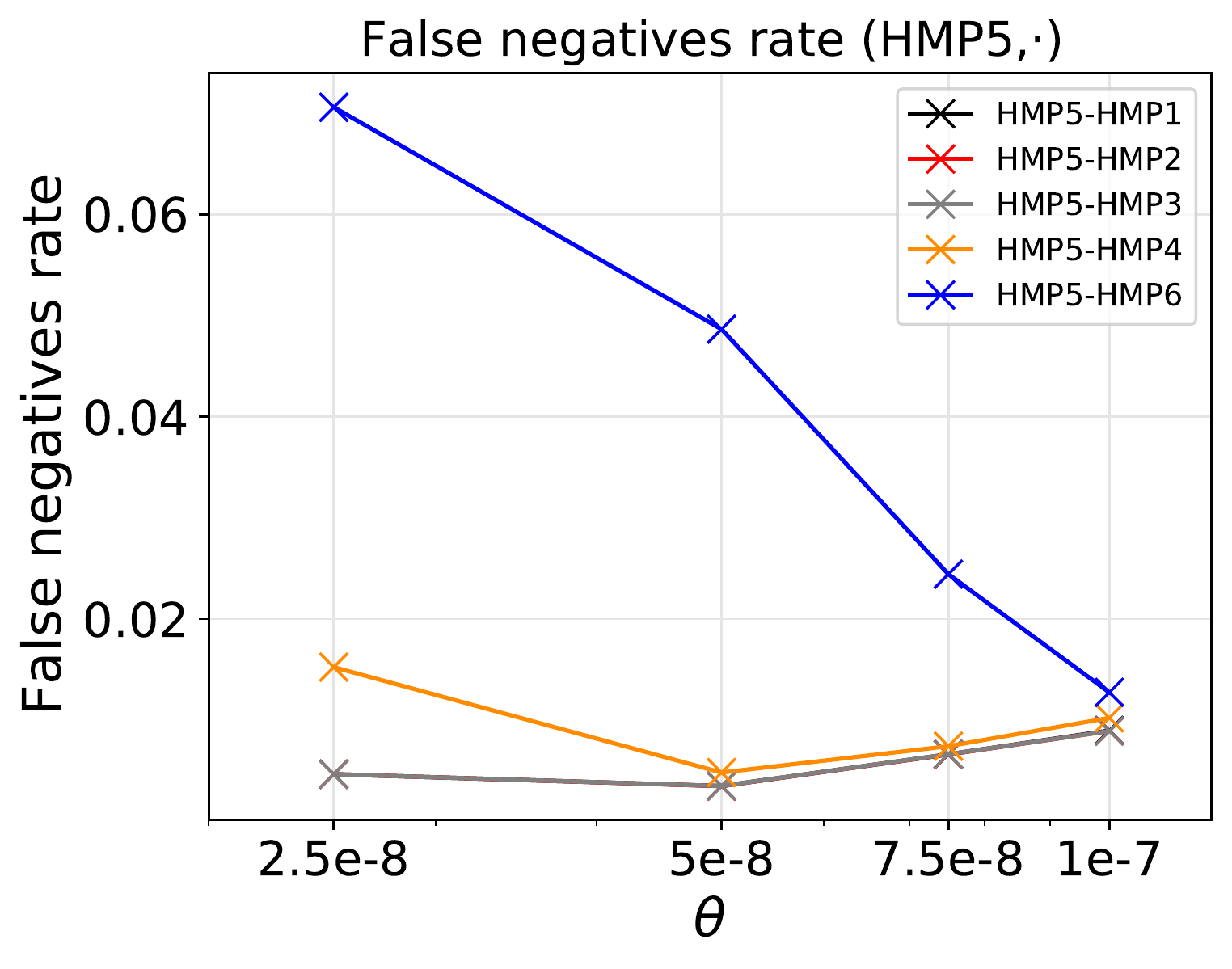}}
	\subfloat[]{\includegraphics[width=.35\linewidth]{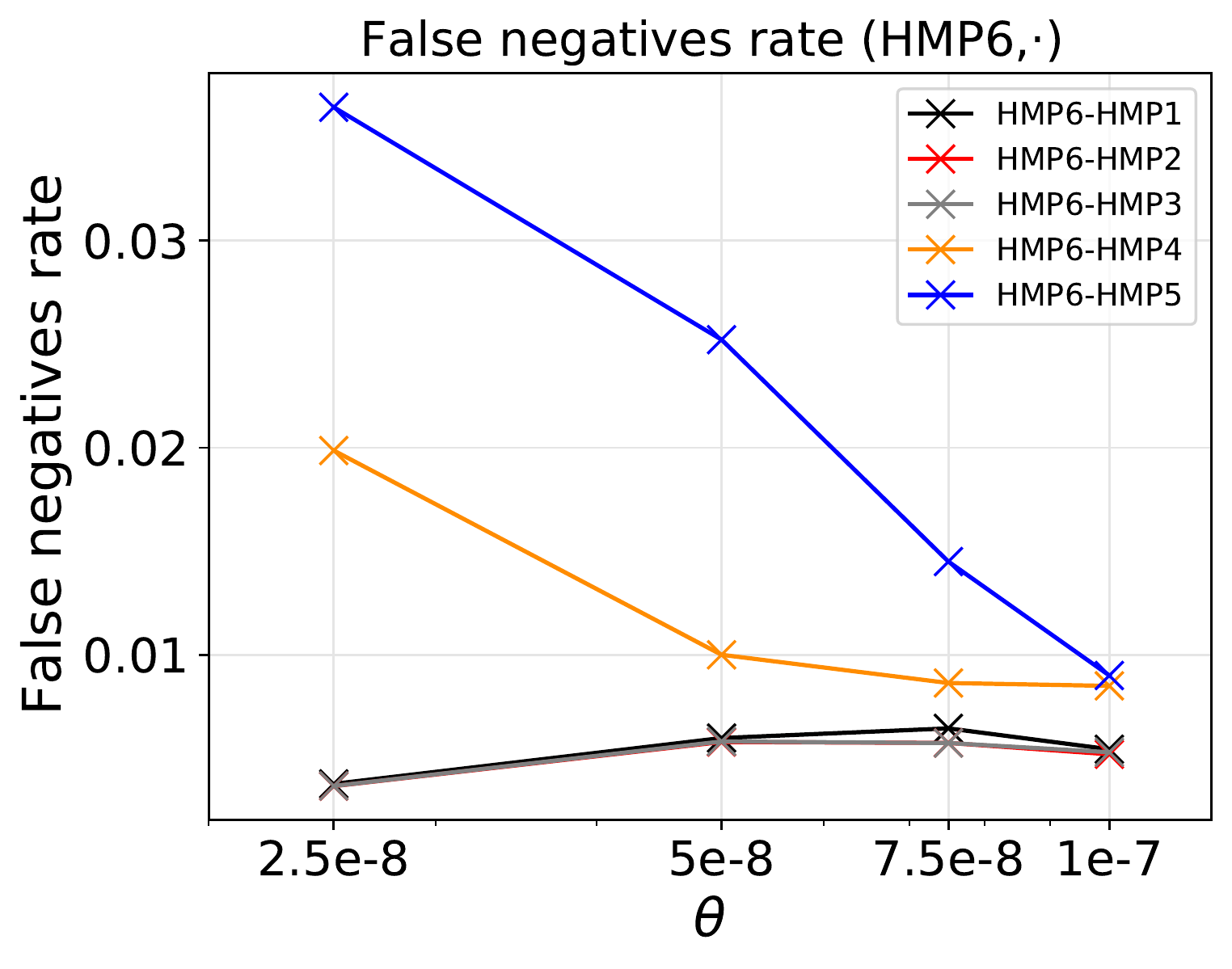}}\\
	\subfloat[]{\includegraphics[width=.35\linewidth]{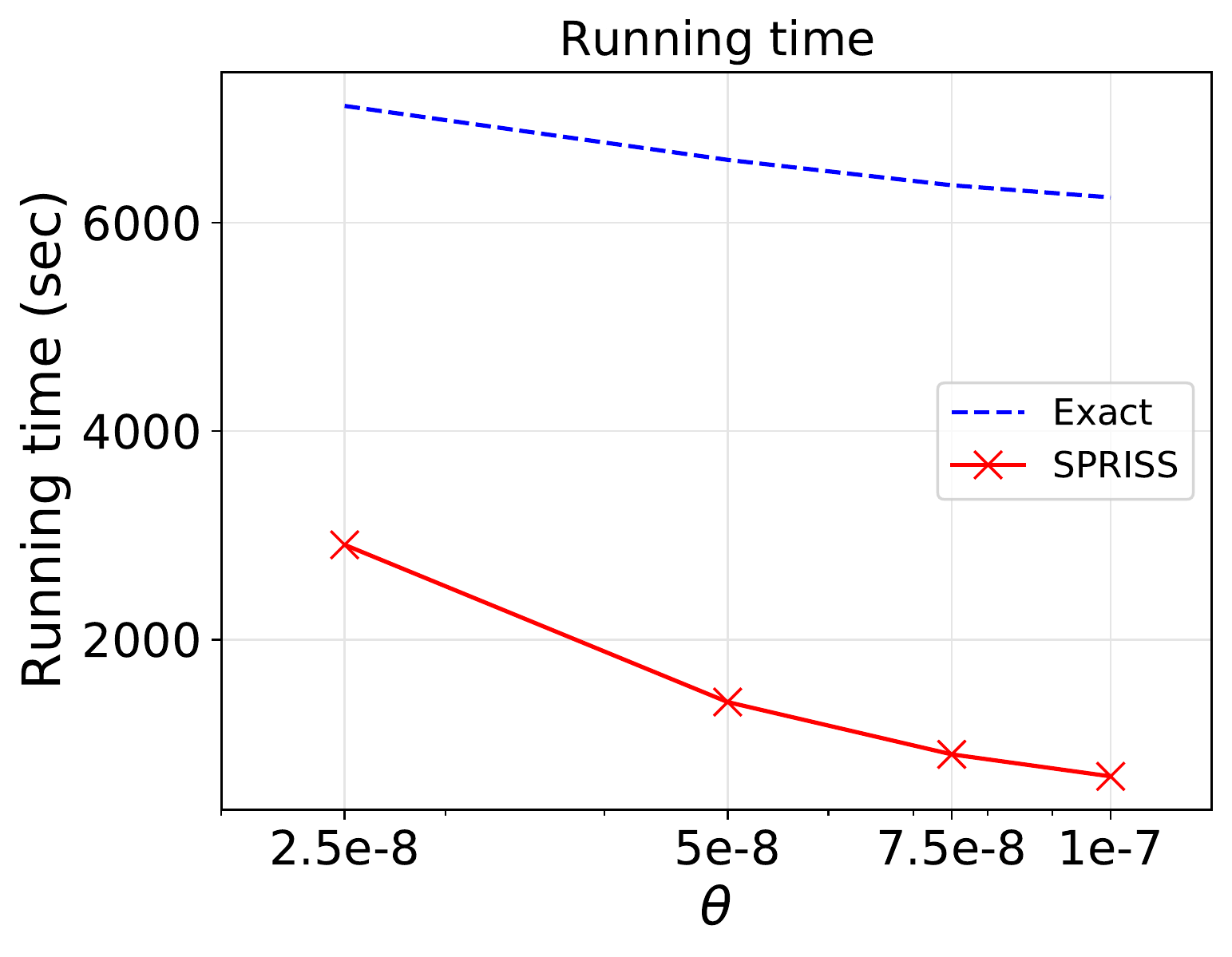} \label{fig:all_running_time_discriminative_HMP}}\\
	\caption{As function of $\theta$, false negatives rate, i.e. the fraction of $k$-mers of $DK(\D_1,\D_2,k,\theta,\rho)$  not included in its approximation $\overline{DK}(\D_1,\D_2,k,\theta,\rho)$, which is obtained using \algname, for all pairs of datasets $\D_1$ and $\D_2$. Figure \ref{fig:all_running_time_discriminative_HMP} shows the running times to compute $\overline{DK}(\D_1,\D_2,k,\theta,\rho)$ using \algname\ against the one required to compute the exact set $DK(\D_1,\D_2,k,\theta,\rho)$, cumulative for all pairs of datasets $\D_1$ and $\D_2$.} 
	\label{fig:results_discriminative_HMP}
\end{figure}

\section{Datasets}

\begin{table}[H]
\caption{HMP datasets for our experimental evaluation. For each dataset $\D$ the table shows: the dataset name and site (\texttt{(s)} for \texttt{stool}, \texttt{(t)} for \texttt{tongue dorsum}); its corresponding label on figures; the total number $t_{\D,k}$ of $k$-mers ($k=31$) in $\D$; the number $|\D|$ of reads it contains; the maximum read length $\max_{n_i} = \max_i\{n_i | r_i \in \D\}$; the average read length avg$_{n_i} = \sum_{i=1}^{n} n_i/n$.}
\label{tab:dataHMP}
   \centering
\footnotesize{
   \begin{tabular}{|c|r|r|r|r|r|r|}
      \hline
      dataset & label & \multicolumn{1}{|c|}{$t_{\D,k}$}& \multicolumn{1}{|c|}{$|\D|$} & \multicolumn{1}{|c|}{$\max_{n_i}$} & avg$_{n_i}$ \\
       \hline
       \hline
       \texttt{SRS024075(s)}  & HMP1 & $8.82 \cdot 10^9$& $1.38 \cdot 10^8$ & 95 & 93.88 \\
       \hline
       \texttt{SRS024388(s)}  & HMP2 & $7.92 \cdot 10^9$& $1.20 \cdot 10^8$ & 101 & 96.21 \\
       \hline
       \texttt{SRS011239(s)}   & HMP3 & $8.13 \cdot 10^9$& $1.24 \cdot 10^8$ & 101 & 95.69 \\
       \hline
       \texttt{SRS075404(t)}   & HMP4 & $7.75 \cdot 10^9$& $1.22 \cdot 10^8$ & 101 & 93.51 \\
       \hline
       \texttt{SRS043663(t)}   & HMP5 & $9.15 \cdot 10^9$& $1.31 \cdot 10^8$ & 100 & 100.00 \\
       \hline
       \texttt{SRS062761(t)}    & HMP6 & $8.26 \cdot 10^9$& $1.18 \cdot 10^8$ & 100 & 100.00 \\
       \hline

      \hline
   \end{tabular}
  }
\end{table}

\begin{table}[H]
	\caption{GOS datasets for our experimental evaluation. For each dataset $\D$ the table shows: the dataset name; its corresponding label for clustering results in figures \ref{fig:jaccard_clustering}, \ref{fig:bc_sampling_clustering}, and \ref{fig:BC_clustering}; the total number $t_{\D,k}$ of $k$-mers ($k=21$) in $\D$; the number $|\D|$ of reads it contains; the maximum read length $\max_{n_i} = \max_i\{n_i | r_i \in \D\}$; the average read length avg$_{n_i} = \sum_{i=1}^{n} n_i/n$. Prefix IDs of the GOS datasets: TO = Tropical Open ocean, TG = Tropical Galapagos, TN = Temperate North, TS = Temperate South, E = Estuary, NC = Non-Classified. }
	\label{tab:dataGOS}
	\centering
	\footnotesize{
		\begin{tabular}{|c|r|r|r|r|r|}
			\hline
			dataset & label & \multicolumn{1}{|c|}{$t_{\D,k}$}& \multicolumn{1}{|c|}{$|\D|$} & \multicolumn{1}{|c|}{$\max_{n_i}$} & avg$_{n_i}$ \\
			\hline
			\hline
			\texttt{GS02} & TN1  & $1.26 \cdot 10^8$& $1.21 \cdot 10^5$ & 1349 & 1058.98 \\
			\hline
			\texttt{GS03} & TN2  & $6.56 \cdot 10^7$& $6.16 \cdot 10^4$ & 1278 & 1086.07 \\
			\hline
			\texttt{GS04} & TN3  & $5.58 \cdot 10^7$& $5.29 \cdot 10^4$ & 1309 & 1074.83 \\
			\hline
			\texttt{GS05} & TN4  & $6.47 \cdot 10^7$& $6.11 \cdot 10^4$ & 1242 & 1079.37 \\
			\hline			
			\texttt{GS06} & TN5  & $6.34 \cdot 10^7$& $5.96 \cdot 10^4$ & 1260 & 1082.71 \\
			\hline
			\texttt{GS07} & TN6  & $5.44 \cdot 10^7$& $5.09 \cdot 10^4$ & 1342 & 1087.30 \\
			\hline			
			\texttt{GS08} & TS1  & $1.35 \cdot 10^8$& $1.29 \cdot 10^5$ & 1444 & 1062.24 \\
			\hline		
			\texttt{GS09} & TS2  & $8.27 \cdot 10^7$& $7.93 \cdot 10^4$ & 1342 & 1063.35 \\
			\hline
			\texttt{GS10} & TS3  & $8.08 \cdot 10^7$& $7.83 \cdot 10^4$ & 1402 & 1052.62 \\
			\hline
			\texttt{GS11} & E1  & $1.30 \cdot 10^8$& $1.24 \cdot 10^5$ & 1283 & 1070.84 \\
			\hline
			\texttt{GS12} & E2  & $1.33 \cdot 10^8$& $1.26 \cdot 10^5$ & 1349 & 1078.62 \\
			\hline
			\texttt{GS13} & TS4  & $1.46 \cdot 10^8$& $1.38 \cdot 10^5$ & 1300 & 1079.50 \\
			\hline
			\texttt{GS14} & TG1  & $1.37 \cdot 10^8$& $1.28 \cdot 10^5$ & 1353 & 1085.58 \\
			\hline
			\texttt{GS15} & TO1  & $1.35 \cdot 10^8$& $1.27 \cdot 10^5$ & 1412 & 1083.79 \\
			\hline
			\texttt{GS16} & TO2  & $1.34 \cdot 10^8$& $1.27 \cdot 10^5$ & 1328 & 1081.48 \\
			\hline
			\texttt{GS17} & TO3  & $2.76 \cdot 10^8$& $2.57 \cdot 10^5$ & 1354 & 1091.92 \\
			\hline
			\texttt{GS18} & TO4  & $1.53 \cdot 10^8$& $1.42 \cdot 10^5$ & 1309 & 1096.20 \\
			\hline
			\texttt{GS19} & TO5  & $1.43 \cdot 10^8$& $1.35 \cdot 10^5$ & 1325 & 1081.93 \\
			\hline
			\texttt{GS20} & NC1  & $3.09 \cdot 10^8$& $2.96 \cdot 10^5$ & 1325 & 1063.42 \\
			\hline
			\texttt{GS21} & TG2  & $1.40 \cdot 10^8$& $1.31 \cdot 10^5$ & 1334 & 1088.44 \\
			\hline
			\texttt{GS22} & TG3  & $1.28 \cdot 10^8$& $1.21 \cdot 10^5$ & 1288 & 1077.40 \\
			\hline
			\texttt{GS23} & TO6  & $1.40 \cdot 10^8$& $1.33 \cdot 10^5$ & 1304 & 1079.48 \\
			\hline
			\texttt{GS25} & NC2  & $1.27 \cdot 10^8$& $1.20 \cdot 10^5$ & 1288 & 1075.49 \\
			\hline
			\texttt{GS26} & TO7  & $1.06 \cdot 10^8$& $1.02 \cdot 10^5$ & 1337 & 1061.74 \\
			\hline
			\texttt{GS27} & TG4  & $2.32 \cdot 10^8$& $2.22 \cdot 10^5$ & 1259 & 1068.65 \\
			\hline
			\texttt{GS28} & TG5  & $2.01 \cdot 10^8$& $1.89 \cdot 10^5$ & 1295 & 1084.40 \\
			\hline
			\texttt{GS29} & TG6  & $1.41 \cdot 10^8$& $1.31 \cdot 10^5$ & 1356 & 1093.46 \\
			\hline
			\texttt{GS30} & TG7 & $3.84 \cdot 10^8$& $3.59 \cdot 10^5$ & 1359 & 1090.61 \\
			\hline
			\texttt{GS31} & TG8 & $4.52 \cdot 10^8$& $4.36 \cdot 10^5$ & 1341 & 1057.90 \\
			\hline
			\texttt{GS32} & NC3 & $1.50 \cdot 10^8$& $1.48 \cdot 10^5$ & 1366 & 1035.96 \\
			\hline
			\texttt{GS33} & NC4 & $7.15 \cdot 10^8$& $6.92 \cdot 10^5$ & 1361 & 1054.10 \\
			\hline
			\texttt{GS34} & TG11 & $1.39 \cdot 10^8$& $1.34 \cdot 10^5$ & 1308 & 1058.44 \\
			\hline
			\texttt{GS35} & TG12 & $1.49 \cdot 10^8$& $1.40 \cdot 10^5$ & 1321 & 1078.30 \\
			\hline
			\texttt{GS36} & TG13 & $8.42 \cdot 10^7$& $7.75 \cdot 10^4$ & 1423 & 1106.00 \\
			\hline
			\texttt{GS37} & TG14 & $6.73 \cdot 10^7$& $6.56 \cdot 10^4$ & 1244 & 1045.40 \\
			\hline
			\texttt{GS47} & TG15 & $6.70 \cdot 10^7$& $6.60 \cdot 10^4$ & 1304 & 1035.09 \\
			\hline
			\texttt{GS51} & TG16 & $1.37 \cdot 10^8$& $1.28 \cdot 10^5$ & 1349 & 1089.27 \\
			\hline
					
			\hline
		\end{tabular}
	}
\end{table}

\begin{table}[H]
	\caption{B73 and Mo17 datasets for our experimental evaluation. For each dataset $\D$ the table shows: the dataset name; the total number $t_{\D,k}$ of $k$-mers ($k=31$) in $\D$; the number $|\D|$ of reads it contains; the maximum read length $\max_{n_i} = \max_i\{n_i | r_i \in \D\}$; the average read length avg$_{n_i} = \sum_{i=1}^{n} n_i/n$.}
	\label{tab:dataMOB}
	\centering
	\footnotesize{
		\begin{tabular}{|c|r|r|r|r|r|}
			\hline
			dataset & \multicolumn{1}{|c|}{$t_{\D,k}$}& \multicolumn{1}{|c|}{$|\D|$} & \multicolumn{1}{|c|}{$\max_{n_i}$} & avg$_{n_i}$ \\
			\hline
			\hline
			\texttt{B73}   & $9.92 \cdot 10^{10}$& $4.50 \cdot 10^8$ & 250 & 250 \\
			\hline
			\texttt{Mo17}   & $9.97 \cdot 10^{10}$& $4.45 \cdot 10^8$ & 250 & 250 \\
			\hline
			
			\hline
		\end{tabular}
	}
\end{table}

\end{document}